\documentclass{article}
\usepackage[algo2e, linesnumbered,ruled,boxed,vlined,resetcount]{algorithm2e}
\usepackage{microtype}
\usepackage{graphicx}
\usepackage{subcaption}
\usepackage{booktabs} 

\usepackage{hyperref}
\usepackage{enumitem}
\usepackage{xspace}

\usepackage[preprint]{icml2026}

\usepackage[T1]{fontenc}   
\usepackage{inconsolata}   

\usepackage{xfrac}
\usepackage{amsmath}
\usepackage{amssymb}
\usepackage{mathtools}
\usepackage{amsthm}

\usepackage{xcolor}
\usepackage{placeins} 

\SetCommentSty{mycommfont}

\SetKwComment{Comment}{// }{}
\SetKw{Break}{break}
\SetKw{Do}{do}
\SetKw{KElse}{else}
\SetKw{Otherwise}{otherwise}
\SetKw{With}{with}
\DontPrintSemicolon 
\SetKwIF
  {WithProb}        
  {ElseWithProbIf}
  {ElseWithProb}
  {with probability}            
  {do}
  {else if}
  {else}
  {endwith}
\usepackage[capitalize,noabbrev]{cleveref}

\theoremstyle{plain}
\newtheorem{theorem}{Theorem}[section]

\newtheorem{lemma}[theorem]{Lemma}

\theoremstyle{definition}
\newtheorem{definition}[theorem]{Definition}

\theoremstyle{remark}
\newtheorem{remark}[theorem]{Remark}
\newtheorem{example}{Example}[section]

\usepackage[textsize=tiny]{todonotes}

\icmltitlerunning{Differentially Private Submodular Maximization with a Knapsack Constraint}

\input{src/macros}
\begin{document}

\twocolumn[
  \icmltitle{Differentially Private Submodular Maximization with a Knapsack Constraint}
  \icmlsetsymbol{equal}{*}

  \begin{icmlauthorlist}
    \icmlauthor{Ron Zadicario}{TAU}
    \icmlauthor{Tova Milo}{TAU}
  \end{icmlauthorlist}

  \icmlaffiliation{TAU}{Tel Aviv University, Israel}

  \icmlcorrespondingauthor{Ron Zadicario}{ronzadicario@mail.tau.ac.il}

  \icmlkeywords{Differential Privacy, Submodular Maximization, Knapsack}

  \vskip 0.3in
]

\printAffiliationsAndNotice{}  

\begin{abstract}
Submodular maximization subject to a knapsack constraint (SMK) is a fundamental problem in discrete optimization, with wide-ranging applications in machine learning and related fields. As these applications increasingly involve sensitive individual data, there is a growing need for high-utility algorithms that provide formal privacy guarantees.
In this work, we study the SMK problem under \emph{differential privacy}, considering both monotone and non-monotone objective functions. For monotone objectives, we propose a differentially private algorithm that achieves the optimal $(1-1/e)$-approximation ratio while significantly improving both additive error and query complexity over prior work.  We also present a more efficient algorithm for the same setting, achieving a $1/2$-approximation. For non-monotone objectives, we introduce, to our knowledge, the first differentially private algorithm with provable guarantees, achieving a $1/4$-approximation in expectation and an additive error comparable to the best known for monotone objective functions.
\end{abstract}




\section{Introduction}
\label{sec:intro}

Constrained maximization of submodular functions is a cornerstone of discrete optimization. Submodularity formalizes the notion of \emph{diminishing returns}: an element’s marginal contribution to a set decreases as the set grows. This property is ubiquitous in theory and practice, driving widespread interest in submodular maximization across machine learning (ML), data science, and related fields. As the problem is NP-hard in general, significant effort has been devoted to designing efficient approximation algorithms. These have been applied to a variety of tasks, including  feature selection~\cite{krause2005near}, personalized recommendation~\cite{mirzasoleiman2016fast}, influence maximization~\cite{kempe2003maximizing}, exemplar-based clustering~\cite{gomes2010budgeted}, information gathering~\cite{krause2008near}, active learning~\cite{esfandiari2021adaptivity}, data subset selection~\cite{wei2015submodularity}, and interpretable rule learning~\cite{lakkaraju2016interpretable}. Many of these applications involve non-monotone objectives due to diversification or regularization penalties.

One of the earliest and most studied problem variants involves elements with associated costs, and feasible solutions are constrained by a budget. This constitutes the classical problem of submodular maximization subject to a knapsack constraint (SMK), which has attracted sustained attention for over four decades~\cite{wolsey1982maximising}.
 More recently, submodular maximization has been applied in settings involving sensitive individual data, where it is crucial that the optimization procedure simultaneously ensures rigorous privacy guarantees alongside strong utility. As an illustration, consider the following feature selection problem
\begin{example}\label{example}
A sensitive medical dataset
$\{(\mathbf{x}_i, y_i)\}_{i=1}^n$
associates each individual $i$ with a feature vector
$\mathbf{x}_i = (\mathbf{x}_i(1), \ldots, \mathbf{x}_i(m))$
and a binary label $y_i$ indicating the presence of a disease.
Each feature $j \in [m]$ (e.g., blood test result) incurs an acquisition cost $c_j$.
The goal is to select a subset of features $S \subseteq [m]$
that enables accurate prediction of $y$
while satisfying the budget constraint
$\sum_{j \in S} c_j \le B$.
Following \citet{krause2005near,mitrovic2017differentially},
a natural approach models this task as maximizing a submodular objective
that captures the mutual information between the selected features and the label.
Crucially, the feature selection process must preserve the privacy
of all individuals contributing to the dataset.
\end{example}
   
In this work, we study the monotone and non-monotone SMK problem under \emph{differential privacy}~\cite{dwork2006calibrating}, a rigorous notion that has become the standard for statistical analysis of sensitive data under strong privacy guarantees. Intuitively, the output distribution of a differentially private (DP) algorithm changes only slightly when a single individual’s data is modified, thereby ensuring that individual-level information is not disclosed.
DP submodular maximization has attracted increasing attention, most commonly in settings where the ground set $\mathcal N$ is public, while the submodular objective $f_D : 2^{\mathcal N} \to \mathbb{R}$ depends on a sensitive dataset $D$ in a problem-specific manner. The goal is to privately find a subset $S\subseteq \cN$ that approximately maximizes $f_D$ subject to given feasibility constraints. Most prior work has focused on cardinality constraints~\cite{gupta2010differentially,mitrovic2017differentially} or matroid constraints~\cite{rafiey2020fast,chaturvedi2021differentially}, while knapsack constraints have, to our knowledge, only been addressed by \citet{sadeghi2021differentially}.
 However, their work leaves room for improvement in several important aspects. 

First, their approach relies on exact evaluation of the \emph{multilinear extension} of the submodular function, 
which may require as many as $2^{|\cN|}$ value oracle queries to $f_D$\footnote{While the multilinear extension can be approximated via sampling in polynomial time, doing so would require a separate analysis beyond that in~\cite{sadeghi2021differentially}. Even with such approximations, combinatorial algorithms are typically faster than continuous relaxation-based approaches~\cite{feldman2023practical}.}.
Second, it is restricted to monotone functions, thereby excluding a variety of important applications. Third, the additive error in their utility guarantee exhibits a suboptimal dependence on both the ground set size and the element costs (see \Cref{tab:results_comparison} for a comparison with our results).

Motivated by these limitations, we propose DP algorithms for monotone and non-monotone SMK that bridge existing gaps in utility and query complexity. The privatization of combinatorial algorithms for SMK gives rise to challenges absent in matroid or cardinality constraint settings, requiring different techniques.
 These challenges, along with the formal statements of our results, are detailed in \Cref{section:challenges_results}. Our  contributions are summarized as follows:

\paratitle{Monotone}  
We present a DP algorithm for monotone SMK that  
\textit{(i)} achieves the optimal $(1-1/e)$-approximation,  
\textit{(ii)} has a query complexity comparable to that of the currently most practical non-private $(1-1/e)$-approximation algorithm, and in particular polynomial for any submodular objective,
\textit{(iii)} achieves an additive error that improves upon prior work in two ways: it has polylogarithmic dependence on the ground set size, rather than polynomial, and scales with the maximum size of a feasible solution, rather than the minimum element cost. Row (i) of \Cref{tab:results_comparison} summarizes this result.  
We further introduce a faster algorithm achieving $1/2$-approximation with improved additive error,
as shown in row (ii) of \Cref{tab:results_comparison}.

\paratitle{Non-monotone}  
We present the first DP algorithm for non-monotone SMK with provable guarantees. Our algorithm achieves a $1/4$-approximation in expectation, matching the state-of-the-art among combinatorial non-private algorithms. Furthermore, it incurs an additive error comparable to that of the monotone case while maintaining a practical query complexity. This result is summarized in row (iii) of \Cref{tab:results_comparison}.

\begin{table*}
\caption{Guarantees of $(\eps,\delta)$-DP algorithms for SMK with a $\Delta$-sensitive objective function. Constants are omitted. Here~$n$ denotes the size of the ground set, $k$ denotes the maximum cardinality of a feasible solution,  $L=\max_u f({u})$, and  $\beta$ is the failure probability. The budget is normalized to $B=1$, which implies $1/c_{\min}\ge k$.  \textsuperscript{$\dagger$} This bound is derived from Lemma 4 in \cite{sadeghi2021differentially}. \textsuperscript{$\ddagger$}  Their analysis relies on exact evaluation of the multilinear extension, which, in the value-oracle model, requires $2^n$ queries in general. }
\centering
{\footnotesize
\renewcommand{\arraystretch}{1.8}
\setlength{\tabcolsep}{6pt}
\begin{tabular}{@{}l l c l c l@{}}
\toprule
\textbf{Monotone} & \textbf{Approximation} & \textbf{Additive Error} & \textbf{ Query Complexity} & \textbf{Algorithm} & \\
\midrule
Yes & $1-1/e$ & $\tfrac{\Delta k^{1.5}}{\eps}
\sqrt{\log\tfrac{n}{\delta\beta}}
 \log\tfrac{n}{\beta}
$ & $O(\beta^{-1}n^3k)$ & \cref{alg:two_guess_greedy} {\scriptsize  \textbf{(Ours)}} & (i) \\

Yes & $1/2$ & $\tfrac{\Delta k^{1.5}}{\varepsilon} \sqrt{\log \tfrac{1}{\delta}} \,\log \tfrac{n}{\beta}$ & $O(nk)$ & \Cref{algorithm:greedyplus} {\scriptsize  \textbf{(Ours)}} & (ii)\\

No & $1/4~$ {\scriptsize (Expected)} & $\tfrac{\Delta k}{\varepsilon} \sqrt{(k + \log \tfrac{1}{\delta}) \log \tfrac{1}{\delta}} \,\log \tfrac{n}{\beta}$ & $O(nk)$ & \Cref{algorithm:subsample_greedy} {\scriptsize  \textbf{(Ours)}} & (iii) \\

\midrule
Yes & $1-1/e$ & $\tfrac{\Delta}{c_{\min}\eps}\sqrt{\tfrac{1}{\beta}\log\tfrac{1}{\delta}} \,n\log\tfrac{n}{\beta}+\tfrac{\beta L }{c_{\min}^2}$\textsuperscript{$\dagger$}
 & ${O(2^n)}^\ddagger$ & \cite{sadeghi2021differentially}  & (iv)  \\
\bottomrule
\end{tabular}
}
\label{tab:results_comparison}
\end{table*}

\section{Related Work}
\label{sec:related}
In this section, we survey related work on non-private SMK, DP submodular maximization, and lower bounds. Throughout, $k$ denotes the maximum size of a feasible solution, and $n$ the ground set size.

\paratitle{Non-Private Monotone SMK}
For cardinality-constrained submodular maximization, the classical greedy algorithm achieves a tight $(1-1/e)$-approximation \cite{nemhauser1978analysis}. Its SMK extension, Density-Greedy \cite{wolsey1982maximising}, selects elements of largest \emph{density}, i.e., the ratio between marginal contribution and cost. While it lacks a constant-factor approximation~\cite{feldman2009private}, several variants obtain provable but suboptimal guarantees~\cite{wolsey1982maximising, khuller1999budgeted}. \citet{sviridenko2004note} first achieved the optimal~$(1-1/e)$-approximation by guessing three elements from the optimal solution and running Density-Greedy on the residual instance. This was later improved to two elements by \citet{feldman2023practical}. The resulting \texttt{Two-Guess-Greedy} algorithm remains the most practical tight-approximation method, yet still requires enumerating all $O(n^2)$ pairs. Recent works have obtained a $(1/2-\varepsilon)$-approximation with near-linear query complexity~\cite{yaroslavtsev2020bring, han2021approximation}, while \citet{li2022submodular} achieved the same guarantee using $O_{\varepsilon}(n)$ queries.

\paratitle{Non-Private Non-Monotone SMK}
The best-known approximation for non-monotone SMK is $0.385$, obtained by combining the continuous optimization framework of \citet{buchbinder2019constrained} with the rounding scheme of \citet{kulik2013approximations}. This approach requires computationally expensive operations such as optimizing the multilinear extension, which motivates faster combinatorial algorithms. \citet{mirzasoleiman2016fast} obtained a deterministic $1/10$-approximation, followed by a deterministic $1/6$-approximation by \citet{han2021approximation}. \citet{amanatidis2020fast} introduced a randomized density-greedy variant where each selected element is discarded with some probability, achieving a $1/(3+2\sqrt{2})$-approximation in expectation. This was later improved by \citet{han2021approximation} to $1/4$, which remains the best practical guarantee.

\paratitle{Differentially Private Submodular Maximization}  
DP submodular maximization has been studied in two main settings. The first line of work focuses on \emph{$\Delta$-decomposable} objective functions~\cite{gupta2010differentially, chaturvedi2021differentially,chaturvedi2023streaming,ghazi2024individualized}. These take the form $f(S)=\sum_x f_x(S)$, where each individual $x$ contributes a submodular function $f_x:2^\cN\to[0,\Delta]$.
However, many applications involve submodular functions with more general dependence on the sensitive dataset (e.g., the objective in \Cref{example}). A second line of work therefore considers the more general setting of functions with \emph{bounded sensitivity} $\Delta$, which we adopt in this work.

As a simple baseline, applying the exponential mechanism~\cite{mcsherry2007mechanism} to select among all feasible subsets yields a $1$-approximation with additive error $O(\Delta k \log n / \eps)$. However, it incurs an exponential query complexity.
\citet{mitrovic2017differentially} proposed efficient algorithms for monotone objectives under cardinality constraints, achieving a $(1-1/e)$-approximation with $\Ot(\Delta k^{1.5}/\eps)$ additive error.\footnote{$\Ot$ hides polylogarithmic factors in the problem parameters.} For non-monotone objectives, they obtained a  $(1-1/e)/e$-approximation in expectation with the same error, and further extended their results to $p$-extendable system constraints. Other notable works include \citet{rafiey2020fast} on matroid constraints, \citet{sadeghi2021differentially} on bounded-curvature objectives, and \citet{chaturvedi2023streaming} on streaming algorithms. In contrast, SMK has been addressed only by the aforementioned work of \citet{sadeghi2021differentially}, who among other results, gave an $(\eps,\delta)$-DP algorithm for monotone functions achieving the guarantees stated in row (iv) of \Cref{tab:results_comparison}.

\paratitle{Lower Bounds}
Existing lower bounds on the expected additive error for DP submodular maximization under a cardinality constraint $k$ are $\Omega(k \log(n/k)/\eps)$ for $\eps$-DP algorithms~\cite{gupta2010differentially}, and $\Omega(k c \log(\eps/\delta)/\eps)$ for $(\eps,\delta)$-DP, $c$-approximation algorithms (assuming $n \ge k (e^\eps - 1)/\delta$ and $c \ge 4\delta/(e^\eps - 1)$)~\cite{chaturvedi2023streaming}. As cardinality is a special case of a knapsack constraint, these bounds extend to our setting. While our additive error exceeds them by $\tilde{O}(\sqrt{k})$, these lower bounds are derived from $1$-decomposable objectives.
In contrast, we consider the broader class of bounded-sensitivity submodular functions, and the dependence on $k$ in our bounds matches the best known for polynomial-time algorithms in the simpler cardinality-constrained setting.

\section{Preliminaries}
\label{sec:prelim}
In this section, we introduce notation, submodular functions, and key differential privacy concepts used throughout the paper. Additional details on DP generalized selection and omitted pseudo-codes are provided in \Cref{appendix:private_selection}.

\paratitle{Differential Privacy} 
A dataset is a tuple $D=(x_1,\dots,x_m)\in \cX^m$ of items from domain $\cX$ where $x_i$ is individual $i$'s data. Two datasets $D$ and $D'$ are called \emph{neighboring} ($D\sim D'$) if they differ in at most one entry. Differential privacy ensures that the output distribution of an algorithm changes minimally when one record is modified:
\begin{definition}[Differential Privacy,~\cite{dwork2006differential}]
A randomized algorithm $\cA$ is $(\eps,\delta)$-differentially private (DP) if for any $D \sim D'$ and $S \subseteq \mathrm{Range}(\cA)$,
\[
\Pr[\cA(D)\in S] \le e^\eps \Pr[\cA(D')\in S] + \delta.
\]
If $\delta=0$, we say $\cA$ is $\eps$-DP.
\end{definition}

Our privacy analysis uses the following standard composition results. 
\begin{theorem}[\citet{dwork2010boosting}]
\label{thm:composition}
Let $\cA_1,\dots,\cA_k$ be $\eps_0$-DP algorithms. Their $k$-fold adaptive composition $\cA_{[k]}$, which outputs $y_i=\cA_i(D, y_1,\dots,y_{i-1})$ for $i=1,\dots,k$, is: 
\textit{(i)}  $k\eps_0$-DP (Basic Composition), and \textit{(ii)}$(\sqrt{2k\log(1/\delta)}\eps_0 + k\eps_0(e^{\eps_0}-1),\, \delta)$-DP for any $\delta>0$ (Advanced composition).
To achieve $(\eps, \delta)$-DP for $\eps\le 1$, it suffices that each $\cA_i$ is $\frac{\eps}{2\sqrt{2k\log(1/\delta)}}$-DP. 
\end{theorem}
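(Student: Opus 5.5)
Parts (i) and (ii) are the standard composition theorems of \citet{dwork2010boosting}, which we take as given. The only claim that needs an argument is the final sentence. It asks us to pick $\eps_0$ so that the advanced-composition bound is at most $\eps$ whenever $\eps\le 1$. The plan is to set $\eps_0=\frac{\eps}{2\sqrt{2k\log(1/\delta)}}$, substitute into (ii), and show that each of the two terms is at most $\eps/2$.

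The first term is immediate:
\[
\sqrt{2k\log(1/\delta)}\,\eps_0=\eps/2 .
\]

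For the second term we need $k\eps_0(e^{\eps_0}-1)\le \eps/2$. We first note that $\eps_0\le 1$; this holds in the regime of interest, for instance whenever $2k\log(1/\delta)\ge 1/4$. On $[0,1]$ convexity of the exponential gives $e^{x}-1\le (e-1)x\le 2x$, so
\[
k\eps_0(e^{\eps_0}-1)\le 2k\eps_0^2=2k\cdot\frac{\eps^2}{8k\log(1/\delta)}=\frac{\eps^2}{4\log(1/\delta)} .
\]
Using $\eps\le 1$, the right-hand side is at most $\eps/(4\log(1/\delta))$. This is at most $\eps/2$ as soon as $\log(1/\delta)\ge 1/2$, i.e.\ $\delta\le e^{-1/2}$, which is the usual small-$\delta$ regime. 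Adding the two terms bounds the total privacy loss by $\eps$, so the composition is $(\eps,\delta)$-DP.

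The main obstacle is the bookkeeping of these mild side conditions. The sufficient condition as stated implicitly assumes $\delta$ is small enough that $\log(1/\delta)\ge 1/2$. I would make this assumption explicit. If it fails, one can fall back on basic composition (i), since for large $\delta$ the $k\eps_0$ bound is already at most $\eps$ up to constants.
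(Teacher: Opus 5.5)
The paper gives no proof of this statement. It is quoted from \citet{dwork2010boosting}, and its final sentence is the standard corollary (as in \citet{dwork2014algorithmic}), whose derivation is the same as yours. The first term of (ii) equals $\eps/2$ by the choice of $\eps_0$. The bound $e^{x}-1\le 2x$ on $[0,1]$ then gives a second term of at most $\eps^2/(4\log(1/\delta))\le \eps/2$. Your arithmetic is correct. Both of your side conditions follow from the single hypothesis $\log(1/\delta)\ge 1/2$: since $k\ge 1$, it gives $\eps_0\le \eps/2\le 1/2$.

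Drop the closing fallback remark, though, because it is wrong. If $\log(1/\delta)<1/2$, then $k\eps_0=\eps\sqrt{k}/(2\sqrt{2\log(1/\delta)})>\eps\sqrt{k}/2$. This is not $O(\eps)$, and it diverges as $\delta\to 1$. In fact no argument can remove the hypothesis, because the sentence is false for $\delta$ close to $1$. Take $k=1$, $\eps=1$, and one-bit randomized response that reports the true bit with probability $e^{\eps_0}/(1+e^{\eps_0})$. This mechanism is $\eps_0$-DP. Testing the event ``output $=1$'' shows it is $(1,\delta)$-DP only if $(1+e)/(1+e^{\eps_0})\ge 1-\delta$. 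With $\log(1/\delta)=1/800$ the prescribed $\eps_0$ is $10$, so the left side is about $1.7\cdot 10^{-4}$ while $1-\delta\approx 1.25\cdot 10^{-3}$. The small-$\delta$ condition should therefore be stated as part of the claim, where it is harmless for the small $\delta$ used in practice, rather than treated as an artifact that basic composition can patch.
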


The \emph{sensitivity} of a function $q:\cX^m\to \R$ measures the maximum change in its value across neighboring datasets:
\[
\Delta_q = \max_{D\sim D'} |q(D)-q(D')|.
\]
A canonical DP mechanism for privately answering numerical queries is the Laplace mechanism:
\begin{theorem}[Laplace mechanism~\cite{dwork2014algorithmic}]\label{thm:laplace_guarantees}
Suppose $q:\cX^m \to \R$ has sensitivity $\Delta_q$. The mechanism returning
$q(D) + Y$ with $Y\sim \mathrm{Lap}(\Delta_q/\eps)$ is $\eps$-DP.
\end{theorem}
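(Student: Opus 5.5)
The plan is to compare the output densities of the mechanism on two neighboring datasets $D\sim D'$ pointwise. This gives $\eps$-DP directly from the definition. Write $M(D)=q(D)+Y$ with $Y\sim\mathrm{Lap}(b)$ and $b=\Delta_q/\eps$. Then $M(D)$ has density
\[
p_D(z)=\frac{1}{2b}\exp\!\left(-\frac{|z-q(D)|}{b}\right)
\]
with respect to Lebesgue measure on $\R$, and $p_{D'}$ is the same expression with $q(D')$ in place of $q(D)$.

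The key step is a pointwise ratio bound. For any $z\in\R$, the normalizing constants $1/(2b)$ cancel, leaving
\[
\frac{p_D(z)}{p_{D'}(z)}=\exp\!\left(\frac{|z-q(D')|-|z-q(D)|}{b}\right).
\]
By the reverse triangle inequality, $|z-q(D')|-|z-q(D)|\le |q(D)-q(D')|$. By the definition of sensitivity, $|q(D)-q(D')|\le\Delta_q$. Since $b=\Delta_q/\eps$, the exponent is at most $\Delta_q/b=\eps$, so $p_D(z)\le e^{\eps}p_{D'}(z)$ for every $z$.

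The final step is to integrate this bound over an arbitrary measurable $S\subseteq\R$:
\[
\Pr[M(D)\in S]=\int_S p_D(z)\,dz\le e^{\eps}\int_S p_{D'}(z)\,dz=e^{\eps}\Pr[M(D')\in S].
\]
This is the DP inequality with $\delta=0$, as required.

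There is no substantive obstacle here. The only care needed is the degenerate case $\Delta_q=0$: then $q$ is constant across neighbors, so the two output distributions coincide and the claim is trivial (one can take $Y=0$ or any noise). Apart from that, the argument only requires the range events to be measurable, which is implicit in the definition.
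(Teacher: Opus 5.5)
Your proof is correct: the pointwise density-ratio bound via the reverse triangle inequality, followed by integration over an arbitrary measurable event, is the standard argument from Dwork and Roth, whom the paper cites for this theorem without giving a proof of its own. Your separate treatment of the degenerate case $\Delta_q=0$ is a harmless extra detail.
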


\paratitle{Private Selection}  
Selecting an element $u$ from a finite set $\cC$ to approximately maximize a quality score $q(u;D)$ is a fundamental task in private data analysis.
The \emph{Report Noisy Max} mechanism with exponential noise\footnote{The PDF of $\mathrm{Exp}(\lambda)$ is $f(x; \lambda) = \lambda e^{-\lambda x}$ for $x \ge 0$.}~\cite{dwork2014algorithmic}, denoted $\RNM{\eps}{\cC,q}{D}$, is a standard approach for solving this task accurately and efficiently.
This mechanism outputs $u^* \in \arg\max_{u\in\cC} \{q(u;D) + Y_u\}$, where $Y_u \sim \mathrm{Exp}(\frac{\eps}{2\Delta})$ and $\Delta = \max_{u\in\cC} \Delta_{q(u;\cdot)}$ is the maximal sensitivity of $q$ across all candidates.

\begin{theorem}[\citealp{dwork2014algorithmic}]\label{thm:RNM_main}
$\RNM{\eps}{\cC,q}{D}$ is $\eps$-DP. Moreover, with probability $1-\beta$, the output $u^*$ satisfies $q(u^*;D) \ge \max_{u\in\cC} q(u;D) - (2\Delta/\eps)\log(|\cC|/\beta)$.
\end{theorem}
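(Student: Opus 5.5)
We prove the two claims of the statement separately: first privacy, by a pointwise comparison of the noisy scores on neighboring datasets, then utility, by a tail bound on the exponential noise combined with a union bound over the candidates.

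\textbf{Privacy.} Fix neighboring datasets $D\sim D'$ and a candidate $u\in\cC$, and write $\lambda=\eps/(2\Delta)$. We bound $\Pr[u^*=u]$ on $D$ by $e^{\eps}$ times the same probability on $D'$.

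Condition on the noise values $Y_{-u}=(Y_v)_{v\neq u}$ of all other candidates. The mechanism outputs $u$ on $D$ exactly when $Y_u\ge \tau(D):=\max_{v\ne u}\{q(v;D)+Y_v\}-q(u;D)$; ties occur with probability zero. Since every score has sensitivity at most $\Delta$, we get $\tau(D)\le \tau(D')+2\Delta$.

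The exponential distribution has tail $\Pr[Y_u\ge t]=\min(1,e^{-\lambda t})$. This tail is non-increasing in $t$, and shifting $t$ by $2\Delta$ changes it by a factor of at most $e^{2\lambda\Delta}=e^{\eps}$. The case $t\le 0$, where the tail equals $1$, is handled by the $\min$. Hence
\[
\Pr[Y_u\ge \tau(D')]\le e^{\eps}\Pr[Y_u\ge \tau(D)].
\]
Integrating over $Y_{-u}$ gives $\Pr[u^*=u\mid D']\le e^{\eps}\Pr[u^*=u\mid D]$. Summing over the candidates in any output set $S$ establishes $\eps$-DP with $\delta=0$.

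\textbf{Utility.} Let $u^\circ$ be a maximizer of $q(\cdot;D)$. Since $u^*$ maximizes the noisy score,
\[
q(u^*;D)+Y_{u^*}\ge q(u^\circ;D)+Y_{u^\circ}\ge q(u^\circ;D),
\]
where the last step uses $Y\ge 0$. It follows that $q(u^*;D)\ge \max_u q(u;D)-\max_v Y_v$.

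For each $v$, $\Pr[Y_v> t]=e^{-\lambda t}$. Setting $t=\lambda^{-1}\log(|\cC|/\beta)=(2\Delta/\eps)\log(|\cC|/\beta)$ and taking a union bound over the $|\cC|$ candidates gives $\Pr[\max_v Y_v>t]\le\beta$. This yields the stated bound with probability at least $1-\beta$.

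The only delicate step is the privacy bound. The one-sided exponential noise makes the shift argument asymmetric, and the region where the tail equals $1$ needs the $\min$ handling noted above; the rest is routine.
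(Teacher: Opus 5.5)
Your proof is correct. The paper does not prove this statement itself. It imports it from Dwork and Roth, and the appendix restatement (\Cref{thm:RNM_formal}) adds only a remark that exponential-noise Report Noisy Max coincides with Permute-and-Flip.

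Your utility half is the same argument the paper runs in its proof of \Cref{thm:GRNM_formal} and in \Cref{lemma:good_event}(ii). All noises are bounded by $t=\frac{2\Delta}{\eps}\log\frac{|\cC|}{\beta}$ via the exponential tail and a union bound, and nonnegativity of the noise then compares the selected candidate with the true maximizer.

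Your privacy half is a self-contained replacement for the citation. Conditioning on $(Y_v)_{v\neq u}$, the acceptance threshold for $u$ moves by at most $2\Delta$. The one-sided tail $G(t)=\min(1,e^{-\lambda t})$ satisfies $G(t)\le e^{2\lambda\Delta}G(t+2\Delta)$ for every real $t$. Your sentence about the $\min$ glosses over one case, $t<0<t+2\Delta$, which deserves a line: there $G(t+2\Delta)=e^{-\lambda t}e^{-2\lambda\Delta}\ge e^{-2\lambda\Delta}=e^{-2\lambda\Delta}G(t)$. Integrating over $Y_{-u}$ is valid because the noise law depends only on $\eps$ and the data-independent $\Delta$.

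What your route buys is transparency about where the factor $2$ in the rate $\eps/(2\Delta)$ is spent: one $\Delta$ for $u$'s own score and one for the competitors' maximum. As a result, no monotonicity of the scores is needed. This differs from the Dwork--Roth analysis of Report Noisy Max with Laplace noise for counting queries, which exploits monotonicity of counts.

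The citation route via Permute-and-Flip yields more, such as utility dominance over the exponential mechanism. Nothing beyond your argument is needed for how the theorem is used in the paper.
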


The error of \RNMT scales with the maximal candidate sensitivity, which is insufficient in our setting (see \Cref{section:challenges_results}). To achieve improved utility, we leverage the generalized mechanism of~\citet{raskhodnikova2015efficient}, denoted $\GenRNM{\eps}{\cC,q,\beta}{D}$. Given a failure probability $\beta$, it applies \RNMT to carefully constructed auxiliary scores that reduce sensitivity while appropriately penalizing high-sensitivity candidates. Its formal definition is given in \Cref{appendix:private_selection}.

\begin{theorem}[\citealp{raskhodnikova2015efficient}]\label{thm:genRNM_guarantee}
$\GenRNM{\eps}{\cC,q,\beta}{D}$ is $\eps$-DP. Moreover, With probability $1-\beta$, the output $u^*$ satisfies $q(u^*;D) \ge \max_{u\in\cC} \{q(u;D) - (2\Delta_{q(u;\cdot)}/\eps)\log(|\cC|/\beta)\}$.
\end{theorem}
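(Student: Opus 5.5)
The plan is to construct auxiliary scores that all have sensitivity $1$, run \RNMT on them, and then translate the accuracy guarantee of \Cref{thm:RNM_main} back to $q$. I assume, as in \citet{raskhodnikova2015efficient}, that the per-candidate sensitivity bounds $\Delta_u := \Delta_{q(u;\cdot)}$ are public, i.e.\ they do not depend on $D$. Set $t=(2/\eps)\log(|\cC|/\beta)$ and define the penalized score $p_u(D)=q(u;D)-t\Delta_u$. The auxiliary score is
\[
s_u(D)=\min_{v\in\cC}\frac{p_u(D)-p_v(D)}{\Delta_u+\Delta_v}.
\]
The mechanism $\GenRNM{\eps}{\cC,q,\beta}{D}$ then outputs $\RNM{\eps}{\cC,s}{D}$.

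\textbf{Privacy.} Fix $u,v$. Under a change of one record, the numerator $p_u-p_v$ moves by at most $\Delta_u+\Delta_v$, because the penalties are data-independent. Hence each ratio has sensitivity at most $1$. A minimum of $1$-sensitive functions is $1$-sensitive, so every $s_u$ has sensitivity at most $1$. \Cref{thm:RNM_main}, applied with maximal sensitivity $1$, then gives $\eps$-DP immediately. (If some $\Delta_u=0$, we handle it by a limiting argument or a tiny floor on each $\Delta_u$.)

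\textbf{Utility.} Let $u^\star\in\arg\max_u p_u(D)$. Every term in the minimum defining $s_{u^\star}$ is nonnegative, so $\max_u s_u(D)\ge 0$. By \Cref{thm:RNM_main}, with probability $1-\beta$ the output $w$ satisfies $s_w(D)\ge -(2/\eps)\log(|\cC|/\beta)=-t$. Instantiating the minimum at $v=u^\star$ gives
\[
p_w-p_{u^\star}\ge -t(\Delta_w+\Delta_{u^\star}).
\]
Unpacking $p_w$, the $t\Delta_w$ terms cancel, and we obtain
\[
q(w;D)\ge q(u^\star;D)-2t\Delta_{u^\star}.
\]

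\textbf{Main obstacle.} The naive bound above has a constant factor $2$ worse than the one claimed in the theorem. To fix this, I would first rebalance the penalty and the tail threshold separately. Using penalty $t_1\Delta_u$ with $t_1$ different from the RNM threshold $t$, the same computation gives
\[
q(w)\ge q(u^\star)-(t_1+t)\Delta_{u^\star}+(t_1-t)\Delta_w.
\]
So $t_1=t$ is forced if we want no dependence on $\Delta_w$. The remaining slack must therefore come from the exact scaling chosen in the appendix definition. One option is exponential noise calibrated as $\mathrm{Exp}(\eps/\Delta)$ rather than $\mathrm{Exp}(\eps/(2\Delta))$, which is valid here because of one-sidedness: every $s_u$ moves in a correlated way across neighbors. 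The other option is to define $t$ as half the RNM threshold. I would check which normalization the appendix adopts and adjust $t$ so that the final bound reads $\max_u\{q(u;D)-(2\Delta_u/\eps)\log(|\cC|/\beta)\}$. At worst, the statement holds up to this absolute constant.
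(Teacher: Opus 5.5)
Your construction and argument are the paper's own proof (\Cref{thm:GRNM_formal}). The paper uses the same auxiliary scores $q'(u;D)=\min_{v}\frac{(q(u;D)-t\Delta_u)-(q(v;D)-t\Delta_v)}{\Delta_u+\Delta_v}$ and the same sensitivity-$1$ privacy argument. It also observes that the maximizer of the penalized score has $q'\ge 0$, hence $q'(w;D)\ge -t$ for the output $w$, and unpacks the minimum the same way. What this yields is $q(w;D)\ge q(u;D)-2t\Delta_u=q(u;D)-\frac{4\Delta_u}{\varepsilon}\log\frac{|\mathcal{C}|}{\beta}$. That is exactly the constant the paper's formal version states and proves, and the appendix uses this form downstream (e.g.\ the $8\Delta_f/(c(u)\varepsilon_0)$ error terms). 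The paper never obtains the constant $2$ of the main-text statement.

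So your ``main obstacle'' is not something the paper overcomes, and for this mechanism it cannot be overcome. Take $\mathcal{C}=\{a,b\}$, $\Delta_a=\Delta_b=1$ and $q(b;D)=q(a;D)+g$ with $g>0$. Then $q'(b;D)=0$ and $q'(a;D)=-g/2$, and $a$ is output with probability $\Pr[Z_a-Z_b>g/2]=\tfrac12 e^{-\varepsilon g/4}$. Taking $g$ just above $\frac{2}{\varepsilon}\log\frac{2}{\beta}$, the factor-$2$ bound is violated exactly when $a$ is output, which happens with probability about $\tfrac12\sqrt{\beta/2}>\beta$ whenever $\beta<1/8$. Your closing remark, that the statement holds up to this absolute constant, is therefore the right conclusion.

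The two repairs you float should be discarded.
\begin{itemize}
\item Calibrating the exponential noise at rate $\varepsilon/\Delta$ is $\varepsilon$-DP only for scores that all move in the same direction across neighbors. The scores $q'(u;\cdot)$ do not: one record change can raise $q(u;D)$ and lower $q(v;D)$ (think histogram counts), pushing $q'(u;\cdot)$ up and $q'(v;\cdot)$ down. So that calibration only gives $2\varepsilon$-DP.
\item Halving the penalty is ruled out by your own computation. With $t_1=t/2$ you get $q(w;D)\ge q(u;D)-\tfrac{3t}{2}\Delta_u-\tfrac{t}{2}\Delta_w$, which keeps the dependence on $\Delta_w$.
\end{itemize}

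One small fix in the utility step: instantiate the minimum at an arbitrary $v=u\in\mathcal{C}$, not only at $u^\star$. Since $u^\star$ maximizes $q(u;D)-t\Delta_u$ rather than $q(u;D)-2t\Delta_u$, the bound against $u^\star$ alone does not imply the bound against every $u$. The computation is identical for any $u$, and $u^\star$ is needed only to show $\max_u q'(u;D)\ge 0$; this is how the paper proceeds.
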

We further utilize the generalized selection framework of \citet{cohen2023generalized} to select the best output from a  collection of private subroutines. This framework provides tighter privacy guarantees than the composition theorems, as the privacy loss is independent of the total number of subroutines. See \Cref{appendix:private_selection} for details.

\paratitle{Submodular Functions}
Let $\cN$ be a ground set of size $n$. For $f:2^\cN \to \R$, the marginal contribution of $u$ to $S$ is $f(u\mid S) \eqdef f(S \cup \{u\}) - f(S)$. $f$ is \emph{monotone} if $f(S)\le f(T)$ for all $S\subseteq T$, \emph{non-negative} if $f(S)\ge 0$ for all $S$, and \emph{submodular} if $f(u\mid S) \ge f(u\mid T)$ for all $S\subseteq T$ and $u\notin T$. In the DP setting, $f_D$ depends on sensitive dataset $D$, and its  maximal sensitivity is given by
\[
\Delta_f = \max_{S\subseteq\cN, D\sim D'} |f_D(S)-f_{D'}(S)|.
\]
For brevity, we omit the dataset subscript $D$ where it is clear from context.

\paratitle{Submodular Maximization Subject to a Knapsack Constraint (SMK)}
Given a submodular $f:2^{\cN}\to\R$, a positive cost function $c:\cN \to \R_{>0}$, and budget $B>0$, SMK aims to maximize $f(S)$ subject to the feasibility constraint $c(S) \eqdef \sum_{v\in S} c(v) \le B$. We let $k$ denote the maximal cardinality of a feasible set and $c_{\min}\eqdef \min_{v\in\cN} c(v)$. The \emph{density} of $u\in\cN$ w.r.t. $S\subseteq \cN\setminus\{u\}$ is given by $f(u\mid S)/c(u)$. Following standard practice, we assume $c(v) \le B$ for all $v \in \cN$ and $B = 1$, as elements exceeding the budget are never feasible and our algorithms are invariant under scaling of the costs.
We assume access to $f$ via a \emph{value oracle} that returns $f(S)$ for a given $S$. The total number of oracle queries performed by an algorithm serves as a proxy for its time complexity.

\section{Challenges and Results}\label{section:challenges_results} 
In this section, we outline the technical challenges of DP SMK, describe our proposed resolutions, and state our main results. 

\subsection{Monotone Objective Functions}\label{section:challenges_results_monotone}
To obtain a combinatorial algorithm with improved utility, we build on the \texttt{Two-Guess-Greedy} algorithm of \cite{feldman2023practical}. Yet, its privatization is more involved than that of the standard greedy algorithm for cardinality constraints. 

\paratitle{Scores with Varying Sensitivity}
In the cardinality-constrained setting, differential privacy is typically achieved by replacing the greedy selection step with the Exponential Mechanism \cite{mcsherry2007mechanism}, a technique that yields strong guarantees in that context \cite{gupta2010differentially, mitrovic2017differentially}. However, this direct substitution is suboptimal for knapsack constraints, as density scores exhibit varying sensitivities that depend on element costs.
A naive application of standard selection mechanisms would incur an additive error proportional to $\Delta_f \cdot B/c_{\min}$, 
a uniform sensitivity bound for all density scores. Since $B/c_{\min} \ge k$, this analysis would introduce a higher-order dependence on $k$, the maximal size of a feasible solution.

To ensure the additive error is independent of the worst-case sensitivity of a density score, we leverage the selection mechanism of \citet{raskhodnikova2015efficient}, which provides fine-grained utility bounds with respect to individual candidate sensitivities. Using this guarantee, we show that our algorithm attains a smaller additive error proportional to $\Delta_f$. This dependence matches the cardinality-constrained setting and avoids looser worst-case bounds, despite the higher sensitivity of density scores.

\paratitle{Composition Bottleneck} 
Another challenge stems from the $O(n^2)$ iterations required by \texttt{Two-Guess-Greedy}, which pose a formidable composition bottleneck. A standard analysis based on advanced composition \cite{dwork2014algorithmic} inevitably yields an error that scales polynomially with $n$. Hence, bridging the gap to the polylogarithmic dependence achieved in state-of-the-art results for matroid and cardinality constraints requires a different approach. 

We observe that the highly parallelizable structure of \texttt{Two-Guess-Greedy} is advantageous not only for efficiency, but also for privacy. Specifically, the algorithm aligns naturally with a selection framework where the best output is chosen from a collection of DP subroutines. In our case, these subroutines are privatized adaptations of the density-greedy algorithm. Based on the generalized selection framework of \citet{cohen2023generalized}, we randomize the number of times each enumeration step is invoked. This allows us to bypass the standard composition penalty and  obtain an additive error that is polylogarithmic in $n$, a significant improvement over prior work.

\paratitle{Main Results}
Our approach is formalized in \Cref{alg:two_guess_greedy}, for which we establish the guarantee stated in \Cref{thm:two_guess_greedy_intro}. We provide the algorithm and an overview of the analysis in \Cref{sec:monotone}, with the full proofs deferred to \Cref{appendix:monotone}.
\begin{theorem}\label{thm:two_guess_greedy_intro}
Let $f_D:2^\cN\to\R_{+}$ be a monotone submodular function with sensitivity $\Delta$.  
Then, for any $\eps,\beta\in(0,1)$ and $\delta>0$, there exists an $(\eps,\delta)$-DP algorithm for SMK that returns a feasible set $S\subseteq\cN$ such that, with probability at least $1-\beta$,
\[
f(S)\ge (1-\tfrac{1}{e})\cdot f(\OPT)
-O\Big(
\tfrac{\Delta k^{1.5}}{\eps}
\sqrt{\log\tfrac{n}{\delta\beta}}
 \log\tfrac{n}{\beta}
\Big).
\]
Moreover, the algorithm makes $O(\beta^{-1}n^3k)$ oracle queries.
\end{theorem}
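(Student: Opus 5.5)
We will privatize \texttt{Two-Guess-Greedy} of \citet{feldman2023practical}. The non-private algorithm enumerates all $O(n^2)$ guesses of a pair $G=\{u_1,u_2\}$, intended as the two highest-value elements of $\OPT$. For each guess it runs Density-Greedy on the residual instance. That instance contains the elements $v\notin G$ with $f(v\mid G)$ small relative to the guesses, has budget $1-c(G)$, and has objective $f(\cdot\mid G)$. The algorithm returns the best set found.

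\textbf{Private subroutine.} For a fixed guess $G$, the subroutine $\cA_G$ runs at most $k$ greedy steps. At step $i$, with current set $S_{i-1}$, each affordable candidate $u$ receives the density score
\[
q(u;D)=\frac{f(u\mid G\cup S_{i-1})}{c(u)},
\]
whose sensitivity is $\Delta_{q(u;\cdot)}\le 2\Delta/c(u)$. Each step selects via $\GenRNM{\eps_0}{\cC,q,\beta'}{D}$ with $\eps_0=\Theta(\eps/\sqrt{k\log(1/\delta')})$. By \Cref{thm:composition}, each $\cA_G$ is then $(\eps_1,\delta')$-DP for a constant-fraction $\eps_1$ of $\eps$. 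The guess-dependent filtering (discarding elements more valuable than the guesses) is data-dependent. To avoid paying privacy for it, we impose it through the candidate set using publicly computable cost information only, and fold the value-based filter into the analysis. Failing that, we add a Laplace-noised threshold test via \Cref{thm:laplace_guarantees}.

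\textbf{Avoiding composition over the $n^2$ guesses.} Instead of running all $\cA_G$, we use the generalized selection framework of \citet{cohen2023generalized}. We draw a random number of invocations (geometric/truncated-negative-binomial with mean $\Theta(1/\beta)$ per guess, or equivalently $T$ uniformly random guesses). Each invocation returns $S$ together with a noisy score $f(G\cup S)+\mathrm{Lap}(\Delta/\eps_1)$, and we output the best one. The framework makes the overall procedure $(O(\eps_1),O(\delta'))$-DP independently of the number of invocations, so $\delta'=\delta\beta/\mathrm{poly}(n)$ suffices. This gives the $\sqrt{\log(n/(\delta\beta))}$ factor. With $\Theta(n^2/\beta)$ expected invocations, each costing $O(nk)$ queries, the total is $O(\beta^{-1}n^3k)$. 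The randomized repetition also ensures the correct guess $G^\ast$ is invoked with probability $1-\beta/3$.

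\textbf{Utility.} Condition on $G^\ast$ being run and on all $\GenRNM{}{}{}$ calls and Laplace noises succeeding; a union bound with $\beta'=\beta/\mathrm{poly}(n)$ makes this hold with probability $1-\beta$. By \Cref{thm:genRNM_guarantee}, each step picks $u_i$ with
\[
q(u_i)\ge \max_u\Big\{q(u)-\frac{4\Delta}{c(u)\eps_0}\log\frac{n}{\beta'}\Big\}.
\]
Multiplying by the cost of the comparison element shows the greedy step is approximately dense up to an additive loss of $\alpha=O(\Delta\log(n/\beta)/\eps_0)$ per selected $\OPT$-element, not per unit cost. This is the crucial point that eliminates any dependence on $c_{\min}$.

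We then redo the Feldman et al.\ analysis with this perturbed density inequality:
\[
f(u_i\mid S_{i-1})\ge \frac{c(u_i)}{c(\OPT')}\big(f(\OPT'\mid S_{i-1})-k\alpha\big),
\]
with $\OPT'=\OPT\setminus G^\ast$. The standard recursion yields
\[
f(G^\ast\cup S)\ge(1-1/e)f(\OPT)-O(k\alpha),
\]
which is $O(\Delta k^{1.5}\sqrt{\log(n/\delta\beta)}\log(n/\beta)/\eps)$. The final noisy-max selection loses only $O(\Delta\log(n/\beta)/\eps)$.

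\textbf{Main obstacle.} I expect the main obstacle to be the last-element/budget-overflow argument of Feldman et al. There, the first $\OPT$ element rejected for exceeding the budget is bounded by the guessed elements' values, and this must hold with only approximate greedy choices and a privately (or cost-only) filtered residual instance. I would first try to show the perturbed inequality still carries through to the moment of overflow, charging the overflow element to $f(G^\ast)/2$ plus $O(\alpha)$.
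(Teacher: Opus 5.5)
\textbf{Verdict: genuine gap in the utility argument.} Your privacy architecture is essentially the paper's: a \texttt{GenRNM}-based density greedy with advanced composition inside each guess, Laplace-noised scores $f(G\cup S)$, and the Cohen et al.\ selection framework across the $O(n^2)$ guesses with $\delta'=\delta\beta/\mathrm{poly}(n)$. Your point that the \texttt{GenRNM} loss is $O(\alpha)$ per $\mathrm{OPT}$ element, not per unit cost, is exactly how the paper avoids $c_{\min}$.

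The gap is the step you flag as the main obstacle, which is where the $1-1/e$ factor comes from.
\begin{itemize}
\item Your recursion with normalizer $c(\mathrm{OPT}')$ needs every element of $\mathrm{OPT}'\setminus S_{i-1}$ to be a candidate. Once some $o$ becomes unaffordable you only know $c(S)>1-c(G^*)-c(o)$, which can be far below $c(\mathrm{OPT}')$, so the exponent need not reach $1$.
\item You also cannot use Sviridenko's ``hypothetical next step'' with the rejected element. In the private greedy an unaffordable element is never scored, and scoring it would cost more than $k$ \texttt{GenRNM} calls.
\end{itemize}
The missing device is as follows. Let $o_m$ be a maximum-cost element of $O=\mathrm{OPT}\setminus G^*$, let $h=f(\cdot\mid G^*)$, and stop the analysis at the first $T$ with $c(S_T)>1-c(G^*)-c(o_m)$, or at termination. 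For $i<T$ every element of $O$ is affordable. Compare only against $O\setminus(S_i\cup\{o_m\})$, whose cost is at most $1-c(G^*)-c(o_m)$, and use $h(S_i\cup\{o_m\})\le h(S_i)+h(o_m)$. This gives
\[
(1-c(G^*)-c(o_m))\,\frac{h(u_{i+1}\mid S_i)}{c(u_{i+1})}\ \ge\ h(O)-h(S_i)-h(o_m)-k\alpha .
\]
Unrolling up to $T$ yields $h(S)\ge h(S_T)\ge (1-\frac1e)(h(O)-h(o_m))-k\alpha$. If the run terminates first and $O\not\subseteq S$, some $o\in O\setminus S$ is unaffordable, which forces $c(S)>1-c(G^*)-c(o_m)$ anyway. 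Hence
\[
f(G^*\cup S)\ge(1-\tfrac1e)f(\mathrm{OPT})+\tfrac1e f(G^*)-(1-\tfrac1e)f(o_m\mid G^*)-k\alpha .
\]
Because $G^*$ is a maximum-value pair of $\mathrm{OPT}$, we have $f(G^*)\ge f(\{g_j,o_m\})$ for $j=1,2$. With submodularity this gives $f(G^*)\ge 2f(o_m\mid G^*)$, and $2/e\ge 1-1/e$ closes the argument.

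This charge to $o_m$ is exact, with no $O(\alpha)$ term, because $G^*$ and $o_m$ are defined from $\mathrm{OPT}$, not by the noisy run. Note also that ``two highest-value elements'' must mean a maximum-value pair, or a greedily ordered pair. The two largest singletons only give $f(o\mid G)\le f(G)$, which is too weak.

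\textbf{Drop the filter, and fix the repetition counts.} The value-based filter on the residual instance is unnecessary and should be removed. The paper runs the greedy on $\mathcal{N}\setminus G$ with budget $1-c(G)$ and objective $f(\cdot\mid G)$. The values of the guesses enter only through the inequality $f(o_m\mid G^*)\le f(G^*)/2$ inside the analysis. Your fallbacks do not work:
\begin{itemize}
\item A Laplace threshold test per element would compose up to $n$ tests inside every $\mathcal{M}_G$, reintroducing polynomial dependence on $n$.
\item A cost-only filter either does nothing beyond the affordability check or risks discarding elements of $\mathrm{OPT}$.
\end{itemize}
Finally, use $\mathrm{Bin}(\lceil 3/\beta\rceil,p)$ repetition counts with $p$ uniform, as in Cohen et al., rather than geometric or truncated negative binomial counts. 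This has three effects:
\begin{itemize}
\item It caps the number of greedy runs at $n^2\lceil 3/\beta\rceil$, giving the stated worst-case $O(\beta^{-1}n^3k)$ query bound rather than an expected one.
\item It makes the number of runs of $G^*$ uniform on $\{0,\dots,\lceil 3/\beta\rceil\}$, so $G^*$ is run at least once with probability at least $1-\beta/3$.
\item It gives the framework's $\delta$-term $\lceil 3/\beta\rceil\sum_G\delta'$. This term is not independent of the number of mechanisms, and it is what forces $\delta'=\delta/(n^2\lceil 3/\beta\rceil)$.
\end{itemize}
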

As we show in \Cref{appendix:monotone}, the query complexity in \Cref{thm:two_guess_greedy_intro} can be further improved to $O(n^3k\log\beta^{-1})$ at the cost of an additional $\log(\beta^{-1})$ factor in the error term.

While \texttt{Two-Guess-Greedy} is, to our knowledge, the most practical existing approach for achieving a tight $(1-1/e)$ approximation, its computational cost remains high for large-scale instances. Thus, we additionally introduce a more efficient algorithm based on a single execution of a density-greedy variant~\cite{yaroslavtsev2020bring}. 
This result is stated in \Cref{thm:greedyplus_intro}, with details provided in \Cref{appendix:monotone}.
\begin{theorem}\label{thm:greedyplus_intro}
Let $f_D:2^\cN\to\R_{+}$ be a monotone submodular function with sensitivity $\Delta$.  
Then, for any $\beta\in(0,1)$ and $\eps,\delta>0$, there exists an $(\eps,\delta)$-DP algorithm for SMK that returns a feasible set $S\subseteq\cN$ such that, with probability at least $1-\beta$,
\[
f(S) \ge \tfrac{1}{2} f(\mathrm{OPT}) - O\Big( \tfrac{\Delta k^{1.5}}{\eps} \sqrt{\log \tfrac{1}{\delta}} \,\log \tfrac{n}{\beta} \Big)
\]
Moreover, the algorithm makes $O(nk)$ oracle queries.
\end{theorem}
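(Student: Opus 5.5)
We privatize the Greedy+ / density-greedy variant of \citet{yaroslavtsev2020bring}. In the non-private version, density-greedy is run on the whole ground set. After each prefix $S_i$ we consider augmenting it with the single element $u$ of largest marginal gain $f(u\mid S_i)$ that still fits, and we output the best augmented prefix. The known analysis shows that some augmented prefix $S_i\cup\{u\}$ has value at least $\tfrac12 f(\OPT)$: take the first $\OPT$ element that density-greedy would reject for budget reasons, and use submodularity together with the density ordering. Our algorithm runs at most $k$ greedy iterations (a feasible greedy set has at most $k$ elements). In each iteration $i$ it makes two private selections: (a) the next density-greedy element, chosen among the feasible remaining elements via $\GenRNM{\eps_0}{\cC_i,q_i,\beta'}{D}$ with score $q_i(u)=f(u\mid S_i)/c(u)$; and (b) the augmentation element, chosen via $\RNM{\eps_0}{\cC'_i,g_i}{D}$ with score $g_i(u)=f(S_i\cup\{u\})$. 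A final private selection then picks the best of the $\le k$ candidate sets with scores $f(S_i\cup\{u_i\})$. Queries number $O(nk)$.

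\emph{Privacy.} The scores $f(u\mid S_i)$ have sensitivity $2\Delta$, so the density scores have sensitivity $2\Delta/c(u)$, which is finite for each candidate. This is exactly the setting of \Cref{thm:genRNM_guarantee}, which gives $\eps_0$-DP per step. The selections in (b) and the final selection have sensitivity $\Delta$, so \Cref{thm:RNM_main} applies. There are $O(k)$ adaptive $\eps_0$-DP steps in total. By advanced composition (\Cref{thm:composition}) with $\eps_0=\Theta(\eps/\sqrt{k\log(1/\delta)})$, the whole algorithm is $(\eps,\delta)$-DP. For $\eps>1$ we can instead simply use basic/advanced composition with constants adjusted.

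\emph{Utility.} Set $\beta'=\beta/O(k)$ and union bound, so every selection is accurate. For the augmentation and final steps, accuracy means the chosen set loses at most $\alpha=O(\tfrac{\Delta}{\eps_0}\log\tfrac{nk}{\beta})$ in value. For the density steps, \Cref{thm:genRNM_guarantee} gives
\[
\frac{f(u_i\mid S_i)}{c(u_i)}\ \ge\ \frac{f(v\mid S_i)}{c(v)}-\frac{2\Delta}{\eps_0 c(v)}\log\tfrac{n}{\beta'}
\]
for every feasible $v$. Multiplying by $c(v)$ shows the error is additive, $\alpha$, in the \emph{numerator*cost-weighted* form}. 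Concretely, rerun the standard density-greedy telescoping argument: $f(\OPT)-f(S_i)\le\sum_{v\in\OPT}f(v\mid S_i)$, and each term is bounded using density $\times\, c(v)$. The $c(v)$ factor cancels the $1/c(v)$ in the error. Each iteration therefore contributes at most $|\OPT|\alpha\le k\alpha$ of extra loss in the recurrence. The recurrence is multiplicative, of the form $(1-c(u_i))$-contraction, so the accumulated loss is at most $\sum_i c(u_i)\cdot k\alpha/\,c(u_i)$-weighted. I expect this to be the main obstacle: after the $c(v)$ cancellation, the per-step error should enter as $k\alpha\cdot c(u_i)/c(u_i)$ only through the budget-weighted sum, giving $O(k\alpha)$ total rather than $k^2\alpha$. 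We have to check carefully that the error is weighted by the cost of the selected element, so that it telescopes over a budget of $1$. A fallback is to accept $k\alpha$ per step but note that the half-approximation argument only needs the inequality up to the critical index.

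With total error $O(k\alpha)$ and $\eps_0=\eps/\sqrt{k\log(1/\delta)}$, we get $k\cdot\tfrac{\Delta\sqrt{k\log(1/\delta)}}{\eps}\log\tfrac{n}{\beta}$, which is $O(\tfrac{\Delta k^{1.5}}{\eps}\sqrt{\log\tfrac1\delta}\log\tfrac n\beta)$ as claimed. Finally, plugging the noisy inequalities into the Greedy+ $1/2$ argument at the critical iteration is routine, since every non-private comparison is replaced by one with additive slack.
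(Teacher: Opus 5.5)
\textbf{Overall.} Your algorithm and privacy accounting are fine. Choosing each prefix's best augmentation with its own \texttt{RNM} call and then selecting among the augmented prefixes is a legitimate variant of the paper's single \texttt{RNM} over all $O(nk)$ extensions, all singletons, and $S_\ell$. It costs $2k+1$ composed steps instead of $k+1$. Up to $O(\alpha)$, your family covers every set the analysis needs: $S_i\cup\{o_m\}$; $S_T$, obtained by augmenting $S_{T-1}$ with the greedy element $u_T$; and singletons, obtained by augmenting $S_0$. Your per-step use of the generalized mechanism (multiply by $c(v)$, sum over $\mathrm{OPT}$, obtain slack $k\alpha$) is also exactly the paper's.

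\textbf{First gap: the $\tfrac12$ argument is missing.} The gap sits at the step you flag as the main obstacle, and neither suggested resolution works as written. A multiplicative $(1-c(u_i))$-contraction is the $(1-1/e)$ recurrence. There the slack is absorbed as a shift of the potential, but on the full instance it only gives $f(S_T)\ge(1-\tfrac1e)\bigl(f(\mathrm{OPT})-f(o_m)-k\alpha\bigr)$. Combined with the best singleton this is about $0.39$, not $\tfrac12$. Likewise, ``the first $\mathrm{OPT}$ element density-greedy would reject'' is not the pivot of the $\tfrac12$ argument. What is needed is the Greedy$^+$ case split around a maximum-cost $o_m\in\mathrm{OPT}$ and the first index $T$ with $c(S_T)>1-c(o_m)$. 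For $i<T$, every element of $\mathrm{OPT}$ still fits next to $S_i$. Hence $S_i\cup\{o_m\}$ is a legal augmentation, and $\mathrm{OPT}\setminus(S_i\cup\{o_m\})\subseteq\mathcal{C}_i$. Either some $i<T$ has $f(S_i\cup\{o_m\})\ge\tfrac12 f(\mathrm{OPT})$, and the selections recover it up to $O(\alpha)$, or for every $i<T$
\[
(1-c(o_m))\,\frac{f(u_{i+1}\mid S_i)}{c(u_{i+1})}\ \ge\ f(\mathrm{OPT})-f(S_i\cup\{o_m\})-k\alpha\ \ge\ \tfrac12 f(\mathrm{OPT})-k\alpha .
\]
This is a \emph{uniform} lower bound on the selected densities.

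\textbf{Second gap: how the slack accumulates.} Even with this bound, ``telescoping the cost-weighted error over a budget of $1$'' does not give $O(k\alpha)$. Summing $f(S_{i+1})-f(S_i)\ge\frac{c(u_{i+1})}{1-c(o_m)}\bigl(\tfrac12 f(\mathrm{OPT})-k\alpha\bigr)$ and bounding the error part separately leaves $k\alpha\cdot c(S_T)/(1-c(o_m))$. That factor is not bounded by a constant: it can be as large as $1/(1-c(o_m))$ when $c(o_m)$ is close to $1$, which reintroduces the cost dependence the theorem avoids. Your fallback, read literally as $k\alpha$ per step over up to $k$ steps, gives $k^2\alpha$, i.e.\ $k^{2.5}$ in the final bound. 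The paper keeps the slack inside the bracket. Assuming w.l.o.g.\ $\tfrac12 f(\mathrm{OPT})\ge k\alpha$,
\[
f(S_T)\ \ge\ f(S_0)+\frac{c(S_T)}{1-c(o_m)}\Big(\tfrac12 f(\mathrm{OPT})-k\alpha\Big)\ \ge\ \tfrac12 f(\mathrm{OPT})-k\alpha ,
\]
using $f(S_0)\ge 0$ and $c(S_T)\ge 1-c(o_m)$. The latter holds by definition when $T<\ell$. When $T=\ell$ it needs the termination argument: either $\mathrm{OPT}\subseteq S_\ell$ and monotonicity finishes, or some $o\in\mathrm{OPT}\setminus S_\ell$ no longer fit, so $c(S_\ell)>1-c(o)\ge 1-c(o_m)$. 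With $\varepsilon_0=\Theta(\varepsilon/\sqrt{k\log(1/\delta)})$ this yields the stated $k^{1.5}$ error. Without this step the proposal does not prove the theorem.
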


\subsection{Non-Monotone Objective Functions}\label{section:challenges_results_nonmonotone}
At present, the most practical non-private approaches for non-monotone SMK combine density-greedy selection with randomized sampling by which selected elements are discarded with some fixed probability~\cite{han2021approximation,amanatidis2020fast}, ensuring high utility in expectation. Our proposed algorithm follows this paradigm, but its privatization introduces three primary obstacles.

\paratitle{Composition Bottleneck} 
The uncertainty in the number of selection steps creates a challenge for maintaining a tight privacy analysis. Since selected elements can be discarded, the algorithm may perform up to $n$ selections. A standard application of advanced composition would yield an error term scaling with $\sqrt{n}$, rather than the desired logarithmic dependence. Moreover, unlike the non-private setting, initial subsampling of the ground set is not equivalent to post-selection discarding, since the output distribution of a DP selection mechanism depends on its entire candidate set.

Instead, we derive a concentration bound for the number of private selections, showing that the number of required private selections is stochastically dominated by a sum of independent geometric random variables that concentrates around $O(k)$. This ensures that, with high probability, the algorithm does not exceed the composition privacy budget.

\paratitle{Robustness to Negative Marginal Gains}
Beyond composition, the non-monotone setting presents a challenge for private greedy selection. Analyses for non-monotone SMK, such as that of \citet{han2021approximation}, rely on selecting elements with strictly positive marginal gain to ensure progress. In the DP setting, however, additive noise can cause elements with negative marginal contributions to be selected.

Instead of halting when no unselected elements with positive marginal gain exist, we continue adding feasible elements. Then, selecting the approximately best observed solution using \RNMT allows us to focus the analysis on early iterations, where feasible elements with sufficiently large gains exist. 
By extending the analysis of \citet{han2021approximation} to accommodate a \emph{bounded additive slack}, and showing that elements with sufficiently large contribution are selected with high probability in early iterations, we obtain a $1/4$-approximation even when noise masks the sign of an element’s marginal contribution.

\paratitle{Scores with Varying Sensitivity}
Finally, the challenge of varying sensitivities identified in the monotone case (\Cref{section:challenges_results_monotone}) persists here, as the density scores remain dependent on element costs. We resolve this in the same manner, leveraging \GenRNMT  to ensure that the additive error does not depend on the worst-case sensitivity of a density score. 

\paratitle{Main Result}
Our approach is formalized in \Cref{algorithm:subsample_greedy}, for which we establish the guarantee in \Cref{thm:samplegreedy}. We present the algorithm and analysis in \Cref{sec:non-monotone}, with full proofs deferred to \Cref{sec:appendix_nonmonotone}.

\begin{theorem}\label{thm:samplegreedy}
Let $f_D:2^\cN\to\R_{+}$ be a submodular function with sensitivity $\Delta$.
Then, for any $\eps,\beta\in(0,1)$ and $\delta>0$, there exists an $(\eps,\delta)$-DP algorithm for SMK that returns a feasible set $S\subseteq\cN$ such that, with probability at least $1-\beta$,
\[
\mathbb{E}[f(S)] \ge \tfrac{1}{4} f(\mathrm{OPT}) - O\Big( \tfrac{\Delta k}{\eps} \sqrt{(k + \log \tfrac{1}{\delta}) \log \tfrac{1}{\delta}} \,\log \tfrac{n}{\beta} \Big).
\]
Moreover, the algorithm makes $O(nk)$ oracle queries.
\end{theorem}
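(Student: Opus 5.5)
We will prove \Cref{thm:samplegreedy} by analyzing a private version of the randomized density-greedy algorithm of \citet{han2021approximation}. The algorithm maintains a current set $S_i$ and, in each step, privately selects a candidate $u$ using $\GenRNM{\eps_0}{\cC_i,q_i,\beta_0}{D}$. Here $\cC_i$ contains the elements that are still unconsidered and feasible to add, and $q_i(u)=f(u\mid S_i)/c(u)$ is the density score, whose sensitivity is at most $2\Delta/c(u)$. With probability $p=1/2$ the selected element is added; otherwise it is discarded permanently. The algorithm continues until no feasible candidate remains, and never halts because of negative marginals. We also keep the best singleton as a fallback candidate. Finally, $\RNM{\eps_0}{\cdot,f}{D}$ is applied to the (polynomially many) prefixes $S_0,S_1,\dots$ together with the best singleton. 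Each oracle call evaluates at most $n$ marginals, and there are $O(k)$ steps with high probability, giving $O(nk)$ queries.

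\textbf{Privacy.} Each accepted element consumes budget, and a feasible set has size at most $k$. The number of private selections is therefore the number of trials needed to collect at most $k$ successes with success probability $1/2$. This is stochastically dominated by a sum of $k$ independent $\mathrm{Geom}(1/2)$ variables. A standard Chernoff bound for geometric sums shows that $T=O(k+\log(1/\delta))$ selections suffice except with probability $\delta/2$. We then run the algorithm with a hard cap of $T$ selections, outputting the current prefixes if the cap is hit, and set $\eps_0=\eps/(2\sqrt{2T\log(2/\delta)})$. Advanced composition (\Cref{thm:composition}) on the capped algorithm gives $(\eps,\delta)$-DP, with the cap's failure event absorbed into $\delta$. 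With this choice, the noise scale $1/\eps_0\approx \sqrt{(k+\log\frac1\delta)\log\frac1\delta}/\eps$ is exactly the factor appearing in the bound.

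\textbf{Utility.} By \Cref{thm:genRNM_guarantee} and a union bound over the $T$ steps (taking $\beta_0=\beta/(2T)$, so $\log(nT/\beta)=O(\log(n/\beta))$ up to constants), with probability $1-\beta$ every selected $u$ satisfies
\[
\frac{f(u\mid S_i)}{c(u)}\ \ge\ \frac{f(v\mid S_i)}{c(v)}-\frac{4\Delta}{\eps_0 c(v)}\log\frac{n}{\beta_0}
\qquad\text{for all } v\in\cC_i .
\]
Multiplying through by $c(v)$ shows that the slack in the accounting is an additive $\gamma\eqdef O(\frac{\Delta}{\eps_0}\log\frac{n}{\beta})$ per step, independent of $c_{\min}$. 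We then redo the analysis of \citet{han2021approximation} with this slack. Condition on the high-probability event and take expectations over the coin flips. Two ingredients are needed. First, the random discarding ensures $\mathbb{E}[f(S\cup \OPT)]\ge \frac12 f(\OPT)$, by the standard sampling lemma for non-monotone submodular functions, since each element enters $S$ with probability at most $1/2$. Second, the density-greedy charging argument is run up to the first iteration at which either some element of $\OPT$ is not feasible to add or no remaining element has noisy density gain exceeding the slack. At that iteration one splits on the largest element of $\OPT$, which is covered by the singleton candidate. Summing the per-step slack over at most $k$ accepted steps and $O(k)$ steps in total gives total additive loss $O(k\gamma)$, which yields $\mathbb{E}[f(S^\star)]\ge\frac14 f(\OPT)-O(k\gamma)$. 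The final \RNMT step loses only $O(\frac{\Delta}{\eps_0}\log\frac{n}{\beta})$ more by \Cref{thm:RNM_main}. Substituting the value of $\eps_0$ gives the stated bound.

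\textbf{Main obstacle.} I expect the hardest step to be the slack-robust version of Han et al.'s $1/4$ argument. Their proof assumes every selected element has positive marginal gain, and the expectation over discard coins interacts with the prefix at which we stop. My first attempt is to define a stopping index $\tau$ as the first step where the maximum true density falls below $\gamma/c(\cdot)$. Before $\tau$, the selected gains are positive up to $\gamma$, so their argument applies with the additive losses accumulated. After $\tau$, further additions are ignored because the \RNMT selection among prefixes can choose $S_\tau$.
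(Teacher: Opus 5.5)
\textbf{Gap: the final candidate set is too small for the $\tfrac14$ ratio.} Your privacy argument essentially matches the paper's: geometric domination of the number of selections, plus advanced composition with $\eps_0\approx\eps/\sqrt{(k+\log\frac1\delta)\log\frac1\delta}$. The utility part, however, has a genuine gap. Your final \RNMT step ranges only over the prefixes and the best singleton, and with that candidate family the factor $\tfrac14$ does not follow.

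Write $E$ for the per-step slack (your $\gamma$). Combine $\tfrac12 f(\OPT)\le\mathbb{E}[f(S_T\cup\OPT)]$ with the Han et al.\ charging bound and the identity $\mathbb{E}[f(S_T)]-f(\varnothing)=\mathbb{E}[\sum_u X_u f(u\mid S_u)]$. What you actually get is
\[
\tfrac12 f(\OPT)\le \mathbb{E}[f(S_T\cup\{o_m\})]+\mathbb{E}[f(S_T\cup\{u_{T+1}\})]+O(kE),
\]
with $S_T$ in place of $S_T\cup\{u_{T+1}\}$ when no cost overflow occurs. To conclude $\le 2\,\mathbb{E}[f(S^*)]+O(kE)$, each of these two sets must itself be a candidate of the final selection.

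Within this argument, prefixes and singletons only let you write $f(S_T\cup\{o_m\})\le f(S_T)+f(\{o_m\})$ and $f(S_T\cup\{u_{T+1}\})\le f(S_T)+f(\{u_{T+1}\})$. That bounds the right-hand side by four times the best candidate and yields only $\tfrac18$. Splitting off the largest element of $\OPT$ through a singleton is the monotone device. What you dropped is Han et al.'s one-element augmentation $S_j^*$. The paper reproduces it privately by putting every feasible extension $S_{i'}\cup\{v\}$ into $\cS$.

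The two witnesses are then in $\cS$:
\begin{itemize}
\item $S_T\cup\{o_m\}$ is feasible because $c(S_T)\le c(\OPT\setminus\{o_m\})$.
\item $S_T\cup\{u_{T+1}\}$ is feasible because $u_{T+1}\in\cC_T$.
\end{itemize}
Since $S_i$ changes only on acceptance, there are at most $k+1$ distinct prefixes, so this still costs only $O(nk)$ queries. Caching marginals the same way also makes your $O(nk)$ query bound deterministic rather than high-probability.

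\textbf{Related issue: the stopping index.} It has to be the one for which Han et al.'s fractional mapping exists. The paper takes $T$ to be the first $i$ with $c(S_i)+c(u_{i+1})>c(\OPT\setminus\{o_m\})$, truncated at the index $r$ after which no feasible element of $\OPT\setminus S_i$ has gain above $E$. Your $\tau$ plays the role of $r$. This choice gives $c(O_{>T}\setminus\{o_m\})<c([S_T\setminus(\OPT\setminus\{o_m\})]\cup\{u_{T+1}\})$. That inequality is what lets the gains of late optimal elements be charged, each with slack $E$, to selected elements and to $u_{T+1}$. Your rule (stop when some element of $\OPT$ is no longer feasible) is the monotone one. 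It does not give this inequality, and at that index $S_T\cup\{o_m\}$ can already be infeasible.

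\textbf{Minor point: the hard cap.} With a hard cap on the number of selections, your utility analysis says nothing on the event that the cap is hit, and that event enters the expectation over the coins. The paper instead runs uncapped and charges the overflow probability to $\delta$ in the privacy analysis.
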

\section{Monotone Objective Functions}\label{sec:monotone}
In this section, we establish the guarantees stated in \cref{thm:two_guess_greedy_intro}. The result follows by combining the complexity, privacy, and utility guarantees from \Cref{lemma:two_guess_greedy_complexity,lemma:2guess_privacy,lemma:two_guess_greedy_utility}. All omitted proofs are provided in \Cref{appendix:monotone}.

We introduce two components. The first, formulated in \Cref{algorithm:greedy} is a DP adaptation of the density-greedy algorithm that utilizes \GenRNMT for greedy element selections. This serves as a core subroutine for the second component, \Cref{alg:two_guess_greedy}, which implements a privatized \texttt{Two-Guess-Greedy} algorithm. Specifically, \Cref{alg:two_guess_greedy} samples $p \in[0,1]$ uniformly and iterates over every subset $Y \subseteq \cN$ of size $|Y| \le 2$. For each $Y$, it performs $3/\beta$ iterations, each skipped independently with probability $p$. In non-skipped iterations where $|Y|=2$, \Cref{algorithm:greedy} is executed on the residual instance defined by the ground set $\cN \setminus Y$, remaining budget $B - c(Y)$, and marginal objective $h_D(S)=f_D(S\mid Y)$. The value of the resulting solution $S_Y \cup Y$ is then privatized using the Laplace mechanism. Finally, the candidate with the highest noisy value is returned.

\begin{algorithm}
\caption{\texttt{DP-Density-Greedy}}
\label{algorithm:greedy}
\setcounter{AlgoLine}{0}
\KwIn{
Dataset $D$;
Submodular function $h_D:2^\cN\to\R_+$;
Budget $B$;
Parameters  $\beta,\eps\in (0,1]$}

Set $S_0\gets\varnothing$ and $i\gets 0$.

\While{ $\exists u\in \cN\setminus S_i$ such that $c(S_i\cup\set{u})\le B$ \plabel{line2}}{
    Let $\cC_i$ be the set of elements satisfying the condition stated in Line \ref{line2}.
    
    Define $q_i(u;D)=\frac{h_D(u\mid S_i)}{c(u)}$ for all $u\in \cC_i$.

     Compute $u_{i+1}\gets \GenRNM{\eps}{\cC_i, q_i, \beta/k}{D}$
     
     Let $S_{i+1}\gets S_i\cup\set{u_{i+1}}$.

     Update $i\gets i+1$.
}
\Return $S_i$
\end{algorithm}
The query complexity of \Cref{alg:two_guess_greedy} is stated next.
\begin{lemma}\label[lemma]{lemma:two_guess_greedy_complexity}
    \Cref{alg:two_guess_greedy} makes $O(\beta^{-1}n^3k)$ oracle queries.
\end{lemma}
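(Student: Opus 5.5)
The count reduces to multiplying three quantities: the number of enumerated subsets $Y$, the number of iterations per subset, and the cost of one iteration. By the description of \Cref{alg:two_guess_greedy}, the outer loop ranges over all $Y\subseteq\cN$ with $|Y|\le 2$. There are $1+n+\binom{n}{2}=O(n^2)$ such subsets, and each receives $3/\beta$ iterations. In the worst case no iteration is skipped, so I will upper bound by assuming all $O(n^2/\beta)$ iterations are executed. This gives a deterministic bound that does not depend on the random $p$.

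Next I will bound the cost of a single iteration. For $|Y|<2$ the iteration involves no greedy run, at most evaluating $f(Y)$ for the Laplace step, so it costs $O(1)$ queries. For $|Y|=2$, \Cref{algorithm:greedy} is run on the residual instance with $h_D(S)=f_D(S\cup Y)-f_D(Y)$. Each evaluation of $h$ costs $O(1)$ queries to $f$, and $f(Y)$ can be cached.

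To count the queries inside \Cref{algorithm:greedy}, I will use two facts:
\begin{itemize}
\item The while loop adds one new element per round while keeping $c(S_i)\le B$, so $S_i$ is always feasible. Hence the number of rounds is at most $k$, the maximum cardinality of a feasible set (a feasible set in the residual instance, together with $Y$, is feasible in the original instance).
\item In round $i$, computing the scores $q_i(u;D)$ for all $u\in\cC_i$ needs $f(S_i\cup Y)$ once plus $f(S_i\cup Y\cup\{u\})$ for each of at most $n$ candidates, i.e.\ $O(n)$ queries.
\end{itemize}

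The step needing care is confirming that \GenRNMT issues no extra oracle queries beyond the scores. Its auxiliary scores are built from the $q_i(u;D)$ values and the per-candidate sensitivities $\Delta/c(u)$, which depend only on public costs, so they add no $f$-evaluations. With that, one run of \Cref{algorithm:greedy} costs $O(nk)$ queries, and the final Laplace evaluation of $f(S_Y\cup Y)$ adds $O(1)$.

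Multiplying $O(n^2)\cdot(3/\beta)\cdot O(nk)$ gives $O(\beta^{-1}n^3k)$. The final arg-max over noisy values uses no further queries, which completes the bound.
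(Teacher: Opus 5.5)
Your proposal is correct and follows the same argument as the paper: $O(n^2)$ guesses $Y$, $O(1/\beta)$ repetitions for each guess, and $O(nk)$ queries per run of \Cref{algorithm:greedy}, which has at most $k$ rounds with $O(n)$ marginal-gain evaluations each. Your extra checks fill in details the paper leaves implicit: the generalized selection mechanism makes no oracle calls beyond the scores, and the Laplace and final arg-max steps add only $O(1)$ queries per iteration.
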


\paratitle{Privacy Analysis}
The privacy guarantee of \Cref{alg:two_guess_greedy} relies on a structural alignment with the generalized selection framework of \cite{cohen2023generalized}. This ensures a global privacy loss independent of the number of enumeration steps, complemented by a composition analysis for the individual invocations of \Cref{algorithm:greedy}.
\begin{lemma}\label[lemma]{lemma:2guess_privacy}
For any $\eps,\delta\in (0,1]$, \Cref{alg:two_guess_greedy} is $(\eps,\delta)$-DP.
\end{lemma}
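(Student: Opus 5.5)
The plan is to split the privacy analysis into an inner layer and an outer layer. The inner layer is a single execution of the subroutine that produces one candidate: \Cref{algorithm:greedy} run on the residual instance for a fixed pair $Y$, followed by the Laplace-privatized value of $S_Y\cup Y$. The outer layer is the enumeration over all $Y$ with $|Y|\le 2$, with $3/\beta$ independently skipped repetitions per $Y$, followed by returning the candidate with the highest noisy value. For the outer layer I would invoke the generalized selection framework of \citet{cohen2023generalized} from \Cref{appendix:private_selection}. It says that if every candidate-producing subroutine is $(\eps_1,\delta_1)$-DP and each is invoked a random number of times from a suitable distribution, then releasing the best candidate is $(c\,\eps_1, c'\delta_1)$-DP for small constants $c,c'$. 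The resulting guarantee is independent of how many subroutines there are.

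\textbf{Inner layer.} Fix $Y$. The marginal objective $h_D(S)=f_D(S\cup Y)-f_D(Y)$ has sensitivity at most $2\Delta$. Each density score $q_i(u;D)=h_D(u\mid S_i)/c(u)$ therefore has its own sensitivity $\le 4\Delta/c(u)$. This is finite, and it is exactly what \GenRNMT needs: by \Cref{thm:genRNM_guarantee}, each selection is $\eps_0$-DP no matter what the candidate sensitivities are. Every element has cost at most $B$, each chosen element strictly increases the cost, and the greedy set stays feasible, so the loop runs at most $k$ times. Advanced composition (\Cref{thm:composition}) with per-step parameter $\eps_0=\eps_1/(2\sqrt{2k\log(1/\delta_1)})$ makes the greedy run $(\eps_1/2,\delta_1)$-DP. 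The Laplace release of $f_D(S_Y\cup Y)$ with scale $2\Delta/\eps_1$ is $\eps_1/2$-DP by \Cref{thm:laplace_guarantees}. Basic composition then makes the whole subroutine $(\eps_1,\delta_1)$-DP. For $|Y|\le 1$ the subroutine is only the Laplace release, which is also private.

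\textbf{Outer layer.} The algorithm samples $p$ uniformly and runs each of $3/\beta$ trials per $Y$ independently with probability $1-p$. The number of invocations per $Y$ is then a binomial mixed over a uniform $p$. The main obstacle is matching this exactly to the hypotheses of the generalized selection theorem of \citet{cohen2023generalized}. I would first try to verify that this mixture is one of the admissible distributions, namely the truncated/shifted-geometric or uniform-mixture family, for which their theorem gives roughly a $(3\eps_1,\,O(\delta_1/\beta))$ guarantee. The factor on $\delta$ comes from the expected number of runs, $O(1/\beta)$ per candidate, so I would absorb it by setting $\delta_1=\Theta(\beta\delta/n^2)$. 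That choice accounts for the $\log(n/(\delta\beta))$ term appearing in \Cref{thm:two_guess_greedy_intro}. With $\eps_1=\eps/3$ this gives the claimed $(\eps,\delta)$-DP guarantee. The final argmax over noisy values is post-processing of the released candidates and their noisy scores, so it costs no additional privacy.

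If the mixture does not fit their theorem verbatim, my fallback is to view all runs sharing the same $Y$ as one randomized-repetition mechanism. I would then apply their "best of random repetitions" lemma directly to the concatenation over all $Y$, treating the whole enumeration as a single base algorithm that outputs the best $(Y,S_Y)$ with its noisy value. This base algorithm is itself an $(\eps_1,\delta_1)$-DP candidate generator whose output is a scored candidate, which is the form their framework requires.
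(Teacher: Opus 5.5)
Your proposal matches the paper's proof of \Cref{lemma:2guess_privacy_formal}. The paper also defines $\cM_Y$ as one run of \Cref{algorithm:greedy} on the residual instance together with the Laplace release. It shows $\cM_Y$ is $(\eps/3,\delta')$-DP, with $\delta'=\delta/(n^2\lceil 3/\beta\rceil)$, using advanced composition and then basic composition. It then observes that \Cref{alg:two_guess_greedy} is exactly $\Selection_{\tau,\gamma}$ with $\tau=\lceil 3/\beta\rceil$ and $\gamma=1$. So \Cref{thm:generalized_select} gives $(3\cdot\eps/3,\ \tau\sum_Y\delta')=(\eps,\delta)$.

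Three small points:
\begin{itemize}
\item \emph{The obstacle you flag is not one.} A uniform $p$, shared across all $Y$, is precisely the $\gamma=1$ case of the selection framework. No matching argument is needed.
\item \emph{Drop your fallback.} The full enumeration over all $O(n^2)$ guesses is not an $(\eps_1,\delta_1)$-DP base mechanism. Certifying it as such would require composing over all guesses, which is exactly the bottleneck the selection framework is used to avoid.
\item \emph{Fix the composition constant.} With per-step parameter $\eps_1/(2\sqrt{2k\log(1/\delta_1)})$, only the leading term of \Cref{thm:composition} equals $\eps_1/2$. The second term $k\eps_0(e^{\eps_0}-1)$ is positive, so you do not get $(\eps_1/2,\delta_1)$. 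Use $\eps_1/(4\sqrt{2k\log(1/\delta_1)})$ per step instead. With $\eps_1=\eps/3$, this is exactly the paper's $\eps_0/6$, and the accounting then closes at $(\eps,\delta)$.
\end{itemize}
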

\begin{proof}[Proof Sketch]
By advanced composition (\Cref{thm:composition}), each invocation of \Cref{algorithm:greedy} is $(\eps/6, \delta')$-DP, where $\delta' = O(\delta\beta/n^2)$. Consider a mechanism $\mathcal{M}_Y$ that executes \Cref{algorithm:greedy} on the residual instance defined by $Y$ and outputs $(S_Y, z_{Y,j})$ where $z_{Y,j} = f(S_Y \cup Y) + \text{Lap}(6\Delta_f/\eps_0)$. By basic composition, each $\mathcal{M}_Y$ is $(\eps/3, \delta')$-DP. Finally, \Cref{alg:two_guess_greedy} can be viewed as an instantiation of the \texttt{Selection} framework from \cite{cohen2023generalized} applied to the collection $\{\mathcal{M}_Y \mid Y \subseteq \mathcal{N}, |Y| \le 2\}$. Under our choice of parameters, the entire algorithm is $(\eps, \delta)$-DP.
\end{proof}

\begin{algorithm} 
\caption{\DPTwoGG  (\texttt{DP-Two-Guess-Greedy})}
\label{alg:two_guess_greedy}
\setcounter{AlgoLine}{0}
\SetInd{0.5em}{1em} 
\KwIn{
Dataset $D$;
Submodular function $f_D:2^\cN\to\R_+$;
Budget $B$;
Parameters  $\beta,\eps,\delta\in (0,1]$}

Sample $p \in [0,1]$ uniformly; set $\Omega \gets \varnothing$. \plabel{alg:two_guess_init}\;
 
 Let $\eps_0 \gets O(\eps / \sqrt{k \log\frac{n}{\delta\beta}})$\plabel{alg:two_guess_seteps}  \tcp*[l]{{\scriptsize Exact expression in the proof of  \Cref{lemma:2guess_privacy_formal}}}

\ForEach{$Y \subseteq \mathcal{N}$ where $|Y| \le 2, c(Y) \le B$}{
    \ForEach{$j=1,\dots,\lceil3/\beta\rceil$}{
        \WithProb{$p$}{
            \lIf{$|Y|=2$}{let $S_{Y,j}$ be the output of \Cref{algorithm:greedy} on the residual instance defined by $Y$ with parameters $\beta/3$, $\eps_0/6$. \plabel{alg2:line_residual}
                }
            \lElse{$S_{Y,j}\gets \varnothing$.}
                Let $S'_{Y,j}\gets S_{Y,j}\cup Y$
                
                Let $z_{Y,j}\gets f_D(S'_{Y,j})+ \Lap{\frac{6\Delta_f}{\eps}}$ \plabel{alg2:line_sample_laplace}

                Update  $\Omega\gets \Omega \cup \set{(Y,j)}$ 
                }
    }
}

\lIf{$\Omega \neq \varnothing$}{ let $(Y^*,j^*) \gets \argmax_{(Y,j) \in \Omega} z_{Y,j}$ and $S^* \gets S'_{Y^*,j^*}$\plabel{alg2:get_best}}
\lElse{$S^* \gets \emptyset$}

\Return $S^*$
\end{algorithm}

\paratitle{Utility Analysis}
The proof extends the analysis of \citet{feldman2023practical} to address the error introduced by our DP adaptation. Assume that $|\OPT| > 2$. The case $|\OPT| \le 2$ is addressed separately in \Cref{appendix:monotone}, as in this case the algorithm considers an optimal solution as a candidate in Line~\ref{alg2:get_best}. While density-greedy does not provide a constant-factor approximation for SMK directly, a tight $(1-1/e)$-approximation is achieved when applied to a residual instance defined by $Y^*$, a maximum-value subset of $\OPT$ of size two \cite{feldman2023practical}. Accordingly, we first analyze the execution of \Cref{algorithm:greedy} on this instance, beginning with a bound on the sensitivity of density scores.

\begin{lemma}\label[lemma]{lemma:sensitivity_bound}
For every iteration $i$ of \Cref{algorithm:greedy} and $u \in \mathcal{C}_i$, the quality score $q_i(u;D)$ has sensitivity $2\Delta_f/ c(u)$.
\end{lemma}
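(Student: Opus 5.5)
Since $q_i(u;D)=h_D(u\mid S_i)/c(u)$ and $c(u)$ is a public quantity independent of $D$, the plan is to bound the sensitivity of the numerator $h_D(u\mid S_i)$ by $2\Delta_f$ and then divide by $c(u)$. A subtlety is what ``sensitivity'' means inside an adaptive algorithm: the set $S_i$ and the candidate set $\cC_i$ are determined by previous outputs (the prior selections $u_1,\dots,u_i$). For the purposes of the composition analysis they are treated as fixed, since each invocation of \GenRNMT is analyzed conditioned on the history. We therefore fix $S_i$ and $u\in\cC_i$ (both data-independent once conditioned on the history) and compare $q_i(u;D)$ with $q_i(u;D')$ for arbitrary neighboring $D\sim D'$.

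First we treat the case where $h_D=f_D$, as in a direct run of \Cref{algorithm:greedy}. We write
\[
h_D(u\mid S_i)-h_{D'}(u\mid S_i)=\bigl(f_D(S_i\cup\{u\})-f_{D'}(S_i\cup\{u\})\bigr)-\bigl(f_D(S_i)-f_{D'}(S_i)\bigr),
\]
and by the definition of $\Delta_f$ as a maximum over all sets $S\subseteq\cN$, each bracket has absolute value at most $\Delta_f$. The triangle inequality then gives $|h_D(u\mid S_i)-h_{D'}(u\mid S_i)|\le 2\Delta_f$.

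Next we handle the residual instance used in \Cref{alg:two_guess_greedy}, where $h_D(S)=f_D(S\mid Y)$. In this case the marginal $h_D(u\mid S_i)$ telescopes to $f_D(u\mid S_i\cup Y)=f_D(S_i\cup Y\cup\{u\})-f_D(S_i\cup Y)$. This is again a difference of two values of $f_D$, so the same argument yields the bound $2\Delta_f$. (Bounding $\Delta_h$ by $2\Delta_f$ and then the marginal by $2\Delta_h$ would give the weaker constant $4\Delta_f$.) Dividing by $c(u)>0$, which does not depend on $D$, gives sensitivity at most $2\Delta_f/c(u)$.

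No step here is genuinely hard. The only points needing care are making explicit that $S_i$, $\cC_i$ and $Y$ are fixed when sensitivity is computed, which is justified by the adaptive composition in \Cref{thm:composition}, and bounding the residual objective's marginals directly in terms of $f_D$ to avoid the loose constant.
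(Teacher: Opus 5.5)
Your proposal is correct and matches the paper's proof: the paper also rewrites $q_i(u;D)$ as $\bigl(f_D(S_i\cup Y\cup\{u\})-f_D(S_i\cup Y)\bigr)/c(u)$, bounds the change of each term by $\Delta_f$ via the triangle inequality, and divides by the data-independent cost $c(u)$. Your separate $h_D=f_D$ case is just $Y=\varnothing$, and your point about fixing $S_i$ given the history is left implicit in the paper.
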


Let $o_m$ be an element of maximum cost in $\OPT \setminus Y^*$, which is feasible in the residual instance and is non-empty by assumption. Let $S_1,\ldots,S_\ell$ denote the sequence of partial solutions produced by \Cref{algorithm:greedy}. Define $T$ as the smallest index in $\set{1,\dots,\ell-1}$ for which the cost of $S_i$ exceeds $1 - c(Y^*) -c(o_m)$, or $T=\ell$ if no such index exists. The next lemma characterizes the progress made prior to iteration $T$ and leverages \GenRNMT to obtain an error term  that does not depend on 
the sensitivity of the density scores.

\begin{lemma}\label[lemma]{lemma:gain_progress}
Let $Y^* \subseteq \OPT$ be a maximum-value subset of size $2$. Suppose \Cref{algorithm:greedy} is instantiated with parameters $\eps$ and $\beta$ and executed on the residual instance defined by $Y^*$. Then, with probability $1-\beta$, for every $i=0,\dots, T-1$:
\begin{multline*}
(1-c(Y^*)- c(o_m))\cdot \frac{h(u_{i+1}\mid S_i)}{c(u_{i+1})} \ge \\ 
h(\OPT \setminus Y^*) - h(S_i) - h(o_m) 
- O\Big(\frac{k\Delta_f}{\eps}\log\frac{n}{\beta}\Big).
\end{multline*}
\end{lemma}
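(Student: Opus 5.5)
The argument has three parts: a per-iteration guarantee from \GenRNMT, a comparison candidate $o^\star$ from $\OPT\setminus Y^*$ with large density, and a bound on the noise term that removes the dependence on $c(o^\star)$. First we fix the good event. In each iteration $i<T$, \Cref{algorithm:greedy} calls $\GenRNM{\eps}{\cC_i,q_i,\beta/k}{D}$. Combining \Cref{thm:genRNM_guarantee} with \Cref{lemma:sensitivity_bound} shows that, with probability $1-\beta/k$, for every $u\in\cC_i$,
\[
q_i(u_{i+1};D)\ge q_i(u;D)-\frac{4\Delta_f}{\eps\,c(u)}\log\frac{nk}{\beta}.
\]
The greedy run has at most $k$ iterations, since every $S_i$ is feasible in the residual instance. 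A union bound therefore makes this hold for all $i$ simultaneously with probability $1-\beta$, and $\log(nk/\beta)=O(\log(n/\beta))$. We work on this event from now on.

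Next we choose the comparison element. Fix $i<T$ and let $O'=\OPT\setminus(Y^*\cup\{o_m\}\cup S_i)$. By the definition of $T$, $c(S_i)\le 1-c(Y^*)-c(o_m)$. Every $o\in O'$ has $c(o)\le c(o_m)$, so $c(S_i\cup\{o\})\le 1-c(Y^*)$ and hence $O'\subseteq\cC_i$. By monotonicity and submodularity,
\[
\sum_{o\in O'}h(o\mid S_i)\ge h(O'\mid S_i)\ge h(\OPT\setminus Y^*)-h(o_m)-h(S_i).
\]
(The term $h(o_m)$ is absorbed via $h(A\cup\{o_m\})\le h(A)+h(o_m)$.) In addition, $c(O')\le c(\OPT\setminus Y^*)-c(o_m)\le 1-c(Y^*)-c(o_m)$. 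By averaging, there exists $o^\star\in O'$ with
\[
(1-c(Y^*)-c(o_m))\,\frac{h(o^\star\mid S_i)}{c(o^\star)}\ge h(\OPT\setminus Y^*)-h(S_i)-h(o_m).
\]
If $O'$ is empty, the right-hand side is at most $0$ up to the noise term. The left-hand side is then nonnegative by monotonicity, so the claim holds trivially.

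Finally, we handle the noise, which is the main obstacle. Plugging $o^\star$ into the selection guarantee and multiplying by $(1-c(Y^*)-c(o_m))$ leaves an error of $(1-c(Y^*)-c(o_m))\cdot\frac{4\Delta_f}{\eps c(o^\star)}\log\frac{nk}{\beta}$. This can be large when $c(o^\star)$ is small. The fix is to average using only elements of non-negligible cost. Let $O''=\{o\in O': c(o)\ge (1-c(Y^*)-c(o_m))/k\}$. The discarded elements of $O'\setminus O''$ number at most $k$, each contributes at most a density-times-cost amount, and the average density over $O''$ relative to its cost still beats the required bound. The cleanest route avoids this pruning altogether: apply the selection inequality to \emph{every} $o\in O'$, multiply by $c(o)$, and sum. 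This gives
\[
c(O')\,q_i(u_{i+1})\ge\sum_{o\in O'}h(o\mid S_i)-|O'|\frac{4\Delta_f}{\eps}\log\frac{nk}{\beta}.
\]
Since $|O'|\le k$ and $c(O')\le 1-c(Y^*)-c(o_m)$, and since $q_i(u_{i+1})$ may be assumed nonnegative (otherwise the right-hand side of the lemma is already below the noise and the claim holds trivially), the desired inequality follows with error $O(\frac{k\Delta_f}{\eps}\log\frac{n}{\beta})$.
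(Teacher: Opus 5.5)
Your final ``cleanest route'' (apply the per-candidate selection guarantee to every $o\in O'=\OPT\setminus(Y^*\cup S_i\cup\{o_m\})$, multiply by $c(o)$, sum, and absorb $h(o_m)$ via $h(A\cup\{o_m\})\le h(A)+h(o_m)$) is essentially the paper's argument, since your $O'$ is the paper's $O\setminus R_i$ with $R_i=S_i\cup\{o_m\}$, so the proof is correct. The single-$o^\star$ averaging and the $O''$ pruning are unnecessary detours, and the nonnegativity of $q_i(u_{i+1};D)$ needed in the last step follows directly from monotonicity of $h$; your ``otherwise trivial'' remark does not by itself handle a negative left-hand side.
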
 

We use \Cref{lemma:gain_progress} to show that when $S_i$ is far from optimal, each greedy step decreases the residual gap by a multiplicative factor with high probability. Unrolling this recurrence over the first $T$ iterations yields exponential decay of the residual term as a function of the accumulated cost $c(S_T)$. By the definition of $T$, the accumulated cost is sufficiently high to imply a $(1-1/e)$-approximation. 

\begin{lemma}\label[lemma]{lemma:performance_given_Y}
Suppose \Cref{algorithm:greedy} is executed on the residual instance defined by $Y^*$. Then, with probability $1-\beta$, the returned set $S_\ell$ satisfies:
$
    f(Y^* \cup S_\ell) \ge (1-\tfrac{1}{e}) \cdot f(\OPT) - O\Big(\tfrac{k \Delta_f}{\eps} \log \tfrac{n}{\beta} \Big).
$
\end{lemma}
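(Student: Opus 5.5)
We condition on the high-probability event of \Cref{lemma:gain_progress}, which holds with probability $1-\beta$. Throughout, write $h(\cdot)=f(\cdot\mid Y^*)$, $R\eqdef 1-c(Y^*)-c(o_m)$, $E\eqdef O(\frac{k\Delta_f}{\eps}\log\frac{n}{\beta})$, and $g_i\eqdef h(\OPT\setminus Y^*)-h(o_m)-E-h(S_i)$ for the residual gap.

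The first step rewrites \Cref{lemma:gain_progress}. Since $h(S_{i+1})-h(S_i)=h(u_{i+1}\mid S_i)$, the lemma gives, for $i<T$,
\[
h(S_{i+1})-h(S_i)\ge \frac{c(u_{i+1})}{R}\,g_i,
\]
or equivalently $g_{i+1}\le (1-c(u_{i+1})/R)\,g_i$. If at some index $g_i\le 0$, we have already achieved the target, and monotonicity of $h$ preserves it to the end. Otherwise, we use $1-x\le e^{-x}$ and unroll over the first $T$ iterations to get $g_T\le e^{-c(S_T)/R}g_0$. Here $g_0=h(\OPT\setminus Y^*)-h(o_m)-E$, because $h(\varnothing)=0$. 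The boundary step $i=T-1$ needs care: $c(u_T)/R$ could exceed $1$, making the factor negative. That case is harmless, since it only strengthens the bound to $g_T\le 0$.

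Next we use the definition of $T$. If $T<\ell$, then $c(S_T)>R$, so $g_T\le g_0/e$. This gives
\[
h(S_\ell)\ge h(S_T)\ge (1-\tfrac1e)\bigl(h(\OPT\setminus Y^*)-h(o_m)-E\bigr)
\]
by monotonicity. If instead $T=\ell$, then the greedy never exceeded cost $R$ before termination. But termination means no feasible element remains. Since $c(S_{\ell-1})\le R$, the element $o_m$ still fit, and in fact every element of $\OPT\setminus Y^*$ of cost at most $c(o_m)$ fit. Hence $o_m\in S_\ell$ or the process continued. I would handle this case by arguing that $S_\ell\supseteq$ the relevant elements, or by noting that the final step $i=\ell-1<T$ still obeys the recursion with $c(S_\ell)>R$. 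The details follow \citet{feldman2023practical}, and I expect this case analysis to be the main fiddly point.

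Finally, we convert back to $f$. We have $f(Y^*\cup S_\ell)=f(Y^*)+h(S_\ell)$ and $f(Y^*)+h(\OPT\setminus Y^*)=f(\OPT)$. We need $h(o_m)\le f(o_m)\le f(Y^*)/2$, which uses submodularity and the fact that $Y^*$ is a maximum-value pair in $\OPT$. The pair argument gives $f(o_m)\le \tfrac12 f(Y^*)$, more precisely via $f(\{y,o_m\})\le f(Y^*)$ for each $y\in Y^*$. Then
\[
f(Y^*\cup S_\ell)\ge f(Y^*)+(1-\tfrac1e)\bigl(f(\OPT)-f(Y^*)-\tfrac12 f(Y^*)\bigr)-E.
\]
Since $f(Y^*)\bigl(1-\tfrac32(1-\tfrac1e)\bigr)\ge 0$, this is at least $(1-\frac1e)f(\OPT)-E$, as claimed. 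The main obstacle is the bound $h(o_m)\le f(Y^*)/2$ together with the termination case $T=\ell$. I would take the precise bound from the non-private analysis of \citet{feldman2023practical}, which I assume carries over verbatim since it is deterministic.
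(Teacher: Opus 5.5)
Your skeleton is the paper's: the recursion on the gap, unrolling with $1-x\le e^{-x}$, the case $T<\ell$, and the closing algebra (which needs only $\tfrac{3}{2e}\ge\tfrac12$) are all fine. However, two steps fail as written. \textbf{The bound on $h(o_m)$.} You route it as $h(o_m)\le f(o_m)\le\tfrac12 f(Y^*)$, and the second inequality is false. Maximality of $Y^*$ only gives $f(o_m)\le f(\{y,o_m\})\le f(Y^*)$. For a counterexample, take a coverage function with $y_1,y_2,o$ covering $\{a,b,d\}$, $\{a,c,e\}$, $\{a,b,c\}$ respectively, all costs $1/3$ and $B=1$. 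Then $\OPT=\{y_1,y_2,o\}$, $Y^*=\{y_1,y_2\}$ is the unique best pair with $f(Y^*)=5$, and $o_m=o$, yet $f(o_m)=3>5/2$ (while $f(o_m\mid Y^*)=0$). The bound you actually need, $f(o_m\mid Y^*)\le\tfrac12 f(Y^*)$, is true, but it must be proved on the marginal directly, as the paper does. Write $f(\{y_i,o_m\})=f(y_i)+f(o_m\mid\{y_i\})\ge f(y_i)+f(o_m\mid Y^*)$. Summing over $i=1,2$ and using $f(\{y_i,o_m\})\le f(Y^*)$ together with $f(y_1)+f(y_2)\ge f(Y^*)$ gives $f(Y^*)\ge 2f(o_m\mid Y^*)$. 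With this, your final display holds.

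\textbf{The case $T=\ell$.} The inference ``$c(S_{\ell-1})\le R$, so $o_m$ still fit; hence $o_m\in S_\ell$ or the process continued'' is invalid. Fitting into $S_{\ell-1}$ only places $o_m$ in $\mathcal{C}_{\ell-1}$. The mechanism may select some $u_\ell\ne o_m$, and the loop can then stop exactly because $o_m$ no longer fits. Your two suggested remedies are really the two branches of the case split you need, taken on $S_\ell$ rather than $S_{\ell-1}$, but you present them as alternatives and justify neither. If $\OPT\setminus Y^*\subseteq S_\ell$, monotonicity gives $h(S_\ell)\ge h(\OPT\setminus Y^*)$ and you are done. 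Otherwise take $o\in(\OPT\setminus Y^*)\setminus S_\ell$. Termination means $c(S_\ell)+c(o)>1-c(Y^*)$, so $c(S_\ell)>1-c(Y^*)-c(o)\ge R$ because $c(o)\le c(o_m)$. Since the recursion is valid for every $i\le T-1=\ell-1$, unrolling yields $g_\ell\le e^{-c(S_\ell)/R}g_0\le g_0/e$. With these two repairs your argument coincides with the paper's proof.
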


We now derive the utility bound. Notably, the guarantee holds even when $\lvert \OPT \rvert \le 2$.
\begin{lemma}\label[lemma]{lemma:two_guess_greedy_utility}
For any $\eps, \delta>0$, $ \beta\in(0,1)$, \Cref{alg:two_guess_greedy} outputs a feasible $S^*$ such that, with probability $1-\beta$,  
\[f(S^*) \ge  (1-\tfrac{1}{e})\cdot  f(\OPT) - O\Big(\tfrac{\Delta k^{1.5}}{\eps}
\sqrt{\log\tfrac{n}{\delta\beta}}
 \log\tfrac{n}{\beta}\Big).\]
\end{lemma}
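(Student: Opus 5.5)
The plan is to combine three ingredients via a union bound, each with failure probability $O(\beta)$: (a) some iteration for the "right" guess $Y$ is actually executed; (b) the residual greedy run on that guess succeeds as in \Cref{lemma:performance_given_Y}; and (c) the Laplace noise on the selected candidates is small, so the returned $S^*$ is nearly as good as the best executed candidate. Feasibility is immediate. Every $S'_{Y,j}$ is $Y$ together with the output of \Cref{algorithm:greedy} on the residual budget $B-c(Y)$, and this greedy run only adds elements that keep the cost within budget. The fallback output $\varnothing$ is trivially feasible.

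\emph{Step 1: the right guess is executed.} Let $Y^*$ be a maximum-value subset of $\OPT$ of size two if $|\OPT|>2$, and $Y^*=\OPT$ otherwise. Given $p$, the number $N$ of executed iterations for $Y^*$ is $\mathrm{Bin}(\lceil 3/\beta\rceil,p)$. Averaging over $p$ uniform,
\[
\Pr[N=0]=\int_0^1(1-p)^{m}\,dp=\tfrac{1}{m+1}\le \beta/3,
\]
where $m=\lceil 3/\beta\rceil$. In the case $|\OPT|\le2$, the set $S'_{Y^*,j}=\OPT$ is then a candidate, which yields a $1$-approximation before noise. In the case $|\OPT|>2$, fix one executed index $j$, say the first. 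The run of \Cref{algorithm:greedy} there uses parameters $\beta/3$ and $\eps_0/6$. Its randomness is independent of the event $\{j \text{ executed}\}$, so conditioning on that event does not alter its distribution. By \Cref{lemma:performance_given_Y}, with probability $1-\beta/3$,
\[
f(S'_{Y^*,j})\ge (1-\tfrac1e)f(\OPT)-O\Big(\tfrac{k\Delta}{\eps_0}\log\tfrac{n}{\beta}\Big).
\]
Substituting $\eps_0=\Theta\big(\eps/\sqrt{k\log\tfrac{n}{\delta\beta}}\big)$ gives exactly the claimed error term $O\big(\tfrac{\Delta k^{1.5}}{\eps}\sqrt{\log\tfrac{n}{\delta\beta}}\log\tfrac{n}{\beta}\big)$.

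\emph{Step 2: the noisy argmax costs little.} Here $|\Omega|\le m\cdot O(n^2)=O(n^2/\beta)$. Each Laplace variable with scale $6\Delta/\eps$ satisfies $|Z|\le (6\Delta/\eps)\log(3|\Omega|/\beta)$ with probability $1-\beta/(3|\Omega|)$. By a union bound, all of them are this small with probability $1-\beta/3$. On this event,
\[
f(S^*)\ge z_{Y^*,j^*}-\text{err}\ge z_{Y^*,j}-\text{err}\ge f(S'_{Y^*,j})-2\,\text{err},
\]
where $\text{err}=O\big(\tfrac{\Delta}{\eps}\log\tfrac{n}{\beta}\big)$. This is dominated by the main term. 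A subtlety is that $|\Omega|$ is random, so I would apply the union bound over the fixed index set of all potential pairs $(Y,j)$, which has size $O(n^2/\beta)$, with the noise variables defined for each of them.

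A final union bound over the three failure events, each of probability at most $\beta/3$, gives total failure probability at most $\beta$. The main point of care is Step 1: one must verify that averaging over $p$ gives the $1/(m+1)$ bound, and that the skip coins are independent of the greedy run. Only then can \Cref{lemma:performance_given_Y} be invoked for the selected iteration without selection bias.
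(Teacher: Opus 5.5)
Your proposal is correct and follows the paper's proof essentially step for step. It uses the same three events, each failing with probability at most $\beta/3$: the guess $Y^*$ is executed at least once, because the binomial count is uniform on $\{0,\dots,\lceil 3/\beta\rceil\}$ when $p$ is uniform; the first such run satisfies \Cref{lemma:performance_given_Y} with parameters $\beta/3,\eps_0/6$; and all Laplace noises are $O(\frac{\Delta}{\eps}\log\frac{n}{\beta})$. These are then combined by a union bound, followed by substitution of $\eps_0$.

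One small correction: when $|\OPT|=2$ the algorithm still runs the residual greedy, so the candidate is $S_{Y^*,j}\cup\OPT\supseteq\OPT$ rather than $\OPT$ itself. The $1$-approximation before noise then follows from monotonicity, exactly as the paper argues.
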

\begin{proof}[Proof sketch]
Consider an event $\cE$ where: \textit{(i)} \Cref{algorithm:greedy} is invoked for the guess $Y^*$,  \textit{(ii)} the event in \Cref{lemma:performance_given_Y} occurs, and (iii) all Laplace noises are bounded by $O(\frac{\Delta_f}{\eps_0} \log \frac{n}{\beta})$. We show $\Pr[\cE] \ge 1-\beta$. Conditioned on $\cE$, at least one candidate achieves a $(1-1/e)$-approximation up to a bounded additive error $O(\frac{k\Delta_f}{\eps_0} \log \frac{n}{\beta})$, and the final selection step (Line~\ref{alg2:get_best}) preserves this guarantee. Substituting $\eps_0$ as set in Line~\ref{alg:two_guess_seteps} yields the  bound.
\end{proof}

So far, we have presented the key ideas behind the proof of \Cref{thm:two_guess_greedy_intro}. We conclude this section with a brief overview of our approach for \Cref{thm:greedyplus_intro}, deferring the full algorithmic details and analysis to \Cref{appendix:greedyplus}. 

\paratitle{A Faster Algorithm} 
We propose \DPDGP (\Cref{algorithm:greedyplus}), a variant of \Cref{algorithm:greedy} that is executed once with the objective $f_D$, and uses \RNMT to select the best solution among: \textit{(i)} the output of \Cref{algorithm:greedy}, \textit{(ii)} all feasible single-element extensions of partial solutions observed throughout the execution of \Cref{algorithm:greedy}, and \textit{(iii)} all singletons. The utility analysis follows an argument analogous to \Cref{lemma:gain_progress}, but considers the element $o_m \in \OPT$ of maximal cost instead of $o_m\in \OPT \setminus Y^*$. This yields a progress inequality for early iterations in which the total cost does not exceed $1-c(o_m)$. Specifically, we show:
\begin{align*}
    \frac{c(u_{i+1})}{1-c(o_m)} \cdot \paren{\frac{f(\OPT)}{2} - kE} \le f(S_{i+1}) - f(S_i),
\end{align*}
where $E = O\paren{\frac{\Delta_f}{\eps_0} \log \frac{n}{\beta}}$. Summing this across all iterations leads to the bound stated in \Cref{thm:greedyplus_intro}.

\section{Non-Monotone Objective Functions}\label{sec:non-monotone}
In this section, we present our algorithm for the non-monotone setting and outline the proof for \Cref{thm:samplegreedy}. The result follows by combining the complexity, privacy, and utility guarantees established in \Cref{lemma:subsample_complexity,lemma:sample_privacy_proof_main,lemma:sample_utility_proof_main}. All omitted proofs are provided in \Cref{sec:appendix_nonmonotone}.

\Cref{algorithm:subsample_greedy} is based on the non-private \texttt{SmkRan} algorithm by \citet{han2021approximation}. \texttt{SmkRan} maintains two solution sequences, $S_j$ and $S_j^*$. In each round $j$, it selects an element $u_j^*$ with maximum marginal gain and an element $u_{j+1}$ with maximum density. $S_j^*$ is set to $S_j \cup \{u_j^*\}$ if $u_j^*$ has positive marginal gain; otherwise, $S_j^* = S_j$. Similarly, $u_{j+1}$ is added to $S_j$ with probability $1/2$ only if its marginal gain is positive. The algorithm returns the best candidate among the final $S_j$ and all encountered $S_j^*$.

Our adaptation departs from this approach in several critical aspects. First, we perform only \emph{density}-based selections using \GenRNMT. The selected element is then added to the current solution with probability $1/2$; otherwise, it is discarded. Second, instead of testing for positive marginal gains, we use \RNMT to select the best candidate among all encountered partial solutions and their feasible single-element extensions. Critically, we extend the analysis of \citet{han2021approximation} to accommodate a controlled slack for slightly negative values, and prove that elements selected in \emph{early} iterations provide sufficiently high marginal contributions with high probability. Finally, we derive a concentration bound on the total number of iterations to overcome the composition bottleneck described in \Cref{section:challenges_results_nonmonotone}.

\begin{algorithm}
\caption{ \DPSDG (\texttt{DP-Subsample-Density-Greedy})}
\label{algorithm:subsample_greedy}
\setcounter{AlgoLine}{0}

\KwIn{
Dataset $D$;
Submodular function $f_D:2^\cN\to\R_+$;
Budget $B$;
Parameters  $\beta,\eps,\delta\in (0,1]$.}

Let $\eps_0 \gets O\Big(\eps / \sqrt{(k + \log(1/\delta)) \log(1/\delta)}\Big)
$ \plabel{alg:sample_seteps} \tcp*[l]{{\scriptsize Exact expression in the proof of \Cref{lemma:sample_privacy_proof}}}

Let $S_0\gets\varnothing$ and $i\gets 0$. \plabel{alg:sample_set_init}

\While{$\exists v\in \cN\setminus \set{u_1,\dots,u_i}$ such that $c(S_i+v)\le B$ \plabel{alg:sample_line2}}{
    Let $\cC_i$ be the set of elements satisfying the condition stated in Line \ref{alg:sample_line2}.

    Define $q_i(u;D)=\frac{f_D(v\mid S_i)}{c(v)}$ for all $v\in \cC_i$.

     Compute $u_{i+1}\gets \GenRNM{\varepsilon_0}{\cC_i, q_i, \beta/(2n)}{D}$\plabel{alg:sample_GEM}

     \WithProb{$1/2$}{
        $S_{i+1}\gets S_i\cup\set{u_{i+1}}$ .\plabel{alg:sample_random_inclusion}
     }
     \lElse{
         $S_{i+1}\gets S_i$
     }
        
    Update $i\gets i+1$. \plabel{alg:sample_last_inloop}

}
 Let  $\cS \gets \set{S_{i'}\mid 0\le i'\le i } \cup \set{S_{i'}\cup\set{v} \mid 0\le i'\le i,\ v \in \cN,\ c(S_{i'}+v)\le B }$\plabel{alg:sample_defineS}
 
 Let $S^* \gets \RNM{\eps/2}{\cS, f}{D}$ \plabel{alg:sample_lastEM}
 
\Return $S^*$
\end{algorithm}

The query complexity of \Cref{algorithm:subsample_greedy} is given by the following lemma.
\begin{lemma}\label[lemma]{lemma:subsample_complexity}
    \Cref{algorithm:subsample_greedy} makes $O(nk)$ oracle queries.
\end{lemma}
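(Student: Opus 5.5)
The plan is to count oracle queries by counting the \emph{distinct} sets on which $f_D$ is ever evaluated. Repeated evaluations of the same set are cached, so they cost nothing. The key observation is that the solution sequence $S_0, S_1, \dots$ changes only when an element is actually added in Line~\ref{alg:sample_random_inclusion}. Discarded selections leave $S_{i+1}=S_i$.

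\textbf{Step 1: the number of distinct partial solutions is at most $k+1$.} Every $S_i$ is feasible, because an element $u_{i+1}$ is only ever drawn from $\cC_i$, so $c(S_i\cup\{u_{i+1}\})\le B$. Hence $|S_i|\le k$. The sequence is nested and grows by at most one element per iteration. It therefore takes at most $k+1$ distinct values, even though the loop may run for up to $n$ iterations.

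\textbf{Step 2: each iteration queries only sets of the form $S_i$ or $S_i\cup\{v\}$.}
\begin{itemize}
\item The loop terminates after at most $n$ iterations, since each iteration removes a fresh element $u_{i+1}\notin\{u_1,\dots,u_i\}$ from future candidacy.
\item In iteration $i$, the scores $q_i(v;D)=f_D(v\mid S_i)/c(v)$ for $v\in\cC_i$ require only $f_D(S_i)$ and $f_D(S_i\cup\{v\})$.
\item Feasibility tests depend only on the public costs, so they need no queries.
\item The invocation of \GenRNMT needs no further queries. Its auxiliary scores are built from the values $q_i(v;D)$ together with the sensitivities $2\Delta_f/c(v)$. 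These sensitivities are data-independent, by the analogue of \Cref{lemma:sensitivity_bound}.
\end{itemize}
When $S_{i+1}=S_i$, all of these values coincide with those of the previous iteration. Fresh queries are therefore needed only when $S_i$ is a new set. Combined with Step~1, this gives at most $(k+1)(n+1)=O(nk)$ queries over the whole loop.

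\textbf{Step 3: the final selection adds no new queries.} The family $\cS$ in Line~\ref{alg:sample_defineS} consists of the sets $S_{i'}$ and their feasible extensions $S_{i'}\cup\{v\}$. As sets, it contains at most $(k+1)(n+1)$ distinct members, and every one of their values was already computed in Step~2. The \RNMT call in Line~\ref{alg:sample_lastEM} therefore needs no further queries, or at most $O(nk)$ if caching is ignored. Adding the two counts gives $O(nk)$.

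The main obstacle is Step~1 together with the caching argument: a naive count of $n$ iterations times $n$ candidates gives $O(n^2)$. The fix is to charge queries to distinct values of $S_i$ rather than to iterations, and to confirm that \GenRNMT requires no data-dependent oracle calls beyond the scores themselves.
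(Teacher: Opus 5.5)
Your proposal is correct and follows essentially the same route as the paper. The loop's queries are charged to the at most $k+1$ distinct prefixes $S_i$, with $O(n)$ marginal gains per prefix, since discarded selections leave the prefix unchanged. The final selection is then bounded by the $O(nk)$ distinct sets in $\cS$.

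One small inaccuracy is in Step~3: not every value needed for $\cS$ was computed inside the loop. For example, take $S_{i'}\cup\{u_j\}$ where $u_j$ was discarded before the prefix $S_{i'}$ first appeared. Then $u_j\notin\cC_{i'}$, so this extension was never scored. Your fallback count of at most $(k+1)(n+1)$ distinct sets in $\cS$ already gives the $O(nk)$ bound, so the conclusion is unaffected.
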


\paratitle{Privacy Analysis}
The privacy guarantee relies on a concentration bound on the total number of iterations. We observe that the number of iterations between two consecutive element inclusions is stochastically dominated by a geometric random variable with parameter $1/2$. By extending this coupling to the entire execution, we establish that the total iteration count is dominated by a sum of independent geometric random variables, which concentrates strongly around $O(k)$.
\begin{lemma}\label[lemma]{lemma:runtime_bound}
    With probability at least $1-\delta$, \Cref{algorithm:subsample_greedy} executes the while loop (Lines~\ref{alg:sample_line2}--\ref{alg:sample_last_inloop}) at most $O(k+\log(1/\delta))$ times.
\end{lemma}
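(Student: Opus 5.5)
Each iteration of the while loop either adds its selected element $u_{i+1}$ to the solution (with probability $1/2$) or discards it. The plan is to bound separately the number of inclusions and the number of discards, and then apply a tail bound for a sum of geometric random variables.

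\textbf{Inclusions.} Every included element enters $S_i$, and the loop condition guarantees $c(S_i\cup\{u_{i+1}\})\le B$. So every $S_i$ is feasible, and therefore $|S_i|\le k$ for all $i$. This means at most $k$ inclusions can occur during the whole execution.

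\textbf{Coupling.} Before the loop starts, draw an infinite sequence of independent fair coins $b_1,b_2,\dots$, and let iteration $i$ use $b_i$ for the inclusion decision in Line~\ref{alg:sample_random_inclusion}. This does not change the algorithm's distribution. The decision at iteration $i$ is independent of everything that determined the candidate $u_i$, including the \GenRNMT noise and the earlier coins. Let $G_1,G_2,\dots$ be the gaps between consecutive successes (heads) in this sequence; they are i.i.d.\ $\mathrm{Geom}(1/2)$ on $\{1,2,\dots\}$. Suppose the loop runs more than $N$ iterations. Since each iteration consumes one coin, at most $k$ heads can appear among the first $N$ coins. Hence the $(k+1)$-th head has index larger than $N$, i.e.\ $\sum_{j=1}^{k+1}G_j> N$. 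Equivalently, with $N'=N-(k+1)$ we could say at most $k$ heads in $N$ fair flips, i.e.\ $\mathrm{Bin}(N,1/2)\le k$. This gives
\[
\Pr[\text{iterations}>N]\le \Pr[\mathrm{Bin}(N,1/2)\le k].
\]
The loop also halts deterministically after at most $n$ iterations, because the candidates exclude $u_1,\dots,u_i$. That fact is not needed here.

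\textbf{Tail bound.} Take $N=4k+c\log(1/\delta)$ for a suitable constant $c$. The mean is $\mu=N/2\ge 2k$, so the event $\mathrm{Bin}(N,1/2)\le k$ is a deviation $\le \mu/2$ below the mean. The Chernoff bound gives probability at most $e^{-\mu/8}\le e^{-N/16}$. With $c=16$ this is at most $\delta$. So with probability $1-\delta$ the loop runs at most $4k+16\log(1/\delta)=O(k+\log(1/\delta))$ times.

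\textbf{Main obstacle.} The delicate step is justifying the coupling rigorously. The number of iterations is adaptive: the candidate sets and the termination time depend on earlier coins and on the mechanism's randomness. The thing to check is that the event ``more than $N$ iterations'' is contained in ``at most $k$ heads among the first $N$ pre-drawn coins''. This is a pathwise statement, so independence of the coins from the rest of the run is only needed to say the pre-drawn coins are i.i.d.\ fair. A naive per-iteration argument would have to address this dependence explicitly, which is why stating it as stochastic dominance by a sum of independent geometrics is cleanest.
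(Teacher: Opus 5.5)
Your argument is correct, and its core matches the paper's. Each inclusion adds a new element to a set that stays feasible, so at most $k$ of the consumed inclusion coins can come up heads. Coupling those coins with an i.i.d.\ fair sequence then bounds the iteration count by a waiting time for successes. Where you differ is the concentration step and how the coupling is formalized. The paper defines per-phase counts $N_j\le X_j$ with independent $X_j\sim\mathrm{Geom}(1/2)$. It then applies Janson's tail bound for sums of geometrics, together with $x-\log(1+x)\ge x^2/(2(1+x))$ and a quadratic solve, to reach the explicit threshold $(2+\sqrt{2})k+(4+\sqrt{2})\log(1/\delta)$. You instead pre-draw the coins and observe, pathwise, that more than $N$ iterations forces at most $k$ heads among $b_1,\dots,b_N$; this is the duality $\{\sum_{j\le m}G_j>N\}=\{\mathrm{Bin}(N,1/2)<m\}$. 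You then finish with the standard multiplicative Chernoff lower tail. Your route is more elementary. Because the containment is pathwise, it also handles the adaptive stopping time and the last phase more transparently than the paper's phase-by-phase coupling; that phase can end by termination rather than acceptance. What you give up is constants: $4k+16\log(1/\delta)$ versus roughly $3.41k+5.41\log(1/\delta)$. That is immaterial for the $O(k+\log(1/\delta))$ statement. However, the paper feeds its explicit $L$ into the choice of $\eps_0$ in the privacy proof, so your threshold would give a somewhat smaller per-step privacy parameter. Small slips: $N$ should be rounded up to an integer, and the event $\mathrm{Bin}(N,1/2)\le k$ is a deviation of \emph{at least} $\mu/2$ below the mean, not at most. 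The clause introducing $N'=N-(k+1)$ is never used.
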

Intuitively, the probability that the number of iterations, and hence the number of required composition steps, exceed $O(k+\log(1/\delta))$ can be absorbed into the privacy failure probability $\delta$. Combining this with the composition theorem (\Cref{thm:composition}) yields our privacy result.
\begin{lemma}\label[lemma]{lemma:sample_privacy_proof_main}
   For any $\eps,\delta\in (0,1]$, \Cref{algorithm:subsample_greedy} is $(\eps,\delta)$-DP.
\end{lemma}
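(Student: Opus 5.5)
We prove \Cref{lemma:sample_privacy_proof_main} by handling the loop and the final selection separately, and then composing them. The loop is split into a privacy-relevant part, which is the sequence of \GenRNMT calls in Line~\ref{alg:sample_GEM}, and a data-independent part. The coin flips in Line~\ref{alg:sample_random_inclusion} do not depend on $D$, so conditioned on any fixed coin sequence the loop is an adaptive composition of $\eps_0$-DP mechanisms. By \Cref{thm:genRNM_guarantee}, each call is $\eps_0$-DP, whatever adaptively chosen candidate set $\cC_i$ and partial solution $S_i$ it receives. The final step, Line~\ref{alg:sample_lastEM}, is \RNMT applied to $f$, whose sensitivity is $\Delta_f$; by \Cref{thm:RNM_main} it is $(\eps/2)$-DP. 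Basic composition then reduces the claim to showing that the loop is $(\eps/2,\delta)$-DP.

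The obstacle is that the number of iterations is random and can be as large as $n$, so composition cannot be applied to it directly. To fix this, let $K = C(k+\log(1/\delta))$ be the threshold from \Cref{lemma:runtime_bound}, with $C$ taken from that lemma and applied with failure probability $\delta/2$. We then define a truncated mechanism $\cM'$: it runs the loop but stops and outputs a fixed symbol $\bot$ if the loop would exceed $K$ iterations. The transcript of $\cM'$ is an adaptive composition of at most $K$ $\eps_0$-DP mechanisms, possibly followed by a post-processing step. The remaining iterations can be padded with trivial mechanisms so that there are exactly $K$ of them. The coins are drawn independently of $D$, so they can be treated as part of the adaptive choice, or equivalently we average over them, and averaging preserves the $(\eps,\delta)$ inequality. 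By advanced composition (\Cref{thm:composition}) with parameter $\delta/4$, $\cM'$ is $(\eps/2,\delta/4)$-DP, provided $\eps_0 \le \frac{\eps/2}{2\sqrt{2K\log(4/\delta)}}$. This is the exact expression we fix for Line~\ref{alg:sample_seteps}, and it is consistent with the stated order $\eps/\sqrt{(k+\log(1/\delta))\log(1/\delta)}$.

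Next we couple the real loop $\cM$ with $\cM'$, using the same randomness. On neighboring inputs $D$ and $D'$, the two mechanisms differ only on the event $E$ that the loop runs more than $K$ times. A key point is that \Cref{lemma:runtime_bound} holds for every dataset: its argument only uses the coins and the knapsack structure, since any $k+1$ inclusions would violate feasibility. Therefore $\Pr_D[E] \le \delta/2$ and $\Pr_{D'}[E] \le \delta/2$. For any event $O$,
\[
\Pr[\cM(D)\in O] \le \Pr[\cM'(D)\in O] + \tfrac{\delta}{2} \le e^{\eps/2}\Pr[\cM'(D')\in O] + \tfrac{\delta}{4}+\tfrac{\delta}{2},
\]
and in turn $\Pr[\cM'(D')\in O] \le \Pr[\cM(D')\in O] + \Pr_{D'}[E]$. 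The term $e^{\eps/2}\cdot\delta/2$ that appears here is at most $\delta$ only up to constants. We absorb this by running \Cref{lemma:runtime_bound} with failure probability $\delta/8$ and adjusting constants, which changes $K$ only by a constant factor.

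Finally, the output of Line~\ref{alg:sample_lastEM} depends on $D$ through the candidate collection $\cS$, which is a function of the loop transcript, and through a fresh $(\eps/2)$-DP selection. Basic composition of the $(\eps/2,\delta)$-DP loop with this selection gives $(\eps,\delta)$-DP overall.
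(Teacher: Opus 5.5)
Your proof is correct and follows the paper's route. You truncate the loop at the high-probability iteration bound of \Cref{lemma:runtime_bound}, which is dataset-independent since it depends only on the coins and the cardinality bound $k$. You then apply advanced composition to the truncated transcript of $\varepsilon_0$-DP generalized selection calls, charge the overflow probability to $\delta$, and finish with basic composition against the final $(\varepsilon/2)$-DP Report Noisy Max step. Where the paper just invokes ``a standard union bound,'' you write out the coupling explicitly.

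One minor remark: the extra $e^{\varepsilon/2}\cdot\delta/2$ term, which you absorb by shrinking the failure probability, is unnecessary. Using $\Pr[\mathcal{M}'(D')\in O\setminus\{\bot\}]\le\Pr[\mathcal{M}(D')\in O]$ avoids it, which is why the paper's split of $\delta/2+\delta/2$ already suffices.
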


\paratitle{Utility Analysis}
Our approach extends the analysis of \citet{han2021approximation}. We first define a high-probability event $\cE$ under which the DP-induced error remains bounded. The formal definition of $\cE$ is deferred to the appendix, where we prove that $\Pr[\cE] \ge 1 - \beta$. In particular, the occurrence of $\cE$ guarantees that:
\begin{itemize}
\item[\textit{(i)}] 
For every $v \in \cC_i$, the element $u_{i+1}$  obtained in Line~\ref{alg:sample_GEM} satisfies $\frac{f_D(u_{i+1}|S_i)}{c(u_{i+1})} \ge \frac{f_D(v|S_i)}{c(v)} - O\left(\frac{\Delta_f}{c(v)\eps_0}\log\frac{n}{\beta}\right),$
\item[\textit{(ii)}] the set $S^*$ selected in Line~\ref{alg:sample_lastEM} satisfies $f(S^*) \ge \max_{S\in \cS} f(S) - O(\frac{\Delta_f}{\eps}\log\frac{n}{\beta})$.
\end{itemize}
We analyze the expected utility conditioned on the occurrence of $\cE$, starting by introducing notation. Let $o_m$ be an element of $\OPT$ with maximal cost. Let $S_0, \dots, S_r$ be the partial solutions produced by \Cref{algorithm:subsample_greedy}, where $r$ is the earliest iteration such that either the algorithm terminates, or there is no element $u \in \OPT \setminus S_r$ such that $S_r \cup \{u\}$ is feasible and $f(u \mid S_r) = \Omega(\frac{\Delta_f}{\eps} \log \frac{n}{\beta})$. Additionally, let $T$ be the smallest index $i < r$ such that $c(S_i \cup \{u_{i+1}\}) > c(\OPT \setminus \{o_m\})$, and set $T = r$ if no such index exists. We consider two disjoint subsets of $\OPT$: $O_{\le T}$ contains elements selected as $u_i$ for $i\le T$, and $O_{>T}$ contains elements not in $O_{\le T}$ with marginal contribution to $S_T$ at least \(\Omega(\frac{\Delta_f}{\eps_0} \log\frac{n}{\beta})\).

The following lemma bounds the total marginal contribution from elements in $\OPT \setminus \{o_m\}$ in terms of marginal gains from selected elements, discarded $\OPT$ elements, and a DP-induced error term. Crucially, it leverages \Cref{thm:genRNM_guarantee} to ensure this error remains independent of the sensitivities of the density scores. The analysis relies on a delicate fractional mapping from \citet{han2021approximation}, which allocates the cost of $O_{>T} \setminus \{o_m\}$ elements onto elements of the algorithm’s partial solution. This mapping allows the proof to upper-bound the total marginal gain of those optimal elements in terms of selected elements.
\begin{lemma}\label[lemma]{lemma:gain_bound}
    For each $u\in V$, let $X_u$ be an indicator for the event that $u\in O_{\le T}\setminus (S_T\cup\set{o_m})$ or $u\in S_T \setminus (\OPT \setminus\set{o_m})$. Then,
    \begin{align*}
f(S_T \cup \OPT) &\le f(S_T \cup \{o_m\}) + \sum_{u\in\cN} X_u f(u\mid S_u) \\
&\quad + f(S^*)-f(S_T) + O\Big(\tfrac{k\Delta_f}{\eps_0}\log \tfrac{n}{\beta}\Big).
\end{align*}
\end{lemma}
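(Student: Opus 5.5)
By submodularity, I would first peel off the part of $\OPT$ that the algorithm already "holds". Write $O' = \OPT\setminus\{o_m\}$ and $E=O(\frac{\Delta_f}{\eps_0}\log\frac{n}{\beta})$. Since $S_T\cup\{o_m\}\subseteq S_T\cup\OPT$, submodularity gives
\[
f(S_T\cup\OPT)-f(S_T\cup\{o_m\})\le \sum_{v\in O'\setminus S_T} f(v\mid S_T).
\]
I would split $O'\setminus S_T$ into three groups and bound each separately.
\begin{enumerate}[label=(\alph*)]
\item The elements of $O_{\le T}\setminus(S_T\cup\{o_m\})$, i.e.\ optimal elements that were selected as some $u_{i+1}$ but discarded. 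For such $v$ we have $S_v=S_i\subseteq S_T$, so $f(v\mid S_T)\le f(v\mid S_v)$. These are exactly the first kind of terms with $X_v=1$.
\item The elements of $O'$ that lie in neither $O_{\le T}$ nor $O_{>T}$. By the definition of $O_{>T}$, each has $f(v\mid S_T)\le E$. Since $|\OPT|\le k$, together they contribute at most $kE$, which is absorbed in the error term.
\item The elements of $O_{>T}\setminus\{o_m\}$. These are the genuinely hard ones, and they will be charged to the algorithm's selections.
\end{enumerate}

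For group (c) I would reproduce the fractional mapping of \citet{han2021approximation}. The first step is to check that every $v\in O_{>T}\setminus\{o_m\}$ is a candidate, $v\in\cC_i$, for every $i\le T$ in which I charge it. This should follow from the definition of $T$: before $T$ the running cost plus $u_{i+1}$ stays below $c(O')$, and $v$ was never selected. Property (i) of $\cE$ (the guarantee of \Cref{thm:genRNM_guarantee}) then gives, for each such $i$,
\[
f(v\mid S_T)\le f(v\mid S_i)\le c(v)\cdot\frac{f(u_{i+1}\mid S_i)}{c(u_{i+1})}+E .
\]
The important point is that the error is $E$, not $E/c(v)$ times a large cost, because the slack in the density is $E/c(v)$ and it gets multiplied back by $c(v)$.

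Next I would build weights $w_{v,i}\ge 0$ with $\sum_i w_{v,i}=1$ for each $v$ and $\sum_v w_{v,i}c(v)\le c(u_{i+1})$ for each $i$. Averaging the displayed inequality over $i$ with these weights yields
\[
\sum_v f(v\mid S_T)\le \sum_i f(u_{i+1}\mid S_i)+kE .
\]
The iterations receiving weight must be those where $u_{i+1}\in S_T\setminus O'$ (an added non-optimal element) or $u_{i+1}\in O_{\le T}$ (a discarded optimal element); these are exactly the iterations carrying $X_u=1$. Iterations where a non-optimal element was discarded leave $S_i$ unchanged and can simply receive no weight. The overflowing step $u_{T+1}$, whose cost crosses $c(O')$, also absorbs weight. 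Its gain is charged through property (ii) of $\cE$: $S_T\cup\{u_{T+1}\}\in\cS$, so $f(u_{T+1}\mid S_T)\le f(S^*)-f(S_T)+O(\frac{\Delta_f}{\eps}\log\frac n\beta)$. This is where the term $f(S^*)-f(S_T)$ enters.

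The main obstacle is showing that the mapping exists, namely the cost-capacity inequality. I need the total cost of the weight-receiving iterations, plus $c(u_{T+1})$, to be at least $c(O_{>T}\setminus\{o_m\})$. This requires that added optimal elements, which are excluded from the sum, reduce the demand exactly as much as they reduce capacity: each lies in $O'\cap S_T$ and so is also absent from $O_{>T}$. In the case $T=r$ with no overflow, I would instead use the definition of $r$. Any remaining $v\in O_{>T}$ that is infeasible at $S_r$ would be handled by the same overflow or feasibility argument; otherwise $v$ has $f(v\mid S_r)\le E$. One must also check that negative gains $f(u_{i+1}\mid S_i)$ cause no harm. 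The displayed inequality holds regardless of sign, and the weights are nonnegative, so I expect this to go through. I would verify the cost accounting first, following Han et al.'s ordering of $O_{>T}$ by density.
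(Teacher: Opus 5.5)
Your overall structure matches the paper's proof. That includes the three-way submodularity split, the check that each $v\in O_{>T}\setminus\{o_m\}$ lies in $\cC_i$ for all $i\le T$, the per-element bound $f(v\mid S_T)\le c(v)\,f(u_{i+1}\mid S_i)/c(u_{i+1})+E$ from property (i) of $\cE$, and charging the overflow element through $S_T\cup\{u_{T+1}\}\in\cS$. There are, however, two genuine gaps.

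\textbf{Double counting in group (c).} In group (a) you already spend the term $f(v\mid S_v)$ of every discarded optimal $v\in O_{\le T}\setminus(S_T\cup\{o_m\})$. You then let the mapping for group (c) also put weight on those same iterations, and you identify the result with ``exactly the iterations carrying $X_u=1$''. Adding (a) and (c) therefore counts the discarded optimal terms twice. Since these terms are nonnegative, this is weaker than the lemma. It also breaks the next step, where $\mathbb{E}[\sum_u X_u f(u\mid S_u)]=\mathbb{E}[f(S_T)-f(\varnothing)]$ absorbs each term only once.

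The fix is to map only into $[S_T\setminus O']\cup\{u_{T+1}\}$, as in Han et al.'s Lemma 1; no extra capacity is needed. The sets $O_{>T}\setminus\{o_m\}$ and $S_T\cap O'$ are disjoint subsets of $O'$, and $c(S_T)+c(u_{T+1})>c(O')$ when $T<r$. Hence $c(O_{>T}\setminus\{o_m\})<c(S_T\setminus O')+c(u_{T+1})$. Your own remark about added optimal elements is exactly this computation, so discarded optimal elements should simply be left out of the image.

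\textbf{The sign of the gains.} Your claim that negative gains cause no harm is incorrect. Averaging gives $\sum_i \alpha_i f(u_{i+1}\mid S_i)$ with $\alpha_i=\sum_v w_{v,i}c(v)/c(u_{i+1})\in[0,1]$. Replacing this by $\sum_i f(u_{i+1}\mid S_i)$ requires $f(u_{i+1}\mid S_i)\ge 0$. So does replacing it by the full sum over all iterations with $X_u=1$, including those that received no weight. DP noise does not guarantee this nonnegativity.

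This is exactly why $r$ enters the definition of $T$. For $i<r$ there is a feasible optimal element with gain above $E$. Reading the definition of $r$ over not-yet-selected elements, so that this element lies in $\cC_i$, property (i) forces $f(u_{i+1}\mid S_i)\ge 0$ for all $i<r$. This covers every selection in $S_T$ and, when $T<r$, also $u_{T+1}$.

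The same fact is what your treatment of $T=r$ is missing. In that case $c(S_T)\le c(O')$, so every element of $O_{>T}$ would be a feasible unselected candidate with gain above $E$. This contradicts either the definition of $r$ or termination, so $O_{>T}=\emptyset$ outright and no ``infeasible at $S_r$'' case arises. But the lemma's right-hand side still contains $\sum_{u\in S_T\setminus O'} f(u\mid S_u)+f(S^*)-f(S_T)$. You may only insert this quantity because it is at least $-E$, which follows from the nonnegativity above together with property (ii) applied to $S_T\in\cS$.
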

The utility guarantee of \Cref{algorithm:subsample_greedy} is stated next.
\begin{lemma}\label[lemma]{lemma:sample_utility_proof_main}
For any $\eps, \delta>0$, $ \beta\in(0,1)$, \Cref{algorithm:subsample_greedy} outputs a feasible set $S^*$ such that, with probability $1-\beta$,
\[
    \Exp[f(S^*)] \ge \tfrac{1}{4}f(\OPT) - O\Big(\tfrac{\Delta k}{\varepsilon} \sqrt{(k + \log \tfrac{1}{\delta}) \log \tfrac{1}{\delta}} \,\log \tfrac{n}{\beta}\Big).
\]
\end{lemma}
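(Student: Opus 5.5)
We condition throughout on the event $\cE$, which holds with probability at least $1-\beta$; all expectations below are over the internal coin flips of Line~\ref{alg:sample_random_inclusion}, which are independent of $\cE$'s noise guarantees in the sense required by \Cref{lemma:gain_bound}. The plan mirrors the analysis of \citet{han2021approximation}. We combine \Cref{lemma:gain_bound} with two facts. The first is a lower bound on $\Exp[f(S_T\cup\OPT)]$. The second is an expected upper bound on the correction term $\sum_u X_u f(u\mid S_u)$ in terms of $\Exp[f(S_T)]$.

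\textbf{Step 1 (random inclusion gives a large union).} Each selected element is included independently with probability $1/2$, so every element lies in $S_T$ with probability at most $1/2$. Applying the standard sampling lemma of Buchbinder et al.\ to the non-negative submodular $g(A)=f(A\cup\OPT)$, we get $\Exp[f(S_T\cup\OPT)]\ge \tfrac12 f(\OPT)$.

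\textbf{Step 2 (charging the $X_u$ terms).} Consider an element $u_{i+1}$ selected at iteration $i<T$. Conditioned on the history, it is added with probability $1/2$ and discarded otherwise, and it has the same marginal $f(u_{i+1}\mid S_i)$ in both outcomes. If $u_{i+1}\in S_T\setminus\OPT$, the gain is counted in $f(S_T)$. If $u_{i+1}\in\OPT$ and it is discarded, then $X_u=1$, and by symmetry its expected contribution equals that of the included case, which again appears in $f(S_T)$. A martingale/telescoping argument then bounds $\Exp[\sum_u X_u f(u\mid S_u)]$ by $\Exp[f(S_T)]$, up to the slack. One complication is that marginals may be negative. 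Within the first $r$ iterations, however, event $\cE$(i) together with the definition of $r$ ensures selected gains are at least $-O(\tfrac{\Delta_f}{\eps_0}\log\tfrac n\beta)$. Over at most $k$ inclusions this adds $O(\tfrac{k\Delta_f}{\eps_0}\log\tfrac n\beta)$. Similarly, $f(S_T\cup\{o_m\})\le f(S^*)+O(\tfrac{\Delta_f}{\eps}\log\tfrac n\beta)$, because $S_T\cup\{o_m\}$ is feasible by the definition of $T$ and hence belongs to $\cS$, and we apply $\cE$(ii). Likewise $f(S_T)\le f(S^*)+O(\cdot)$.

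\textbf{Step 3 (combine).} Substituting into \Cref{lemma:gain_bound} and taking expectations gives
\[
\tfrac12 f(\OPT)\le \Exp[f(S^*)]+\Exp[f(S_T)]+\Exp[f(S^*)-f(S_T)]+O\big(\tfrac{k\Delta_f}{\eps_0}\log\tfrac n\beta\big).
\]
The terms in $f(S_T)$ cancel, leaving $2\Exp[f(S^*)]\ge\tfrac12 f(\OPT)-O(\cdot)$, that is, the $1/4$ ratio. If the charging in Step 2 only yields $\Exp[\sum X_u f(u\mid S_u)]\le\Exp[f(S_T)]$ without the cancellation, we instead use $f(S_T)\le f(S^*)+O(\cdot)$, giving $3\Exp[f(S^*)]$. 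Getting exactly $1/4$ therefore depends on the precise form of \Cref{lemma:gain_bound}, and we would match the constants of \citet{han2021approximation} there.

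\textbf{Step 4 (error term).} Finally, substitute $\eps_0=\Theta(\eps/\sqrt{(k+\log(1/\delta))\log(1/\delta)})$ from Line~\ref{alg:sample_seteps}. The error $\tfrac{k\Delta_f}{\eps_0}\log\tfrac n\beta$ then becomes exactly the stated bound, and it dominates the $\tfrac{\Delta_f}{\eps}$ terms.

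The main obstacle is Step 2. The indicator $X_u$ for discarded $\OPT$ elements must be coupled with the inclusion events in expectation, while $T$ and $r$ are themselves random stopping times that depend on the coins. We would handle this by writing the sum as a martingale indexed by iterations and applying optional stopping. This is valid because $T\le r\le O(n)$ is bounded and the decision at each step is made before the coin is flipped.
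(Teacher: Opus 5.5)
Your proposal is correct and follows essentially the same route as the paper: the \citet{buchbinder2014submodular} lemma gives $\mathbb{E}[f(S_T\cup\OPT)]\ge\frac12 f(\OPT)$, and \Cref{lemma:gain_bound} is combined with the identity $\mathbb{E}[\sum_u X_u f(u\mid S_u)]=\mathbb{E}[f(S_T)-f(\varnothing)]$. Your Step~2 reconstructs that identity via optional stopping; the paper instead cites Lemma~3 of \citet{han2021approximation} and argues element by element, conditioning on the history just before each coin flip. As in the paper, membership $S_T\cup\{o_m\}\in\cS$ gives $f(S_T\cup\{o_m\})\le f(S^*)+O(\cdot)$, and $\eps_0$ is then substituted. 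Two of your hedges are unnecessary. The identity is exact, so no slack for negative marginals is needed. The $f(S_T)$ terms also cancel against the $-f(S_T)$ already present in \Cref{lemma:gain_bound} even if you only have the inequality $\le\mathbb{E}[f(S_T)]$, so the $3\,\mathbb{E}[f(S^*)]$ fallback never arises.
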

\begin{proof}[Proof Sketch]
We leverage a lemma from \cite{han2021approximation} to show that 
$
\Exp[f(S_T)] = \Exp[\sum_{u\in \cN} X_u \cdot f(u\mid S_u)].
$
Combining this equality with \Cref{lemma:gain_bound} yields the upper-bound
\begin{equation*}
    \Exp[f(S_T \cup \OPT)] \le 2\Exp[f(S^*)] + O\!\left(\tfrac{k\Delta_f}{\eps_0}\log\tfrac{n}{\beta}\right).
\end{equation*}
   On the other hand, a key lemma of \citet{buchbinder2014submodular}  for non-monotone submodular functions implies the lower-bound $\Exp[f(S_T \cup \OPT)] \ge \tfrac{1}{2} f(\OPT)$, as each element is included in $S_T$ with probability at most $1/2$. Rearranging these inequalities and substituting $\eps_0$ as set in Line~\ref{alg:sample_seteps} completes the proof.
\end{proof}

\begin{figure}
\centering
\includegraphics[width=0.48\linewidth]{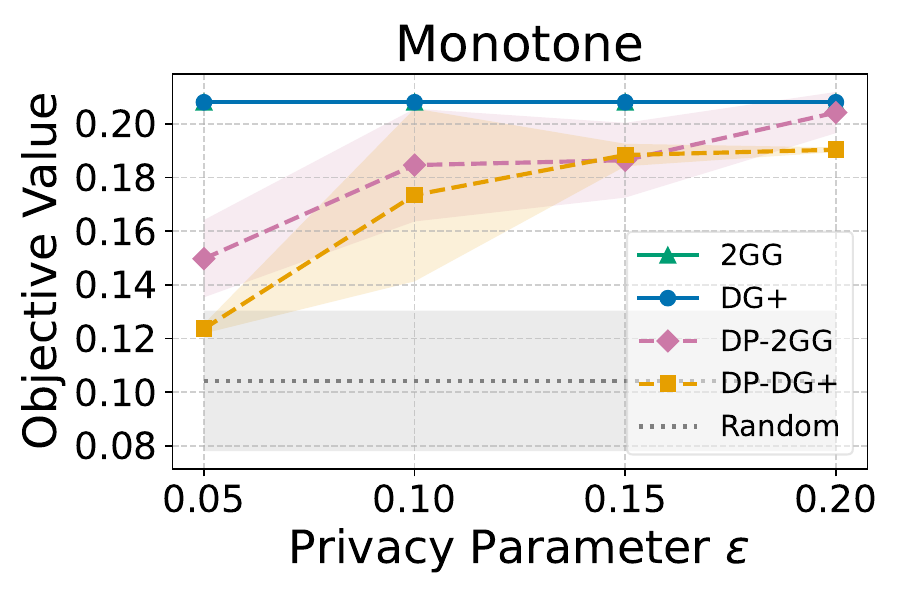}
\includegraphics[width=0.48\linewidth]{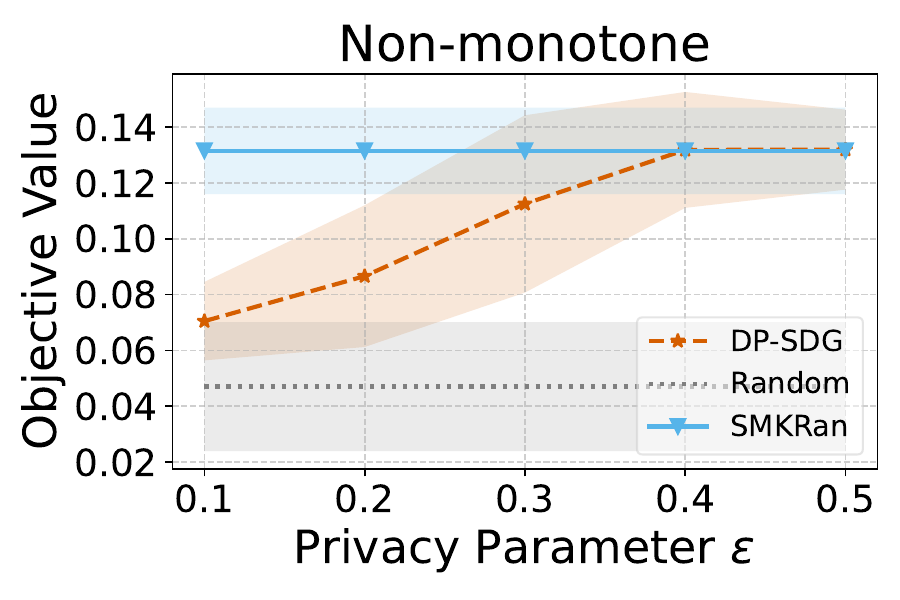}
\caption{Monotone (left) and non-monotone (right) objective value for varying $\eps$.}
\label{fig:eps_obj}
\end{figure}

\begin{figure}
\centering
\includegraphics[width=0.48\linewidth]{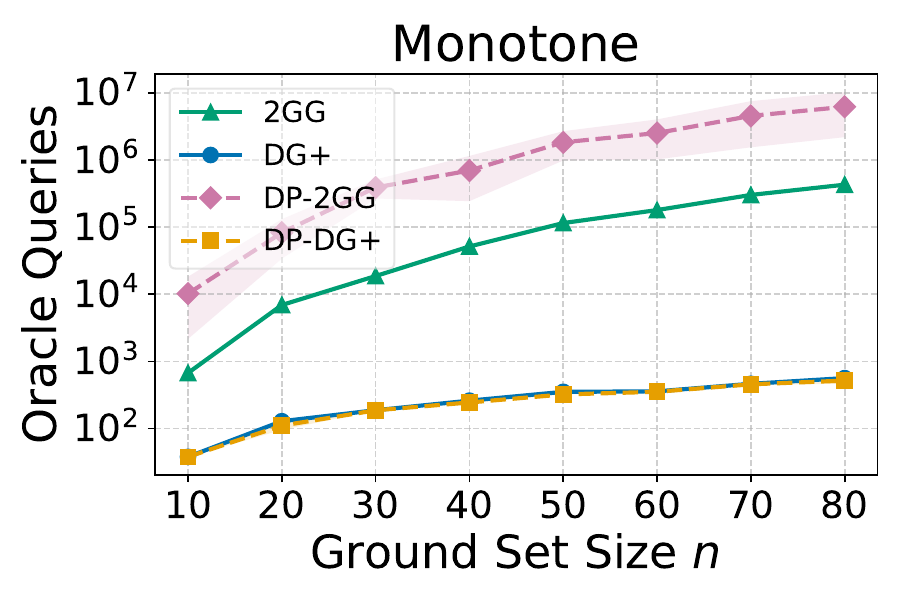}
\includegraphics[width=0.48\linewidth]{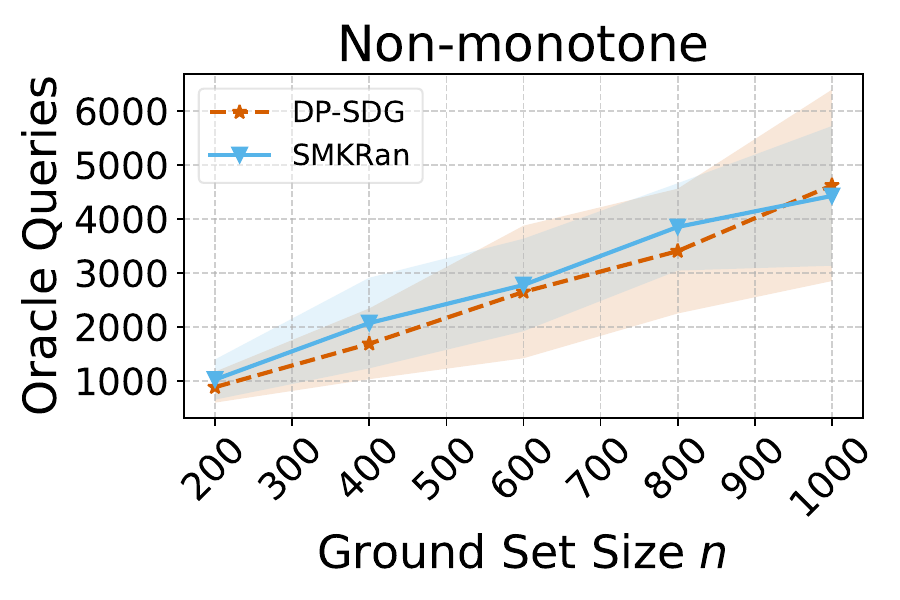}
\caption{Monotone (left, log-scale) and non-monotone (right) number of queries for varying $n$.}
\label{fig:n_queries}
\end{figure}

\section{Empirical Demonstration}
\label{sec:experiments}
We demonstrate the effectiveness of our proposed algorithms on a ride-sharing optimization task, a standard benchmark from prior work~\cite{mitrovic2017differentially,chaturvedi2021differentially}.
 We examine the effect of $\eps$ on utility (\Cref{fig:eps_obj}) and the effect of $n$ on the number of oracle queries (\Cref{fig:n_queries}). \Cref{appendix:experiments} includes results on the impact of the budget $B$ and higher-scale results for \DPDGP. Our implementation is available online.\footnote{\url{https://github.com/ronzadi/dp-smk}.}

\paratitle{Dataset}
Following \citet{mitrovic2017differentially, mitrovic2018data}, we use a dataset $D$ of $m=100{,}000$ Uber pickup locations in Manhattan~\cite{uberDataset}, where each record is a $(\text{latitude}, \text{longitude})$ coordinate
pair. The goal is to select a subset of waiting locations for idle drivers from a grid $\cN$ of $n$ candidate points while ensuring differential privacy for individuals in the dataset.\footnote{Assuming each pickup corresponds to a distinct individual.} We further assign each candidate $b\in \cN$  a cost
$
c(b) = 1 + \frac{19}{\max(d_{\mathrm{km}}(b,\, p_{\text{PoI}}), 1)} \in [1,20],
$
where $d_{\mathrm{km}}$ denotes distance in kilometers and $p_{\text{PoI}}$ is a fixed point in Midtown Manhattan.\footnote{Specifically, we use Penn Station as the reference location.} This induces a location premium that decays with distance from the city center.

\paratitle{Objective Functions}
For a pickup point $a=(x_a,y_a)$ and a grid point $b=(x_b,y_b)$, we adopt the convenience score $s(a,b) = 2 - \frac{2}{1 + e^{-200 \norm{a-b}_1}}$ from \citet{mitrovic2018data}. In the monotone setting, we maximize $f_D(S) = \frac{1}{m} \sum_{a \in D} \max_{b \in S} s(a,b)$, a monotone submodular function with sensitivity $1/m$. While $f_D$ measures coverage, it does not reward spatial diversity and may yield solutions concentrated in a small area. Thus, in the non-monotone setting, we maximize $g_D(S) =f_D(S) + \frac{\lambda}{nk} \sum_{b' \in \mathcal{N} \setminus S} \sum_{b \in S} s(b', b)$, fixing $\lambda = 0.1$. Since the added diversity term is a graph-cut function independent of $D$, $g_D$ is a non-monotone submodular function that retains sensitivity $1/m$.

\paratitle{Algorithms}
In the monotone setting, we evaluate \DPTwoGG (\Cref{alg:two_guess_greedy}) and \DPDGP (\Cref{algorithm:greedyplus}) against their non-private counterparts \texttt{Two-Guess-Greedy} (\TwoGG)~\citep{feldman2023practical} and \texttt{Density-Greedy$^+$} (\DGP)~\citep{yaroslavtsev2020bring}, while in the non-monotone setting we compare \DPSDG (\Cref{algorithm:subsample_greedy}) with the non-private \SmkRan~\cite{han2021approximation}. In both settings, we include \Random, a trivial DP baseline that processes elements in a random order, adding each to the solution if it maintains feasibility.

\paratitle{Experimental Settings}\label{sec:settings}
We use by default $\eps=1$ and $\delta = m^{-1.5}$. \DPSDG and \DPDGP are evaluated with defaults $n=100, B=20$ ($k=8$). To accommodate the more query-intensive \DPTwoGG, experiments including it use a default of $n=30, B=40$ ($k=12$). While we have used advanced composition in our theoretical analyses to yield asymptotically lower noise, constant factors may favor basic composition in some regimes. We set the per-iteration privacy parameter $\eps_0$ according to the composition rule (basic or advanced) that achieves $(\eps,\delta)$-DP with the smallest noise scale. All results are averaged over 10 runs and shaded regions indicate one standard deviation.

\paratitle{Results}
\Cref{fig:eps_obj} shows that \DPTwoGG and \DPDGP achieve utility comparable to their non-private counterparts, substantially outperforming \Random for $\eps \ge 0.1$. At $\eps=0.2$, \DPTwoGG is within 2\% of \TwoGG and \DPDGP within 8\%, whereas \Random is 49\% lower. Moreover, \DPSDG attains utility close to that of the non-private \SmkRan; at $\eps=0.4$, it is within 0.1\%, while \Random is 61\% lower.
\Cref{fig:n_queries} shows that \DPDGP is significantly more scalable than \DPTwoGG, which requires $6000\times$ more queries at $n=60$. Furthermore, \DPSDG exhibits linear growth in the number of queries with~$n$, underscoring its practical efficiency. Similar trends for \DPDGP are reported in~\Cref{appendix:experiments}.

\section{Conclusion}
We presented differentially private algorithms for maximizing monotone and non-monotone submodular functions subject to a knapsack constraint. Our results improve both utility and query complexity in the monotone setting and establish the first provable guarantees for the non-monotone setting.
There are several interesting directions for future research. A primary question is whether an error term with better dependence on $k$ is achievable with polynomial query complexity under the bounded sensitivity assumption, a problem that remains unresolved even for cardinality constraints. Furthermore, while a $(1/2-\eta)$-approximation can be achieved using $O_{\eta}(n)$ queries in the non-private setting via decreasing thresholds~\cite{li2022submodular}, adapting such techniques to the DP setting without worsening the dependence on $k$ appears non-trivial. Finally, extending our techniques to settings combining knapsack and matroid constraints~\cite{badanidiyuru2014fast,mirzasoleiman2016fast} is a natural direction for future work.
\FloatBarrier
\bibliography{src/bibl}

@article{feldman2023practical,
  title={Practical budgeted submodular maximization},
  author={Feldman, Moran and Nutov, Zeev and Shoham, Elad},
  journal={Algorithmica},
  volume={85},
  number={5},
  pages={1332--1371},
  year={2023},
  publisher={Springer}
}

@inproceedings{mcsherry2007mechanism,
  title={Mechanism design via differential privacy},
  author={McSherry, Frank and Talwar, Kunal},
  booktitle={48th Annual IEEE Symposium on Foundations of Computer Science (FOCS'07)},
  pages={94--103},
  year={2007},
  organization={IEEE}
}

@article{dwork2014algorithmic,
  title={The algorithmic foundations of differential privacy},
  author={Dwork, Cynthia and Roth, Aaron and others},
  journal={Foundations and Trends{\textregistered} in Theoretical Computer Science},
  volume={9},
  number={3--4},
  pages={211--407},
  year={2014},
  publisher={Now Publishers, Inc.}
}

@article{han2021approximation,
  title={Approximation algorithms for submodular data summarization with a knapsack constraint},
  author={Han, Kai and Cui, Shuang and Zhu, Tianshuai and Zhang, Enpei and Wu, Benwei and Yin, Zhizhuo and Xu, Tong and Tang, Shaojie and Huang, He},
  journal={Proceedings of the ACM on Measurement and Analysis of Computing Systems},
  volume={5},
  number={1},
  pages={1--31},
  year={2021},
  publisher={ACM New York, NY, USA}
}

@inproceedings{buchbinder2014submodular,
  title={Submodular maximization with cardinality constraints},
  author={Buchbinder, Niv and Feldman, Moran and Naor, Joseph and Schwartz, Roy},
  booktitle={Proceedings of the twenty-fifth annual ACM-SIAM symposium on Discrete algorithms},
  pages={1433--1452},
  year={2014},
  organization={SIAM}
}

@article{mckenna2020permute,
  title={Permute-and-flip: A new mechanism for differentially private selection},
  author={McKenna, Ryan and Sheldon, Daniel R},
  journal={Advances in Neural Information Processing Systems},
  volume={33},
  pages={193--203},
  year={2020}
}

@article{ding2021permute,
  title={The permute-and-flip mechanism is identical to report-noisy-max with exponential noise},
  author={Ding, Zeyu and Kifer, Daniel and Steinke, Thomas and Wang, Yuxin and Xiao, Yingtai and Zhang, Danfeng and others},
  journal={arXiv preprint arXiv:2105.07260},
  year={2021}
}

@article{janson2018tail,
  title={Tail bounds for sums of geometric and exponential variables},
  author={Janson, Svante},
  journal={Statistics \& Probability Letters},
  volume={135},
  pages={1--6},
  year={2018},
  publisher={Elsevier}
}

@inproceedings{krause2005near,
  title={Near-optimal nonmyopic value of information in graphical models},
  author={Krause, Andreas and Guestrin, Carlos},
  booktitle={Proceedings of the Twenty-First Conference on Uncertainty in Artificial Intelligence},
  pages={324--331},
  year={2005}
}

@inproceedings{wei2015submodularity,
  title={Submodularity in data subset selection and active learning},
  author={Wei, Kai and Iyer, Rishabh and Bilmes, Jeff},
  booktitle={International conference on machine learning},
  pages={1954--1963},
  year={2015},
  organization={PMLR}
}

@inproceedings{dwork2006calibrating,
  title={Calibrating noise to sensitivity in private data analysis},
  author={Dwork, Cynthia and McSherry, Frank and Nissim, Kobbi and Smith, Adam},
  booktitle={Theory of Cryptography Conference},
  pages={265--284},
  year={2006},
  organization={Springer}
}

@inproceedings{gupta2010differentially,
  title={Differentially private combinatorial optimization},
  author={Gupta, Anupam and Ligett, Katrina and McSherry, Frank and Roth, Aaron and Talwar, Kunal},
  booktitle={Proceedings of the twenty-first annual ACM-SIAM symposium on Discrete Algorithms},
  pages={1106--1125},
  year={2010},
  organization={SIAM}
}

@inproceedings{mitrovic2017differentially,
  title={Differentially private submodular maximization: Data summarization in disguise},
  author={Mitrovic, Marko and Bun, Mark and Krause, Andreas and Karbasi, Amin},
  booktitle={International Conference on Machine Learning},
  pages={2478--2487},
  year={2017},
  organization={PMLR}
}

@inproceedings{lakkaraju2016interpretable,
  title={Interpretable decision sets: A joint framework for description and prediction},
  author={Lakkaraju, Himabindu and Bach, Stephen H and Leskovec, Jure},
  booktitle={Proceedings of the 22nd ACM SIGKDD international conference on knowledge discovery and data mining},
  pages={1675--1684},
  year={2016}
}

@inproceedings{feldman2009private,
  title={Private coresets},
  author={Feldman, Dan and Fiat, Amos and Kaplan, Haim and Nissim, Kobbi},
  booktitle={Proceedings of the forty-first annual ACM symposium on Theory of computing},
  pages={361--370},
  year={2009}
}

@inproceedings{dwork2006differential,
  title={Differential privacy},
  author={Dwork, Cynthia},
  booktitle={International colloquium on automata, languages, and programming},
  pages={1--12},
  year={2006},
  organization={Springer}
}

@inproceedings{chaturvedi2021differentially,
  title={Differentially private decomposable submodular maximization},
  author={Chaturvedi, Anamay and L{\^e} Nguy{\~e}n, Huy and Zakynthinou, Lydia},
  booktitle={Proceedings of the AAAI Conference on Artificial Intelligence},
  volume={35},
  number={8},
  pages={6984--6992},
  year={2021}
}

@article{nemhauser1978analysis,
  title={An analysis of approximations for maximizing submodular set functions—I},
  author={Nemhauser, George L and Wolsey, Laurence A and Fisher, Marshall L},
  journal={Mathematical programming},
  volume={14},
  pages={265--294},
  year={1978},
  publisher={Springer}
}

@inproceedings{chaturvedi2023streaming,
  title={Streaming submodular maximization with differential privacy},
  author={Chaturvedi, Anamay and Nguyen, Huy and Nguyen, Thy Dinh},
  booktitle={International Conference on Machine Learning},
  pages={4116--4143},
  year={2023},
  organization={PMLR}
}

@inproceedings{dwork2010boosting,
  title={Boosting and differential privacy},
  author={Dwork, Cynthia and Rothblum, Guy N and Vadhan, Salil},
  booktitle={2010 IEEE 51st annual symposium on foundations of computer science},
  pages={51--60},
  year={2010},
  organization={IEEE}
}

@inproceedings{rafiey2020fast,
  title={Fast and private submodular and $ k $-submodular functions maximization with matroid constraints},
  author={Rafiey, Akbar and Yoshida, Yuichi},
  booktitle={International conference on machine learning},
  pages={7887--7897},
  year={2020},
  organization={PMLR}
}

@inproceedings{mirzasoleiman2016fast,
  title={Fast constrained submodular maximization: Personalized data summarization},
  author={Mirzasoleiman, Baharan and Badanidiyuru, Ashwinkumar and Karbasi, Amin},
  booktitle={International Conference on Machine Learning},
  pages={1358--1367},
  year={2016},
  organization={PMLR}
}

@inproceedings{sadeghi2021differentially,
  title={Differentially private monotone submodular maximization under matroid and knapsack constraints},
  author={Sadeghi, Omid and Fazel, Maryam},
  booktitle={International Conference on Artificial Intelligence and Statistics},
  pages={2908--2916},
  year={2021},
  organization={PMLR}
}

@inproceedings{cohen2023generalized,
  title={Generalized Private Selection and Testing with High Confidence},
  author={Cohen, Edith and Lyu, Xin and Nelson, Jelani and Sarl{\'o}s, Tam{\'a}s and Stemmer, Uri},
  booktitle={14th Innovations in Theoretical Computer Science Conference (ITCS 2023)},
  pages={39--1},
  year={2023},
  organization={Schloss Dagstuhl--Leibniz-Zentrum f{\"u}r Informatik}
}

@inproceedings{yaroslavtsev2020bring,
  title={“bring your own greedy”+ max: near-optimal 1/2-approximations for submodular knapsack},
  author={Yaroslavtsev, Grigory and Zhou, Samson and Avdiukhin, Dmitrii},
  booktitle={International Conference on Artificial Intelligence and Statistics},
  pages={3263--3274},
  year={2020},
  organization={PMLR}
}

@article{amanatidis2020fast,
  title={Fast adaptive non-monotone submodular maximization subject to a knapsack constraint},
  author={Amanatidis, Georgios and Fusco, Federico and Lazos, Philip and Leonardi, Stefano and Reiffenh{\"a}user, Rebecca},
  journal={Advances in neural information processing systems},
  volume={33},
  pages={16903--16915},
  year={2020}
}

@article{sviridenko2004note,
  title={A note on maximizing a submodular set function subject to a knapsack constraint},
  author={Sviridenko, Maxim},
  journal={Operations Research Letters},
  volume={32},
  number={1},
  pages={41--43},
  year={2004},
  publisher={Elsevier}
}

@article{khuller1999budgeted,
  title={The budgeted maximum coverage problem},
  author={Khuller, Samir and Moss, Anna and Naor, Joseph Seffi},
  journal={Information processing letters},
  volume={70},
  number={1},
  pages={39--45},
  year={1999},
  publisher={Elsevier}
}

@article{wolsey1982maximising,
  title={Maximising real-valued submodular functions: Primal and dual heuristics for location problems},
  author={Wolsey, Laurence A},
  journal={Mathematics of Operations Research},
  volume={7},
  number={3},
  pages={410--425},
  year={1982},
  publisher={INFORMS}
}

@article{buchbinder2019constrained,
  title={Constrained submodular maximization via a nonsymmetric technique},
  author={Buchbinder, Niv and Feldman, Moran},
  journal={Mathematics of Operations Research},
  volume={44},
  number={3},
  pages={988--1005},
  year={2019},
  publisher={INFORMS}
}

@article{kulik2013approximations,
  title={Approximations for monotone and nonmonotone submodular maximization with knapsack constraints},
  author={Kulik, Ariel and Shachnai, Hadas and Tamir, Tami},
  journal={Mathematics of Operations Research},
  volume={38},
  number={4},
  pages={729--739},
  year={2013},
  publisher={INFORMS}
}

@inproceedings{ghazi2024individualized,
  title={Individualized privacy accounting via subsampling with applications in combinatorial optimization},
  author={Ghazi, Badih and Kamath, Pritish and Kumar, Ravi and Manurangsi, Pasin and Sealfon, Adam},
  booktitle={Proceedings of the 41st International Conference on Machine Learning},
  pages={15491--15511},
  year={2024}
}

@inproceedings{kempe2003maximizing,
  title={Maximizing the spread of influence through a social network},
  author={Kempe, David and Kleinberg, Jon and Tardos, {\'E}va},
  booktitle={Proceedings of the ninth ACM SIGKDD international conference on Knowledge discovery and data mining},
  pages={137--146},
  year={2003}
}

@inproceedings{gomes2010budgeted,
  title={Budgeted Nonparametric Learning from Data Streams.},
  author={Gomes, Ryan and Krause, Andreas},
  booktitle={ICML},
  volume={1},
  pages={3},
  year={2010}
}

@article{krause2008near,
  title={Near-optimal sensor placements in Gaussian processes: Theory, efficient algorithms and empirical studies.},
  author={Krause, Andreas and Singh, Ajit and Guestrin, Carlos},
  journal={Journal of Machine Learning Research},
  volume={9},
  number={2},
  year={2008}
}

@inproceedings{esfandiari2021adaptivity,
  title={Adaptivity in adaptive submodularity},
  author={Esfandiari, Hossein and Karbasi, Amin and Mirrokni, Vahab},
  booktitle={Conference on Learning Theory},
  pages={1823--1846},
  year={2021},
  organization={PMLR}
}

@article{li2022submodular,
  title={Submodular maximization in clean linear time},
  author={Li, Wenxin and Feldman, Moran and Kazemi, Ehsan and Karbasi, Amin},
  journal={Advances in neural information processing systems},
  volume={35},
  pages={17473--17487},
  year={2022}
}

@inproceedings{raskhodnikova2015efficient,
  title={Lipschitz extensions for node-private graph statistics and the generalized exponential mechanism},
  author={Raskhodnikova, Sofya and Smith, Adam},
  booktitle={2016 IEEE 57th Annual Symposium on Foundations of Computer Science (FOCS)},
  pages={495--504},
  year={2016},
  organization={IEEE}
}

@inproceedings{liu2019private,
  title={Private selection from private candidates},
  author={Liu, Jingcheng and Talwar, Kunal},
  booktitle={Proceedings of the 51st Annual ACM SIGACT Symposium on Theory of Computing},
  pages={298--309},
  year={2019}
}

@inproceedings{
papernot2022hyperparameter,
title={Hyperparameter Tuning with Renyi Differential Privacy},
author={Nicolas Papernot and Thomas Steinke},
booktitle={International Conference on Learning Representations},
year={2022}
}

@inproceedings{mitrovic2018data,
  title={Data summarization at scale: A two-stage submodular approach},
  author={Mitrovic, Marko and Kazemi, Ehsan and Zadimoghaddam, Morteza and Karbasi, Amin},
  booktitle={International Conference on Machine Learning},
  pages={3596--3605},
  year={2018},
  organization={PMLR}
}

@misc{uberDataset,
  author       = {{FiveThirtyEight}},
  title        = {Uber Pickups in New York City},
  year         = {2014},
  howpublished = {\url{https://www.kaggle.com/fivethirtyeight/uber-pickups-in-new-york-city}},
}

@inproceedings{badanidiyuru2014fast,
  title={Fast algorithms for maximizing submodular functions},
  author={Badanidiyuru, Ashwinkumar and Vondr{\'a}k, Jan},
  booktitle={Proceedings of the twenty-fifth annual ACM-SIAM symposium on Discrete algorithms},
  pages={1497--1514},
  year={2014},
  organization={SIAM}
}
\bibliographystyle{icml2026}

\newpage
\appendix
\onecolumn
\appendix
\crefalias{section}{appendix}
\section{Generalized Private Selection}\label{appendix:private_selection}
In this section, we provide the technical background for the generalized DP selection mechanisms utilized throughout this work. We first address the problem of private selection under varying sensitivities in \Cref{sec:appendix_RNM}. Subsequently, in \Cref{sec:private_candidates}, we discuss the selection framework used to select the best candidate from a collection of DP releases.

\subsection{Report Noisy Max for Scores with Varying Sensitivity}
\label{sec:appendix_RNM}
Consider a finite set $\mathcal{C}$, where each element $u \in \mathcal{C}$ is associated with a quality score $q(u;D) : \mathcal{X}^m \to \mathbb{R}$. A fundamental task in differential privacy is the private selection of an element $u \in \mathcal{C}$ that approximately maximizes the quality score $q(u;D)$ given a sensitive dataset $D \in \mathcal{X}^m$. One of the most widely used and computationally efficient mechanisms for this task is the \emph{Report Noisy Max} mechanism with exponential noise~\cite{dwork2014algorithmic}, denoted $\RNM{\eps}{\cC,q}{D}$. The mechanism perturbs each score with independent exponential noise and returns the candidate with the highest perturbed value. 

\begin{remark} Report Noisy Max with exponential noise has been shown to be equivalent to the \texttt{Permute-and-Flip} mechanism~\cite{ding2021permute}, which dominates the Exponential Mechanism in terms of utility~\cite{mckenna2020permute}. While we use exponential noise for this reason, our analysis is not strictly dependent on this choice; other distributions, such as Laplace noise~\cite{dwork2014algorithmic}, would yield similar results. \end{remark}

\begin{algorithm}[H]
\caption{$\RNM{\varepsilon}{\cC, q}{D}$ \cite{dwork2014algorithmic}}
\setcounter{AlgoLine}{0}
\KwIn{Dataset $D\in \cX^m$; candidate set $\cC$; for each $u\in \cC$ a score function $q(u;D):\cX^m\to\R$ with sensitivity $\Delta_u$; privacy parameter $\eps>0$.} 

$\Delta \gets \max_{u\in \cC} \Delta_u$\;

\For{$u\in \cC$}{
    Sample $Z_u \gets \Expo{\frac{\eps}{2\Delta}}$ independently\;
}
\Return $\arg\max_{u\in\cC} \{q(u;D)+Z_u\}$.
\end{algorithm}

\begin{theorem}[Restatement of \Cref{thm:RNM_main}, \citealp{dwork2014algorithmic}]\label{thm:RNM_formal}
For any $\eps>0$, $\RNM{\eps}{\cC,q}{D}$ satisfies $\eps$-differential privacy. Moreover, for any $\beta\in(0,1)$, with probability at least $1-\beta$, the output $u^*$ satisfies
\[
q(u^*;D)\ge \max_{u\in\cC} q(u;D) - \frac{2\Delta}{\eps}\log\frac{|\cC|}{\beta}.
\]
\end{theorem}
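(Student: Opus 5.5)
The statement to prove is the restatement of Report Noisy Max with exponential noise: it is $\eps$-DP, and with probability $1-\beta$ its output is within $\frac{2\Delta}{\eps}\log\frac{|\cC|}{\beta}$ of the best score. I would handle utility and privacy separately.

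\textbf{Utility.} This part is a union bound on the upper tail of the noise. Each $Z_u\sim\mathrm{Exp}(\eps/(2\Delta))$ satisfies $\Pr[Z_u> t]=e^{-\eps t/(2\Delta)}$. Set $t=\frac{2\Delta}{\eps}\log\frac{|\cC|}{\beta}$. Then with probability at least $1-\beta$, every $Z_u\le t$. Let $u^\dagger$ be a maximizer of $q(\cdot;D)$. Since the noise is nonnegative, on this event
\[
q(u^*;D)+t\ \ge\ q(u^*;D)+Z_{u^*}\ \ge\ q(u^\dagger;D)+Z_{u^\dagger}\ \ge\ q(u^\dagger;D),
\]
which is the claimed bound.

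\textbf{Privacy.} I would follow the standard coupling argument for report noisy max, fixing the noise of all candidates but one. Fix neighbors $D\sim D'$ and a target output $u$. Condition on $Z_{-u}=(Z_v)_{v\ne u}$. Then $u$ wins on $D$ exactly when $Z_u\ge r(D)$, where
\[
r(D)=\max_{v\ne u}\{q(v;D)+Z_v\}-q(u;D).
\]
Each score moves by at most $\Delta$ between $D$ and $D'$, so $r(D')\ge r(D)-2\Delta$ (indeed $|r(D)-r(D')|\le 2\Delta$). The exponential tail gives
\[
\Pr[Z_u\ge a-2\Delta]\le e^{\eps}\Pr[Z_u\ge a]\quad\text{for } a\ge 2\Delta,
\]
since the ratio of tails is $e^{2\Delta\cdot \eps/(2\Delta)}=e^{\eps}$. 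When $a<2\Delta$ the left tail probability is capped at $1$, so the ratio is still at most $e^{\eps}$. Hence $\Pr[u\mid D,Z_{-u}]\le e^{\eps}\Pr[u\mid D',Z_{-u}]$. Integrating over $Z_{-u}$ and summing over output sets gives pure $\eps$-DP.

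\textbf{Main obstacle.} The only delicate step is the one-sided bound. Exponential noise is not symmetric, so I must check both directions of the inequality. The direction $D'\to D$ follows by symmetry of the roles of $D$ and $D'$. Truncation at $0$ only helps: the tail probability is $1$ for negative thresholds, so the ratio of tails is always in $[e^{-\eps},e^{\eps}]$. The factor $2$ in the noise scale exists precisely to absorb the $2\Delta$ shift coming from the winner's score and the competitor's score moving in opposite directions.
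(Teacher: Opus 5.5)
Your proposal is correct. The paper gives no proof of this statement (it is quoted from Dwork and Roth). Your utility step is the same union bound on the noise that the paper uses inside the proof of \Cref{thm:GRNM_formal} and in \Cref{lemma:good_event}\,(ii). Your privacy step is the standard argument: condition on $Z_{-u}$ and compare one-dimensional exponential tails.

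There is one small slip in direction. To obtain $\Pr[Z_u\ge r(D)]\le e^{\eps}\Pr[Z_u\ge r(D')]$ you need $r(D)\ge r(D')-2\Delta$. Then $\Pr[Z_u\ge r(D)]\le \Pr[Z_u\ge r(D')-2\Delta]\le e^{\eps}\Pr[Z_u\ge r(D')]$, applying your tail inequality at $a=r(D')$. The one-sided bound you highlighted, $r(D')\ge r(D)-2\Delta$, is the one for the reverse direction. Your parenthetical $|r(D)-r(D')|\le 2\Delta$ covers both directions, so there is no actual gap.

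Similarly, the asymmetry of the exponential noise needs no separate treatment. The DP condition ranges over all ordered neighboring pairs, and your argument is symmetric in $D$ and $D'$. All that is required is the tail-ratio bound $\Pr[Z_u\ge a-2\Delta]\le e^{\eps}\Pr[Z_u\ge a]$ for every real $a$, which you verified, including the truncated cases.
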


As the bound shows, the additive error of standard \RNMT scales with the maximum sensitivity across all candidates. In our setting, scores arise from density-based greedy selections, and this bound introduces a suboptimal dependence on element costs, as discussed in \Cref{section:challenges_results}. This motivates the use of a more refined private selection mechanism that adapts to individual sensitivities.

We employ the generalized selection mechanism of~\cite{raskhodnikova2015efficient}, which we denote by $\GenRNM{\varepsilon}{\cC,q,\beta}{D}$. The mechanism takes an additional failure probability parameter $\beta$ and internally invokes \RNMT on carefully constructed auxiliary scores. These scores are designed to reduce sensitivity while appropriately penalizing candidates with large individual sensitivities. 
Next, we provide the pseudocode and review the utility proof. 

\begin{algorithm}[H]
\caption{$\GenRNM{\varepsilon}{\cC, q, \beta}{D}$ \cite{raskhodnikova2015efficient}}
\LinesNumbered
\setcounter{AlgoLine}{0}
\KwIn{Dataset $D\in \cX^m$; candidate set $\cC$; for each $u\in \cC$ a score function $q(u;D):\cX^m\to\R$ with sensitivity $\Delta_u$; parameters $\eps>0$ and $\beta\in(0,1)$.} 

Set $t \gets \frac{2}{\eps}\log\frac{|\cC|}{\beta}$ 

\For{$u\in \cC$}{
    Define 
    $
    q'(u;D) \gets \min_{v\in\cC}
    \frac{(q(u;D)-t\Delta_u)-(q(v;D)-t\Delta_v)}{\Delta_u+\Delta_v}
    $\;
    
    Sample $Z_u \gets \Expo{\frac{\eps}{2}}$ independently\;
}

\Return $\arg\max_{u\in\cC} \{q'(u;D)+Z_u\}$.
\end{algorithm}

\begin{theorem}[Formal version of \Cref{thm:genRNM_guarantee}, \citealp{raskhodnikova2015efficient}]\label{thm:GRNM_formal}
Suppose that for each $u\in\cC$, the score function $q(u;D):\cX^m\to\R$ has sensitivity $\Delta_u$ (with respect to its dataset argument). For any $\eps>0$ and $\beta\in(0,1)$, the mechanism $\GenRNM{\varepsilon}{\cC,q,\beta}{D}$ is $\eps$-DP. Furthermore, with probability at least $1-\beta$, the output $u^*$ satisfies
\[
q(u^*;D)\ge \max_{u\in\cC}
\left\{
q(u;D) - \frac{4\Delta_u}{\eps}\log\frac{|\cC|}{\beta}
\right\}.
\]
\end{theorem}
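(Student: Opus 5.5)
The proof has two parts: an $\eps$-DP claim obtained by reducing to \Cref{thm:RNM_formal}, and a utility claim obtained by unpacking the definition of $q'$. The key design fact is that the constants $t$ and $\Delta_u$ do not depend on the data. Here $t=\frac{2}{\eps}\log\frac{|\cC|}{\beta}$ depends only on $\eps$, $\beta$ and $|\cC|$. The $\Delta_u$ are global sensitivities. Hence the shifted scores $q(u;D)-t\Delta_u$ have the same sensitivities $\Delta_u$ as the original scores.

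\textbf{Privacy.} I first show that for every $u\in\cC$ the auxiliary score $q'(u;\cdot)$ has sensitivity at most $1$.
\begin{itemize}
\item Fix $u$ and $v$, and let $D\sim D'$.
\item The numerator $(q(u;D)-t\Delta_u)-(q(v;D)-t\Delta_v)$ changes by at most $\Delta_u+\Delta_v$.
\item The denominator $\Delta_u+\Delta_v$ is data-independent, so each ratio changes by at most $1$.
\item A pointwise minimum of functions that are each $1$-sensitive is again $1$-sensitive: if $g_v(D)\le g_v(D')+1$ for all $v$, then $\min_v g_v(D)\le \min_v g_v(D')+1$, and symmetrically.
\end{itemize}
Consequently the final step is exactly $\RNM{\eps}{\cC,q'}{D}$ with maximal sensitivity $\Delta=1$, since the noise is $\Expo{\eps/2}=\Expo{\eps/(2\cdot 1)}$. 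By \Cref{thm:RNM_formal} this step is $\eps$-DP. Degenerate pairs with $\Delta_u+\Delta_v=0$ need a convention: there both scores are data-independent, so the ratio can be set to $\pm\infty$ or $0$ without affecting sensitivity.

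\textbf{Utility.} Apply the utility part of \Cref{thm:RNM_formal} to $q'$ with $\Delta=1$. With probability at least $1-\beta$,
\[
q'(u^*;D)\ \ge\ \max_{u\in\cC} q'(u;D)-\tfrac{2}{\eps}\log\tfrac{|\cC|}{\beta}\ =\ \max_{u} q'(u;D)-t.
\]
Now let $u_0\in\arg\max_u\{q(u;D)-t\Delta_u\}$.
\begin{itemize}
\item Every term in the minimum defining $q'(u_0;D)$ is nonnegative, and the term $v=u_0$ equals $0$.
\item Hence $q'(u_0;D)=0$, so $\max_u q'(u;D)\ge 0$ and therefore $q'(u^*;D)\ge -t$.
\end{itemize}
Since $q'(u^*;D)$ is a minimum over $v$, this lower bound holds for the ratio at every $v$. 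Fix an arbitrary $u\in\cC$ and take $v=u$:
\[
\frac{(q(u^*;D)-t\Delta_{u^*})-(q(u;D)-t\Delta_u)}{\Delta_{u^*}+\Delta_u}\ \ge\ -t .
\]
Multiplying through by $\Delta_{u^*}+\Delta_u$ and rearranging gives
\[
q(u^*;D)\ \ge\ q(u;D)-t\Delta_u-t\Delta_u\ =\ q(u;D)-2t\Delta_u .
\]
Since $2t=\frac{4}{\eps}\log\frac{|\cC|}{\beta}$ and $u$ was arbitrary, taking the maximum over $u$ yields the claimed bound.

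\textbf{Main obstacle.} The only delicate point is the sensitivity argument. One must make sure that $t$ and the $\Delta_u$ are data-independent, and that taking the minimum preserves $1$-sensitivity. After that, the utility bound is a direct algebraic consequence of choosing the comparison point $v=u$ inside the minimum.
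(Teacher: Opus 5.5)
Your proposal is correct and follows the paper's proof step for step. Privacy comes from the $1$-sensitivity of $q'$ together with \Cref{thm:RNM_formal}; the bound $q'(u^*;D)\ge -t$ comes from the maximizer of $q(u;D)-t\Delta_u$, whose $q'$-value is nonnegative; and choosing $v=u$ inside the minimum gives $q(u^*;D)\ge q(u;D)-2t\Delta_u$.

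The differences are only cosmetic. You obtain $q'(u^*;D)\ge\max_v q'(v;D)-t$ by citing the utility part of \Cref{thm:RNM_formal} with $\Delta=1$, where the paper re-derives the exponential-noise union bound. You also spell out why the minimum stays $1$-sensitive and address the degenerate case $\Delta_u+\Delta_v=0$, which the paper leaves implicit.
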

\begin{proof}
The privacy guarantee follows directly from the guarantee of \RNMT (\Cref{thm:RNM_formal}), since for each $u \in \cC$, the score $q'(u;D)$ has sensitivity at most~$1$ (with respect to its dataset argument).

By a union bound and the tail of the exponential distribution, with probability at least $1-\beta$ all noise variables are bounded by
$t = \tfrac{2}{\eps}\log\tfrac{|\cC|}{\beta}$.
Condition on this event.
Let $u^*$ denote the selected element. Then for any $v \in \cC$,
\[
q'(u^*;D) \ge q'(v;D) - t,
\]
since otherwise $q'(v;D) + Z_v > q'(u^*;D) + Z_{u^*}$, contradicting the choice of $u^*$.

Let $v'\in \cC$ be the element maximizing $q(v';D) - t\Delta_{v'}$. For all $u \in \cC$, we have
$q(v';D) - t\Delta_{v'} \ge q(u;D) - t\Delta_u$, which implies
\[
q'(v';D)
= \min_{u \in \cC}
\frac{(q(v';D) - t\Delta_{v'}) - (q(u;D) - t\Delta_u)}{\Delta_{v'} + \Delta_u}
\ge 0.
\]
Combining the inequalities yields $q'(u^*;D) \ge -t$.

Finally, for any $u \in \cC$, the definition of $q'$ gives
\[
q'(u^*;D)
\le
\frac{(q(u^*;D) - t\Delta_{u^*}) - (q(u;D) - t\Delta_u)}{\Delta_{u^*} + \Delta_u}.
\]
Together with $q'(u^*;D) \ge -t$, this implies
\[
q(u^*;D) \ge q(u;D) - 2t\Delta_u,
\]
completing the proof.
\end{proof}

\begin{remark}\label[remark]{remark;GRNM}
As the previous proof shows, instead of setting $t = \tfrac{2}{\eps}\log\tfrac{|\cC|}{\beta}$, one may use any upper bound $M \ge |\cC|$, in which case, with probability at least $1-\beta$, the output $u^*$ satisfies
\[
q(u^*;D)\ge \max_{u\in\cC}
\left\{
q(u;D) - \frac{4\Delta_u}{\eps}\log\frac{M}{\beta}
\right\}.
\]
In \Cref{sec:appendix_nonmonotone}, we take
$t = \tfrac{2}{\eps}\log\tfrac{n}{\beta}$,
where $n$ denotes the size of the ground set, as this choice is used in our utility analysis in that section. Specifically, it allows us to sample all DP noises upfront, and focus our analysis on a randomized execution of    \Cref{algorithm:subsample_greedy} with fixed realization for the DP mechanisms. Note that all candidate sets $\cC_i$ arising in \Cref{algorithm:subsample_greedy} satisfy $|\cC_i|\le n$. 
\end{remark}

\subsection{Generalized Private Selection}
\label{sec:private_candidates}

A more general setting than the one considered in \Cref{sec:appendix_RNM} arises when we have a collection of $k$ mechanisms $\{\cM_i\}_{i=1}^k$, each producing outputs in an ordered domain (e.g., scored solutions), and the goal is to select an approximately best solution produced by any $\cM_i$, while incurring a privacy cost close to that of executing a single $\cM_i$ and releasing its output. This problem is commonly referred to as \emph{generalized private selection} \cite{liu2019private,papernot2022hyperparameter,cohen2023generalized}, and it has found various applications in settings such as parameter tuning in machine learning. In this work, we adopt the framework of \cite{cohen2023generalized} due to its simplicity and high utility when combined with our analysis.

The algorithm $\Selection_{\tau,\gamma}$ is parameterized by $\gamma \in (0,\infty)$ and $\tau \in \mathbb{N}$. Given $k$ private mechanisms $\{\cM_i\}_{i=1}^k$ producing scored outputs, it first samples $
p \in [0,1]
$
where $\Pr[p \le x] = x^\gamma$ for all $x \in [0,1]$. Then, for each $i \in [k]$, it runs $\cM_i$ a random number of times drawn as $\mathrm{Bin}(\tau, p)$. Finally, it outputs the highest scored candidate observed across all invocations. The pseudocode is as follows:

\begin{algorithm}[H]
\caption{$\Selection_{\tau,\gamma}$ \cite{cohen2023generalized}}\label{alg:selection}
\LinesNumbered  
\setcounter{AlgoLine}{0}
\KwIn{Dataset $D$; Private mechanisms $\set{\cM_i}_{i=1}^k$} 
 Sample $p\in[0,1]$ where $\Pr[p\le x^\gamma]=x^\gamma$ for all $x\in[0,1]$\;

 $S \gets \varnothing$\;

\For{$i = 1,\dots,k$}{
    \For{$j = 1,\dots,\tau$}{
        \WithProb{$p$}{
            $s_{i,j} \gets \cM_i(D)$\;
            
            $S \gets S \cup \{s_{i,j}\}$\;
        }
    }
}
\Return $\argmax_{x \in S} x$\;
\end{algorithm}

The privacy guarantee of this mechanism is stated next. Although \citealp{cohen2023generalized} proves a more general result, the statement below is sufficient for our purposes.
\begin{theorem}[\citealp{cohen2023generalized}]\label{thm:generalized_select}
\label{thm:cohen_selection}
Suppose each mechanism $\cM_i$ is $(\eps,\delta_i)$-DP with $\delta_i \ge 0$. Then, for any choice of $\tau \in \mathbb{N}$ and $\gamma \in (0,\infty)$, the algorithm $\Selection_{\tau,\gamma}$ is $((2+\gamma)\eps, \tau \sum_{i=1}^k \delta_i)$-DP.
\end{theorem}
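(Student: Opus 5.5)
The plan is to first prove the pure case ($\delta_i=0$ for all $i$) and then lift it to approximate DP by a decomposition-and-union-bound argument. Fix neighbors $D\sim D'$. To keep the output space discrete, I will treat each $\cM_i$ as outputting a scored element from a countable totally ordered domain, with a fixed deterministic tie-break; the general case should follow by a standard limiting argument.

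\textbf{Approximate-DP reduction.} An $(\eps,\delta_i)$-DP mechanism admits the standard decomposition: for the fixed pair $D,D'$ one can write
\[
\cM_i(D)=(1-\delta_i)P_i+\delta_i E_i,\qquad \cM_i(D')=(1-\delta_i)P'_i+\delta_i E'_i,
\]
where $P_i$ and $P'_i$ are $\eps$-indistinguishable. I will realize each of the at most $\tau$ invocations of $\cM_i$ by first flipping a $\delta_i$-coin and sampling from $E_i$ (or $E'_i$) when it lands heads. Since there are at most $\tau$ invocations per mechanism, a union bound gives
\[
\Pr[\text{some coin lands heads}]\le \tau\sum_i\delta_i ,
\]
uniformly over the random $p$. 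On the complementary event, $\Selection_{\tau,\gamma}$ is run on the pure $\eps$-DP family $\{P_i\}$ versus $\{P'_i\}$. Hence it suffices to show that $\Selection_{\tau,\gamma}$ is $(2+\gamma)\eps$-DP whenever every $\cM_i$ is $\eps$-DP. I still need to check that the no-heads conditioning does not distort the selection law. I expect it does not, because the coins are independent of $p$ and of the Bernoulli($p$) run decisions.

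\textbf{Pure case: closed form for the output law.} Write $\mu_i$ for the output law of $\cM_i(D)$, and let $a(y)=\sum_i\Pr[\cM_i(D)< y]$ and $m(y)=\sum_i\mu_i(y)$. Each of the $k\tau$ slots independently produces a sample from $\cM_i$ with probability $p$; think of a slot that is not run as producing $-\infty$. Then, conditional on $p$, the output is $y$ iff every slot yields a value at most $y$ (in the tie-break order) and some slot yields exactly $y$. Summing over the slots $i$ gives
\[
\Pr[\text{out}=y\mid p]=\prod_i\bigl(1-p\Pr[\cM_i\ge y]\bigr)^{\tau}-\prod_i\bigl(1-p\Pr[\cM_i> y]\bigr)^{\tau}.
\]
Integrating against the density $\gamma x^{\gamma-1}$ of $p$ and differentiating in the mass at $y$ expresses $\Pr[\text{out}=y]$ as
\[
\int_0^1 \gamma x^{\gamma-1}\, x\,\sum_i \tau\,\mu_i(y)\,G_y(x)\,dx .
\]
Here $G_y(x)$ is a product of terms $\bigl(1-x\Pr[\cM_j\ge z]\bigr)$ for intermediate levels $z$; this is exactly the mean-value form used by \citet{cohen2023generalized}.

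\textbf{Comparing $D$ and $D'$.} Two sources of discrepancy appear.
\begin{enumerate}
\item The factor $\mu_i(y)$ changes by at most $e^{\eps}$.
\item The survival terms $1-x\Pr[\cM_j\ge z]$ change because each $\Pr[\cM_j\ge z]$ changes by at most a factor $e^{\eps}$.
\end{enumerate}
To handle the second source I will use the substitution $x\mapsto e^{-\eps}x$. It maps the $D'$ survival factor into one dominated by the $D$ survival factor. This substitution changes the prior weight $x^{\gamma-1}\,dx$ by $e^{\gamma\eps}$ and the extra factor $x$ by $e^{\eps}$, for a total of $e^{(1+\gamma)\eps}$ on this part. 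Combined with the first source, this gives $e^{(2+\gamma)\eps}$.

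\textbf{Main obstacle.} The delicate point is that after rescaling the integration range becomes $[0,e^{-\eps}]$ rather than $[0,1]$. I also need monotonicity of $G_y$ in its arguments to justify the domination. I would handle the range by extending the integrand by zero, which only enlarges the bound, and verify monotonicity directly, since each factor $1-x\,s$ is decreasing in $s$. This analytic comparison is where I expect most of the work.
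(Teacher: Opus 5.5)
The paper gives no proof of this statement; it imports it from \citet{cohen2023generalized}, so your argument has to stand on its own. Your architecture is right. You reduce to pure DP, and then in the pure case you change variables $x=e^{\eps}u$ in the law of $p$. This costs $e^{\gamma\eps}$ for the prior $\gamma x^{\gamma-1}dx$, $e^{\eps}$ for the run probability $x$, and $e^{\eps}$ for the point mass $\mu_i(y)$.

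However, the step you call the mean-value form does not work as described. It, and not the range of integration, is the real crux. Your comparison needs the $D$- and $D'$-integrands to be products of factors $1-x\Pr[\text{event}]$ over the \emph{same} events, so that $s'\le e^{\eps}s$ gives $1-xs\le 1-e^{-\eps}xs'$. A mean-value point $\xi_j\in[\Pr[\cM_j>y],\Pr[\cM_j\ge y]]$ is chosen separately for each dataset, so that comparison is not available.

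Rescuing it by endpoint bounds also fails. Bounding the $D$-expression at the $\Pr[\cM_j>y]$ end and the $D'$-expression at the $\Pr[\cM_j\ge y]$ end would need $\Pr_{D'}[\cM_j\ge y]\le e^{\eps}\Pr_D[\cM_j>y]$, which breaks whenever $\cM_j$ has an atom at $y$. Concretely, take $k=1$ and $\cM_1\equiv y$ on both datasets. The endpoint bounds are $\tau x$ and $\tau x(1-x)^{\tau-1}$, and their integrals against $\gamma x^{\gamma-1}$ differ by a factor of order $\tau^{1+\gamma}$, even though the two output laws coincide.

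The fix is an exact identity. Order the $k\tau$ slots arbitrarily, split on the first slot that outputs $y$, and let $i(j)$ be the mechanism of slot $j$. With $a_i=\Pr[\cM_i(D)\ge y]$ and $b_i=\Pr[\cM_i(D)>y]$,
\[
\Pr[\mathrm{out}=y\mid p=x]=\sum_{j=1}^{k\tau}x\,\mu_{i(j)}(y)\prod_{l<j}\bigl(1-x\,a_{i(l)}\bigr)\prod_{l>j}\bigl(1-x\,b_{i(l)}\bigr).
\]
Every factor now refers to a fixed event, and the same events appear under $D'$. Your substitution then goes through verbatim. Extending the range from $[0,e^{-\eps}]$ to $[0,1]$ and the monotonicity of $1-xs$ are routine.

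\textbf{Smaller points.}
\begin{itemize}
\item Your displayed difference has its two products in the wrong order; as written it is nonpositive.
\item In the reduction, pre-draw a coin for every one of the $k\tau$ slots and work with the event $\mathcal{T}$ that \emph{all} of them land tails. It has probability $\rho=\prod_i(1-\delta_i)^{\tau}\ge 1-\tau\sum_i\delta_i$ under both datasets and is independent of $p$ and of the run decisions. Writing $\mathrm{out}_P$ for the output when every slot samples from $P_i$, you get $\Pr[\mathrm{out}_D\in S]\le\rho\Pr[\mathrm{out}_P\in S]+1-\rho$ and $\Pr[\mathrm{out}_{D'}\in S]\ge\rho\Pr[\mathrm{out}_{P'}\in S]$. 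Conditioning instead on ``no heads among the slots actually run'' would tilt the law of $p$.
\item The decomposition must be two-sided, with the same weight $\delta_i$ on both sides. The Kairouz--Oh--Viswanath simulation of $(\eps,\delta)$-DP by a four-point randomized response provides this. It is needed because the pure-case argument uses both $\mu^{D}_i(y)\le e^{\eps}\mu^{D'}_i(y)$ and $\Pr_{D'}[\cM_i\ge y]\le e^{\eps}\Pr_D[\cM_i\ge y]$.
\item Integrating the identity above over $y\in S$ against $\mu_{i(j)}$ handles uncountable output spaces directly, such as the Laplace-scored candidates used in the paper. No limiting argument is needed.
\end{itemize}
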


\section{Proofs from \Cref{sec:monotone}}\label{appendix:monotone} In this section, we provide the missing proofs from \Cref{sec:monotone} for monotone submodular maximization. We begin with the proof of \Cref{thm:two_guess_greedy_intro}. Then, we move on to present our faster algorithm in this setting and prove \Cref{thm:greedyplus_intro}.

\subsection{Proof of \Cref{thm:two_guess_greedy_intro}}

\begin{lemma}[\Cref{lemma:two_guess_greedy_complexity} restated]
    \Cref{alg:two_guess_greedy} makes $O(\beta^{-1}n^3k)$ oracle queries.
\end{lemma}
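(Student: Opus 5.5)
The bound follows by counting. First I bound the total number of invocations of the inner routine, then the cost of a single invocation of \Cref{algorithm:greedy}, then I multiply and add the lower-order terms.

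\textbf{Number of invocations.} The outer loop of \Cref{alg:two_guess_greedy} ranges over all $Y\subseteq\cN$ with $|Y|\le 2$. There are at most $1+n+\binom{n}{2}=O(n^2)$ such sets. For each $Y$ the inner loop runs $\lceil 3/\beta\rceil=O(\beta^{-1})$ times. In the worst case every iteration is executed (that is, $p$ is close to $1$), so the body runs at most $O(\beta^{-1}n^2)$ times. Each execution of the body makes the following queries:
\begin{itemize}
\item at most one call to \Cref{algorithm:greedy}, which occurs only when $|Y|=2$;
\item one evaluation of $f_D(S'_{Y,j})$ in Line~\ref{alg2:line_sample_laplace}.
\end{itemize}
The final argmax in Line~\ref{alg2:get_best} uses only the stored noisy values $z_{Y,j}$, so it makes no new oracle queries.

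\textbf{Cost of one call to \Cref{algorithm:greedy}.} The while loop adds exactly one element $u_{i+1}$ to $S_i$ per iteration and keeps $S_i$ feasible. Every feasible set has at most $k$ elements, so the loop runs at most $k$ times. In iteration $i$, computing the scores $q_i(u;D)=h_D(u\mid S_i)/c(u)$ for all $u\in\cC_i$ takes at most $|\cC_i|+1\le n+1$ value-oracle queries: one for $h_D(S_i)$ and one for each $h_D(S_i\cup\{u\})$.

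Each query to the residual objective $h_D(S)=f_D(S\mid Y)$ is answered with at most two queries to $f_D$, namely $f_D(S\cup Y)$ and $f_D(Y)$, and $f_D(Y)$ can be cached. So each call costs $O(nk)$ queries to $f_D$.

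The main subtlety is \GenRNMT. As written, it computes $q'(u;D)$ as a minimum over pairs $(u,v)$. This involves only the already-computed scores $q_i(\cdot;D)$ and the known sensitivities $2\Delta_f/c(u)$ from \Cref{lemma:sensitivity_bound}. It therefore costs extra computation but no extra oracle queries, so it does not inflate the query count.

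\textbf{Combining.} The total is $O(\beta^{-1}n^2)\cdot\bigl(O(nk)+1\bigr)=O(\beta^{-1}n^3k)$, which proves the lemma.
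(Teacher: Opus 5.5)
Your proposal is correct and uses the same counting argument as the paper: at most $O(n^2\beta^{-1})$ invocations of \Cref{algorithm:greedy}, each running at most $k$ iterations with $O(n)$ queries apiece, for a total of $O(\beta^{-1}n^3k)$. Your additional remarks are accurate details that the paper leaves implicit: one extra query per iteration for the noisy value, the two-query simulation of the residual objective $h_D$, and the fact that the generalized Report Noisy Max step reuses already-computed scores.
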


\begin{proof}
Each execution of \Cref{algorithm:greedy} consists of at most $k$ iterations of the \texttt{while} loop, as the cardinality of the solution increases by one and remains feasible after each iteration. Each iteration evaluates $O(n)$ marginal gains, resulting in $O(nk)$ oracle queries per execution. \Cref{alg:two_guess_greedy} invokes \Cref{algorithm:greedy} at most $O(\beta^{-1})$ times for every subset of $\cN$ with cardinality at most $2$, totaling $O(n^2 \beta^{-1})$ invocations. Consequently, the total query complexity is $O(\beta^{-1}n^3k)$.
\end{proof}

\subsubsection{Privacy Analysis}
\begin{lemma}[\Cref{lemma:2guess_privacy} restated]\label[lemma]{lemma:2guess_privacy_formal}
For any $\eps,\delta\in (0,1]$, \Cref{alg:two_guess_greedy} is $(\eps,\delta)$-DP.
\end{lemma}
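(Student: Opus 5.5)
We will follow the structure of the proof sketch of \Cref{lemma:2guess_privacy}: first bound the privacy loss of each inner run of \Cref{algorithm:greedy}, then of one candidate-producing mechanism, and finally of the whole enumeration via the \Selection{} framework (\Cref{thm:cohen_selection}).

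\textbf{Step 1 (one inner run).} Fix $Y$ with $|Y|=2$. The residual objective is $h_D(S)=f_D(S\cup Y)-f_D(Y)$. Its sensitivity is at most $2\Delta_f$, and since $c(u)$ does not depend on $D$, this agrees with \Cref{lemma:sensitivity_bound}. In the residual run, each iteration calls $\GenRNM{\eps_0/6}{\cC_i,q_i,\beta/(3k)}{D}$. By \Cref{thm:GRNM_formal}, this call is $(\eps_0/6)$-DP for any score sensitivities, so the varying per-candidate sensitivities $2\Delta_f/c(u)$ play no role in the privacy argument. The candidate set $\cC_i$ depends only on public costs and on the previous outputs, so the run is an adaptive composition of at most $k$ such mechanisms. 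We pad it with dummy steps so that there are exactly $k$. Applying advanced composition (\Cref{thm:composition}) with slack $\delta'$, the run is $(\eps_1,\delta')$-DP, where
\[
\eps_1=\sqrt{2k\log(1/\delta')}\,\tfrac{\eps_0}{6}+k\tfrac{\eps_0}{6}\bigl(e^{\eps_0/6}-1\bigr).
\]

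\textbf{Step 2 (one candidate mechanism).} Define $\mathcal{M}_Y$ to output the pair $(S'_{Y},z_Y)$, where $z_Y=f_D(S'_Y)+\Lap{6\Delta_f/\eps}$. For $|Y|\le1$ the set $S'_Y$ is $Y$ and does not depend on $D$, so only the Laplace step is private. By \Cref{thm:laplace_guarantees} the noisy value is $(\eps/6)$-DP. By basic composition with Step 1, $\mathcal{M}_Y$ is $(\eps_1+\eps/6,\delta')$-DP. We then fix $\eps_0$ so that $\eps_1\le\eps/6$, which gives $\mathcal{M}_Y$ the guarantee $(\eps/3,\delta')$. Using $\eps\le1$ and the last clause of \Cref{thm:composition}, it suffices to take
\[
\eps_0=\frac{\eps}{2\sqrt{2k\log(1/\delta')}}.
\]
With $\delta'=\Theta(\delta\beta/n^2)$ this is $\Theta\bigl(\eps/\sqrt{k\log(n/(\delta\beta))}\bigr)$, matching Line~\ref{alg:two_guess_seteps}.

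\textbf{Step 3 (selection).} \Cref{alg:two_guess_greedy} is the \Selection{} procedure of \Cref{alg:selection} applied to $\{\mathcal{M}_Y\}$, with three identifications:
\begin{enumerate}[label=(\roman*)]
\item $\tau=\lceil 3/\beta\rceil$;
\item $\gamma=1$, since $p$ is uniform;
\item outputs are ordered by the noisy score $z$, and the final $\argmax$ is post-processing.
\end{enumerate}
The only discrepancy is the empty case $\Omega=\varnothing$, which returns a fixed output. This can be handled by adding a constant dummy candidate with score $-\infty$, or by noting that \Cref{thm:cohen_selection} covers it. By \Cref{thm:cohen_selection} the whole algorithm is $\bigl((2+1)\eps/3,\ \tau\sum_Y\delta'\bigr)$-DP. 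The number of guesses $Y$ is at most $n^2+n+1\le 3n^2$. Choosing $\delta'=\delta\beta/(12n^2)$ gives $\tau\cdot 3n^2\delta'\le\delta$, so the algorithm is $(\eps,\delta)$-DP.

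\textbf{Main obstacle.} The delicate point is Step 3: checking that the algorithm really matches the hypotheses of \Cref{thm:cohen_selection}. In particular, the Laplace score must be produced inside each $\mathcal{M}_Y$, so that the comparison of candidates is post-processing and not an extra query. Also, all mechanisms share the same $\eps/3$ guarantee, while $\delta$ grows linearly in $\tau n^2$. This linear growth is exactly why $\delta'$, and therefore $\eps_0$, carries the $\log(n/(\delta\beta))$ factor. The second point is bookkeeping for the constant in $\eps_0$ together with the $e^{\eps_0/6}-1$ term, which is routine given $\eps\le1$.
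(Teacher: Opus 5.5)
Your proposal is correct and follows the paper's proof essentially step for step. Advanced composition over the at most $k$ selection calls at level $\varepsilon_0/6$, with $\varepsilon_0=\varepsilon/(2\sqrt{2k\log(1/\delta')})$, makes each run of \Cref{algorithm:greedy} $(\varepsilon/6,\delta')$-DP; basic composition with the $\varepsilon/6$ Laplace release makes each $\mathcal{M}_Y$ $(\varepsilon/3,\delta')$-DP; and the \texttt{Selection} theorem with $\tau=\lceil 3/\beta\rceil$, $\gamma=1$ finishes. The differences are only cosmetic: you take $\delta'=\delta\beta/(12n^2)$ and count the $n^2+n+1\le 3n^2$ guesses explicitly, whereas the paper uses $\delta'=\delta/(n^2\lceil 3/\beta\rceil)$ and bounds the number of guesses by $n^2$. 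You also address the default output when $\Omega=\varnothing$, which the paper leaves implicit.
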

\begin{proof}
Let $\delta' = \frac{\delta}{n^2 \lceil 3/\beta \rceil}$. We begin by analyzing the privacy guarantee of \Cref{algorithm:greedy}, instantiated with privacy parameter $\eps_0/6$, where
\[
    \eps_0 = \frac{\eps}{2\sqrt{2k\log(1/\delta')}}.
\]
The algorithm invokes \GenRNMT sequentially at most $k$ times, each with parameter $\eps_0/6$. By the privacy guarantee of \GenRNMT (\Cref{thm:GRNM_formal}) and the advanced composition theorem (\Cref{thm:composition}), it follows that \Cref{algorithm:greedy} is $(\eps/6,\delta')$-DP.

We next consider any subset $Y \subseteq \cN$ of size at most two and define a corresponding mechanism $\cM_Y$. The mechanism first samples noise $z_Y \sim \mathrm{Lap}(\tfrac{6\Delta_f}{\eps})$. If $|Y|=2$, it runs \Cref{algorithm:greedy} on the residual instance induced by $Y$, obtains its output $S_Y$, and outputs the pair $(S_Y, f(S_Y \cup Y) + z_Y)$. By the guarantee of the Laplace mechanism (\Cref{thm:laplace_guarantees}), the privacy guarantee of \Cref{algorithm:greedy} established above, and basic composition (\Cref{thm:composition}), $\cM_Y$ is $(\eps/3,\delta')$-DP. If $|Y|<2$, the mechanism outputs $(Y, f(Y)+z_Y)$, which is $(\eps/6,0)$-DP and in particular $(\eps/3,\delta')$-DP.

Finally, \Cref{alg:two_guess_greedy} can be viewed as an instantiation of $\Selection_{\tau,\gamma}$ applied to the collection of mechanisms
\[
\{\cM_Y \mid Y \subseteq \cN,\ |Y| \le 2,\ c(Y) \le B\},
\]
with parameters $\tau = \lceil 3/\beta \rceil$ and $\gamma = 1$. By the privacy guarantees of the generalized private selection procedure (\Cref{thm:generalized_select}), the overall algorithm satisfies $(\eps,\delta)$-DP, since
\[
    \tau \cdot \sum_{Y} \delta' \le \tau n^2 \delta' = \delta.
\]
\end{proof}

\subsubsection{Utility Analysis}
We follow the proof outlined by \citealp{feldman2023practical}. We begin by recalling the notation introduced in \Cref{sec:monotone}. Let $\OPT$ denote an optimal solution, and assume first that $|\OPT| > 2$. Let $Y^* \subseteq \OPT$ be a subset of size two with maximum value, and consider the residual instance defined by the ground set $\cN \setminus Y^*$, budget $1 - c(Y^*)$, and objective function $h(S) = f(S \mid Y^*)$. Note that $h$ is a non-negative, monotone, and submodular function. Recall that we normalize the total budget to $B=1$.

Let $O = \OPT \setminus Y^*$, and let $o_m$ be an element in $O$ with maximum cost, i.e., $o_m \in \arg\max_{o \in O} c(o)$. Note that $O$ is feasible in the residual instance and is non-empty by the assumption that $|\OPT|>2$. Let $S_1, \dots, S_\ell$ denote the sequence of partial solutions generated by \Cref{algorithm:greedy}, where $S_\ell$ is the returned solution. 
Define:
\[
T = \min \left\{ i \;\mid\; (1 \le i \le \ell - 1 \text{ and } c(S_i) > 1 - c(Y^*) - c(o_m)) \text{ or } (i = \ell) \right\}.
\]
Intuitively, $T$ is the first index among $i=1,\dots,\ell-1$ such that the cost exceeds $1 - c(Y^*) - c(o_m)$, if such an index exists; otherwise, $T = \ell$.

\begin{lemma}[\Cref{lemma:sensitivity_bound} restated]\label{lemma:sensitivity_bound_formal}
For every iteration $i$ of \Cref{algorithm:greedy} and $u \in \mathcal{C}_i$, the quality score $q_i(u;D)$ has sensitivity $2\Delta_f/ c(u)$.
\end{lemma}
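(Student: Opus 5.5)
The plan is to fix an iteration $i$ of \Cref{algorithm:greedy}, run on the residual instance with objective $h_D(S)=f_D(S\mid Y)$, together with a candidate $u\in\cC_i$. I will then bound $|q_i(u;D)-q_i(u;D')|$ for neighboring $D\sim D'$, treating $S_i$ and $u$ as fixed. This is legitimate because the sensitivity of a score used inside a private selection step is defined pointwise, for a fixed history of previous outputs. Under adaptive composition, the set $S_i$ is just the prefix of previously released outputs, and the cost $c(u)$ is public and independent of the data. So the quantity to control is $\frac{1}{c(u)}\,|h_D(u\mid S_i)-h_{D'}(u\mid S_i)|$.

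Next I expand the marginal gain as a difference of two set values:
\[
h_D(u\mid S_i) = f_D(S_i\cup Y\cup\{u\}) - f_D(S_i\cup Y).
\]
This holds because $h_D(S)=f_D(S\cup Y)-f_D(Y)$, so the $f_D(Y)$ terms cancel in the marginal. Applying the triangle inequality, the change between $D$ and $D'$ is at most
\[
|f_D(S_i\cup Y\cup\{u\})-f_{D'}(S_i\cup Y\cup\{u\})| + |f_D(S_i\cup Y)-f_{D'}(S_i\cup Y)|.
\]
Each of these two terms is at most $\Delta_f$ by the definition of the maximal sensitivity $\Delta_f$, which is a maximum over all sets. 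Dividing by $c(u)>0$ gives the bound $2\Delta_f/c(u)$. The case where \Cref{algorithm:greedy} is applied directly to $f$ (empty $Y$) is the same computation with $Y=\varnothing$.

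There is no real obstacle in this argument. The only point that needs to be stated is that $S_i$ and the candidate set $\cC_i$ are determined by earlier outputs together with the public costs, not by $D$ itself. This is what lets the per-step sensitivity be computed with $S_i$ held fixed, as is standard when combining with \Cref{thm:GRNM_formal} and composition.
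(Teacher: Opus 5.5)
Your proposal is correct and follows the paper's argument: expand $q_i(u;D)$ as $\bigl(f_D(S_i\cup Y\cup\{u\})-f_D(S_i\cup Y)\bigr)/c(u)$, apply the triangle inequality, bound each of the two terms by $\Delta_f$, and divide by $c(u)$. Your remark that $S_i$, $\cC_i$, and $c(u)$ are fixed by previously released outputs and public costs is a useful clarification that the paper leaves implicit.
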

\begin{proof}
    Recall that for all $u\in \cC_i$ we have $q_i(u;D)=\frac{h_D(u\mid S_i)}{c(u)}$, and $h_D(S)=f_D(S\mid Y)$ for some subset $Y\subseteq \cN$. Expanding the definition of $h_D$, we obtain 
    \[
        q_i(u;D)= \frac{h_D(S_i\cup \set{u}\mid Y)-h_D(S_i\mid Y)}{c(u)} = \frac{f_D(S_i\cup Y\cup \set{u})-f_D(S_i\cup Y)}{c(u)}.
    \]
       Since $f_D(S)$ has sensitivity at most $\Delta_f$ for any $S\subseteq \cN$, for any two neighboring datasets $D,D'$
\begin{align*}
|q_i(u;D) - q_i(u;D')|
&\le \frac{1}{c(u)} \bigl| f_D(S_i \cup Y \cup \{u\}) - f_{D'}(S_i \cup Y \cup \{u\}) \bigr| \\
&\quad + \frac{1}{c(u)} \bigl| f_{D'}(S_i \cup Y) - f_D(S_i \cup Y) \bigr| \\
&\le \frac{2\Delta_f}{c(u)}.
\end{align*}
\end{proof}

\begin{lemma}[Formal version of \Cref{lemma:gain_progress}]\label[lemma]{lemma:gain_progress_formal}
Let $Y^* \subseteq \OPT$ be a maximum-value subset of size $2$. Suppose \Cref{algorithm:greedy} is instantiated with parameters $\eps$ and $\beta$ and executed on the residual instance defined by $Y^*$, i.e., with ground set $\cN \setminus Y^*$, budget $1 - c(Y^*)$, and objective $h(S) = f(S \mid Y^*)$. Then, with probability $1-\beta$, for every $i=0,\dots, T-1$:
    \[
        (1-c(Y^*)-c(o_m))\cdot \paren{\frac{h(u_{i+1}\mid S_i)}{c(u_{i+1})}} \ge h(O) - h(S_i) - h(o_m) - \frac{8k\Delta_f}{\eps}\log\frac{nk}{\beta}.
    \]
\end{lemma}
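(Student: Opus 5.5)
We argue on a single high-probability event. Algorithm~\ref{algorithm:greedy} runs at most $k$ iterations, and each calls \GenRNMT with failure probability $\beta/k$ over a candidate set of size at most $n$. Combining \Cref{thm:GRNM_formal} (in the form of \Cref{remark;GRNM} with $M=n$) with \Cref{lemma:sensitivity_bound_formal}, which gives $\Delta_u = 2\Delta_f/c(u)$, and taking a union bound over the iterations, we get the following with probability at least $1-\beta$. For every iteration $i$ and every $v\in\cC_i$,
\[
\frac{h(u_{i+1}\mid S_i)}{c(u_{i+1})} \ge \frac{h(v\mid S_i)}{c(v)} - \frac{8\Delta_f}{c(v)\eps}\log\frac{nk}{\beta}.
\]
We condition on this event for the rest of the argument. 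Write $E=\frac{8\Delta_f}{\eps}\log\frac{nk}{\beta}$, so the per-candidate error is $E/c(v)$.

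Fix $i\le T-1$. By the definition of $T$, $c(S_i)\le 1-c(Y^*)-c(o_m)$; for $i=0$ this holds trivially. Since every $o\in O$ has $c(o)\le c(o_m)$, each $o\in O\setminus S_i$ is feasible to add in the residual instance, so $O\setminus S_i\subseteq \cC_i$. Let $O'=O\setminus(S_i\cup\{o_m\})$. Applying the displayed inequality to each $o\in O'$ and multiplying by $c(o)$ gives
\[
c(o)\,\frac{h(u_{i+1}\mid S_i)}{c(u_{i+1})} \ge h(o\mid S_i) - E .
\]
Summing over $O'$, the left side is at most $c(O')\cdot\frac{h(u_{i+1}\mid S_i)}{c(u_{i+1})}$. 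Here $c(O')\le c(O)-c(o_m)\le 1-c(Y^*)-c(o_m)$. The step is nonnegative only if the density of $u_{i+1}$ is nonnegative, which holds by monotonicity of $h$. The right side is at least $\sum_{o\in O'}h(o\mid S_i)-kE$, since $|O'|\le k$.

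To finish, submodularity and monotonicity give
\[
\sum_{o\in O'}h(o\mid S_i)\ge h(O'\mid S_i)\ge h(O\cup S_i)-h(S_i)-h(o_m\mid S_i\cup O')\ge h(O)-h(S_i)-h(o_m).
\]
This yields the stated bound with error $kE=\frac{8k\Delta_f}{\eps}\log\frac{nk}{\beta}$. If $o_m\in S_i$, we use $O\setminus S_i$ directly; the cost bound $c(O\setminus S_i)\le c(O)-c(o_m)$ still holds.

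The main obstacle is the error accounting. A naive bound gives error $\Delta_f/(c(v)\eps)$ per element, which after multiplying by the budget factor would scale like $1/c_{\min}$. The key point is to multiply each inequality by $c(o)$ \emph{before} summing, so that the cost-dependent sensitivity cancels and each optimal element contributes only $E$. This requires the per-candidate guarantee of \GenRNMT rather than the uniform guarantee of \RNMT.
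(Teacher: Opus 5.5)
Your proof is correct and follows essentially the same route as the paper. Both use the per-candidate guarantee of the generalized noisy-max mechanism with sensitivity $2\Delta_f/c(u)$, a union bound over at most $k$ iterations, and feasibility of $O\setminus S_i$ for $i<T$. Both then multiply each density inequality by $c(o)$ before summing over $O\setminus(S_i\cup\{o_m\})$, and finish with submodularity and monotonicity to reach $h(O)-h(S_i)-h(o_m)$.

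The only difference is cosmetic: the paper organizes the last step around $R_i=S_i\cup\{o_m\}$ and the bound $h(S_i\cup\{o_m\})\le h(S_i)+h(o_m)$, whereas you peel off $h(o_m\mid S_i\cup O')$. You also make explicit two points the paper uses only implicitly: the $1/c(v)$ factor in the error term, and the nonnegativity of the selected element's density.
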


\begin{proof}
Since the sets $S_i$ remain feasible in all iterations, the number of iterations is at most $k$. By the utility guarantee of \GenRNMT (\Cref{thm:GRNM_formal}) and the sensitivity bound (\Cref{lemma:sensitivity_bound_formal}), in each iteration $i$, with probability at least $1 - \beta/k$, \GenRNMT selects an element $u_{i+1}$ satisfying:
\begin{align}
    \label{eq:GEM_guarantee1}
    \frac{h(u_{i+1}\mid S_i)}{c(u_{i+1})} \ge \max_{u \in \cC_i} \left\{ \frac{h(u \mid S_i)}{c(u)} - \frac{8\Delta_f}{\eps}\log\frac{nk}{\beta} \right\}.
\end{align}
By a union bound, this holds for all iterations with probability at least $1 - \beta$. We condition on this event for the remainder of the proof.

Let $R_i = S_i \cup \{o_m\}$ and let $E = \frac{8\Delta_f}{\eps}\log\frac{nk}{\beta}$. We observe:
\begin{align}
    h(O) - h(S_i \cup \{o_m\}) &= h(O) - h(R_i) \nonumber\\
    &\le h(O \cup R_i) - h(R_i) \tag{Monotonicity} \nonumber\\
    &\le \sum_{s \in O \setminus R_i} h(s \mid R_i) \tag{Submodularity} \nonumber\\
    &= \sum_{s \in O \setminus R_i} c(s) \frac{h(s \mid R_i)}{c(s)} \nonumber\\
    &\le \sum_{s \in O \setminus R_i} c(s) \cdot \left( \frac{h(u_{i+1} \mid S_i)}{c(u_{i+1})} + \frac{E}{c(s)} \right) \label{eq22} \\
    &= c(O \setminus R_i) \cdot \frac{h(u_{i+1} \mid S_i)}{c(u_{i+1})} + |O \setminus R_i| \cdot E\nonumber \\
    &\le (c(O) - c(O \cap R_i)) \cdot \frac{h(u_{i+1} \mid S_i)}{c(u_{i+1})} + kE \nonumber\\
    &\le (1 - c(Y^*) - c(o_m)) \cdot \frac{h(u_{i+1} \mid S_i)}{c(u_{i+1})} + kE.
\end{align}
For inequality \eqref{eq22}, note that since $i < T$, we have $c(S_i) + c(o_m) \le 1 - c(Y^*)$. Thus, every $s \in O \setminus R_i$ can be added to $S_i$ without violating feasibility (i.e., $O \setminus R_i \subseteq \cC_i$), and the bound follows from \eqref{eq:GEM_guarantee1} and the fact that $h(s \mid R_i) \le h(s \mid S_i)$ by submodularity. The final inequality holds by $c(O) \le 1 - c(Y^*)$ and $o_m \in O \cap R_i$.

By the submodularity and non-negativity of $h$ we have $h(S_i \cup \{o_m\}) \le h(S_i) + h(o_m)$. Rearranging the terms in the established chain of inequalities then yields the lemma.
\end{proof}

\begin{lemma}\label[lemma]{lemma:greedy_performance} 
Suppose \Cref{algorithm:greedy} is instantiated with parameters $\eps$ and $\beta$, and executed on the residual instance defined as in \Cref{lemma:gain_progress_formal}. Then, with probability $1-\beta$,
Then, with probability $1-\beta$,
\[
h(S_\ell) \ge (1 - \tfrac{1}{e})\cdot  \bigl( h(O) - h(o_m) \bigr)
- \frac{8k\Delta_f }{\eps} \log \frac{nk}{\beta}.
\]
\end{lemma}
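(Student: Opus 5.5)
We condition on the event of \Cref{lemma:gain_progress_formal}, which holds with probability $1-\beta$, and unroll the progress inequality over iterations $0,\dots,T-1$. Write $W = 1 - c(Y^*) - c(o_m)$, $E' = \frac{8k\Delta_f}{\eps}\log\frac{nk}{\beta}$, and define the residual gap $g_i = h(O) - h(o_m) - E' - h(S_i)$. Since $h(u_{i+1}\mid S_i) = h(S_{i+1}) - h(S_i)$, the lemma reads $h(S_{i+1}) - h(S_i) \ge \frac{c(u_{i+1})}{W} g_i$. Equivalently, $g_{i+1} \le \bigl(1 - \frac{c(u_{i+1})}{W}\bigr) g_i$. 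If some $g_i$ becomes nonpositive, monotonicity of $h$ gives the claim at once. We may therefore assume $g_i>0$ throughout. Using $1-x\le e^{-x}$ and $g_0 \le h(O)-h(o_m)-E'$ (as $h(\varnothing)=0$), we get
\[
g_T \le e^{-c(S_T)/W}\, g_0,
\]
where the bound follows by induction on $i$ using $c(S_i)=\sum_{j\le i} c(u_j)$.

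Next we split on how $T$ is defined. In the first case, $T<\ell$ and $c(S_T) > W$. Then $g_T \le e^{-1} g_0$, so $h(S_T) \ge (1-1/e)(h(O)-h(o_m)-E') \ge (1-1/e)(h(O)-h(o_m)) - E'$. Monotonicity gives $h(S_\ell)\ge h(S_T)$. A subtlety is that the inequality for the step $T-1 \to T$ uses the lemma at $i=T-1<T$, which is allowed, and the exponent uses $c(S_T)$, which includes $u_T$. That is exactly what we need.

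The second case, $T=\ell$ with $c(S_{\ell}) \le W$ or no crossing index, is the main obstacle, since the accumulated cost may be small. Here we argue directly, as in \Cref{lemma:gain_progress_formal}. At termination, no element of $\cN\setminus(Y^*\cup S_\ell)$ fits. If $c(S_{\ell-1})\le W$ for all indices, then after $S_\ell$ either $c(S_\ell)> W$ and the first case's exponent argument applies with $T=\ell$, or $c(S_\ell)\le W$. In the latter subcase, $o_m$ and every element of $O$ still fit, since $c(s)\le c(o_m)$. So by the stopping rule $O\subseteq S_\ell$, and monotonicity gives $h(S_\ell)\ge h(O)$, which is stronger than needed. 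Assembling both cases yields the stated bound.
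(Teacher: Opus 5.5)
Your proof is correct and follows the paper's argument. You condition on the event of \Cref{lemma:gain_progress_formal}, unroll the multiplicative decrease of the gap $h(O)-h(o_m)-E'-h(S_i)$ via $1-x\le e^{-x}$ up to index $T$, and settle the case $T=\ell$ by the termination rule; your step ``if $c(S_\ell)\le W$ then every element of $O$ still fits, so $O\subseteq S_\ell$'' is just the contrapositive of the paper's ``if $O\not\subseteq S_\ell$ then $c(S_\ell)>1-c(Y^*)-c(o)\ge W$.''
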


\begin{proof}
Let us denote  $E = \frac{8\Delta_f}{\eps}\log\frac{nk}{\beta}$.
We may assume without loss of generality that $h(S_i)< h(O)-h(o_m)-kE$ for every $0\le i\le T$, since otherwise the lemma follows immediately from the monotonicity of $h$.
Applying \Cref{lemma:gain_progress} for every $0\le i\le T-1$, we get
 \begin{align*}
      \frac{h(S_{i+1})-h(S_{i})}{c(u_{i+1})}=\frac{h(u_{i+1}\mid S_i)}{c(u_{i+1})}\ge \frac{h(O)- h(S_i)-h(o_m) -kE}{1 -c(Y^*) -c(o_m)} && \forall i\in \set{0,\dots,  T-1}.
 \end{align*}
Rearranging yields
\begin{align*}
    h(O)-h(S_{i+1})-h(o_m) -kE &\le \paren{1-\frac{c(u_{i+1})}{1 -c(Y^*) -c(o_m)}}[h(O)-h(S_i)-h(o_m)-kE]\\
    \le &e^{-c(u_{i+1})/(1 -c(Y^*) -c(o_m))}\cdot[h(O)-h(S_i)-h(o_m)-kE]
\end{align*}
Unraveling the last inequality for all $0\le i\le T-1$ gives
\begin{align*}
 h(O)-h(S_T)-h(o_m)-kE  \le & \prod_{i=0}^{T-1} e^{-c(u_{i+1})/(1 -c(Y^*) -c(o_m))}\cdot[h(O)-h(S_0)-h(o_m)-kE]\\
   &=  e^{-c(S_{T})/(1 -c(Y^*) -c(o_m))}\cdot[h(O)-h(S_0)-h(o_m)-kE]\\
    &\le  e^{-1}\cdot[h(O)-h(o_m)-kE]
\end{align*}
To see why the last inequality holds, consider two cases. If $T<\ell$, then by definition of $T$ we have $c(S_T)>1 -c(Y^*) -c(o_m)$. Now consider the case $T=\ell$. If $O\subseteq S_\ell$, then the inequality in the lemma holds by monotonicity of $h$. Otherwise, there exists an element $o\in O\setminus S_\ell$. Since \Cref{algorithm:greedy} has terminated at iteration $\ell$, no more elements can be added while maintaining feasibility in the residual instance, and therefore $c(S_\ell)>1 -c(Y^*)-c(o)\ge1 -c(Y^*) -c(o_m)$.

By rearranging we get
\[
    h(S_T) \ge (1 - \tfrac{1}{e})[ h(O)-h(o_m) -kE]
\]
The lemma follows by monotonicity.

\end{proof}

\begin{lemma}[Formal version of \Cref{lemma:performance_given_Y}]
Suppose \Cref{algorithm:greedy} is instantiated with parameters $\eps>0$ and $\beta\in(0,1)$, and executed on the residual instance defined as in \Cref{lemma:gain_progress_formal}. Then, with probability $1-\beta$,

\[
    f(Y^*\cup S_\ell) \ge (1 - \tfrac{1}{e})\cdot f(\OPT) - \frac{8k\Delta_f}{\eps} \log \frac{nk}{\beta}.
\]
\end{lemma}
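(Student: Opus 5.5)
We derive the statement from \Cref{lemma:greedy_performance}, which holds on the same probability-$(1-\beta)$ event, by translating the residual bound on $h$ back into a bound on $f$. We condition on the event of \Cref{lemma:greedy_performance}. By definition of $h$, we have $f(Y^*\cup S_\ell) = f(Y^*) + h(S_\ell)$ and $h(O) = f(\OPT) - f(Y^*)$. Substituting these into \Cref{lemma:greedy_performance} gives
\[
f(Y^*\cup S_\ell) \ge f(Y^*) + (1-\tfrac1e)\bigl(f(\OPT) - f(Y^*) - h(o_m)\bigr) - \tfrac{8k\Delta_f}{\eps}\log\tfrac{nk}{\beta}.
\]
It therefore suffices to show that
\[
f(Y^*) \ge (1-\tfrac1e)\bigl(f(Y^*) + h(o_m)\bigr),
\qquad\text{i.e.,}\qquad
h(o_m) \le \tfrac{1}{e-1}\, f(Y^*).
\]
Since $1/(e-1) > 1/2$, it is enough to prove $h(o_m) \le f(Y^*)/2$.

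This inequality is the only genuinely new ingredient, and it uses the choice of $Y^*$ as a maximum-value two-element subset of $\OPT$. Write $Y^* = \{y_1, y_2\}$. Since $|\OPT| > 2$, both $\{y_1, o_m\}$ and $\{y_2, o_m\}$ are two-element subsets of $\OPT$, so by maximality $f(\{y_j, o_m\}) \le f(Y^*)$ for $j = 1,2$. By submodularity,
\[
h(o_m) = f(o_m \mid Y^*) \le f(o_m \mid \{y_j\}) = f(\{y_j, o_m\}) - f(y_j) \le f(Y^*) - f(y_j).
\]
Summing over $j = 1,2$ gives $2h(o_m) \le 2f(Y^*) - f(y_1) - f(y_2)$. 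Non-negativity and submodularity give $f(y_1) + f(y_2) \ge f(Y^*) + f(\varnothing) \ge f(Y^*)$, so $2h(o_m) \le f(Y^*)$, as required.

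The remaining steps are bookkeeping. We combine the displayed inequality with $h(o_m) \le f(Y^*)/2$ and check that the coefficient of $f(Y^*)$ is non-negative, namely $1 - (1-\tfrac1e)\cdot\tfrac32 \ge 0$. Dropping that term then yields $f(Y^*\cup S_\ell) \ge (1-\tfrac1e) f(\OPT) - \tfrac{8k\Delta_f}{\eps}\log\tfrac{nk}{\beta}$.

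The main point requiring care is the submodularity argument bounding $h(o_m)$. In particular, it needs $|\OPT| > 2$, so that $o_m \notin Y^*$ exists, and it needs $f(\varnothing) \ge 0$. All of the error terms are inherited unchanged from \Cref{lemma:greedy_performance}.
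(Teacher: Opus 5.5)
Your proposal is correct and follows the paper's proof essentially step for step. You write $f(Y^*\cup S_\ell)=f(Y^*)+h(S_\ell)$, invoke \Cref{lemma:greedy_performance}, and reduce the claim to $(1-\tfrac1e)\,f(o_m\mid Y^*)\le \tfrac1e f(Y^*)$. You then prove this exactly as the paper does: comparing $Y^*$ with $\{y_j,o_m\}$ by maximality, applying submodularity, and using $f(y_1)+f(y_2)\ge f(Y^*)$ gives $2f(o_m\mid Y^*)\le f(Y^*)$, which suffices because $\tfrac{2}{e}\ge 1-\tfrac1e$.
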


\begin{proof}
With probability $1-\beta$, the event stated in \Cref{lemma:gain_progress} holds. Let $S_1,\dots, S_\ell$ be a sequence of sets computed by the \Cref{algorithm:greedy} conditioned on this event.
We have
\begin{align*}
    f(Y^*\cup S_\ell)=h(S_\ell)+f(Y^*) &\ge (1-\tfrac{1}{e})\cdot  \bigl[ h(\OPT\setminus Y^*)-h(o_m)\bigr] +f(Y^*) -\frac{8k\Delta_f}{\eps} \log \frac{nk}{\beta} \\
    &=(1-\tfrac{1}{e})\cdot f(\OPT) + \tfrac{1}{e}\cdot  f(Y^*) -(1-\tfrac{1}{e}) f(o_m\mid Y^*)-\frac{8k\Delta_f}{\eps} \log \frac{nk}{\beta},
\end{align*}
where the first inequality holds by \Cref{lemma:greedy_performance}
Hence it remains to show that $\tfrac{1}{e}f(Y^*) \ge  (1-\tfrac{1}{e}) f(o_m \mid Y^*)$. The proof is identical to that in \cite{feldman2023practical}, and follows from the definition of $Y^*$ and submodularity:
\begin{align*}
\tfrac{1}{e}\cdot f(Y^*) &\ge \tfrac{1}{e}\cdot \bigl[ f(\{u_1,o_m\}) + f(\{u_2,o_m\}) - f(Y^*) \bigr] \\
&= \tfrac{1}{e}\cdot \bigl[ f(o_m \mid \{u_1\}) + f(o_m \mid \{u_2\}) + f(u_1) + f(u_2) - f(Y^*) \bigr] \\
&\ge \tfrac{2}{e}\cdot f(o_m \mid Y^*) \\
&\ge (1-\tfrac{1}{e}) \cdot f(o_m \mid Y^*) .
\end{align*}

\end{proof}

The next lemma concludes the utility proof for \Cref{alg:two_guess_greedy}.We remove the assumption that $|\OPT| > 2$ and establish the guarantee for the general case.
\begin{lemma}[\Cref{lemma:two_guess_greedy_utility} restated]
For any $\eps, \delta>0$, $ \beta\in(0,1)$, \Cref{alg:two_guess_greedy} outputs a feasible $S^*$ such that, with probability $1-\beta$,  
\[
f(S)\ge (1-\tfrac{1}{e})\cdot f(\OPT)
-O\paren{
\frac{k^{1.5}\Delta_f}{\eps}
\sqrt{\log\tfrac{n}{\delta\beta}}
 \log\tfrac{n}{\beta}}.
\]
\end{lemma}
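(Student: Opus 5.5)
We prove the bound by following the proof sketch of \Cref{lemma:two_guess_greedy_utility}: we define a good event $\cE$, show $\Pr[\cE]\ge 1-\beta$, and then show that on $\cE$ the returned set is near-optimal. Feasibility is immediate, since every candidate $S'_{Y,j}=S_{Y,j}\cup Y$ satisfies $c(Y)+c(S_{Y,j})\le c(Y)+(B-c(Y))=B$, and $\varnothing$ is feasible.

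Let $Y^\star$ be a guess defined as follows. If $|\OPT|>2$, take $Y^\star=Y^*$, a maximum-value size-two subset of $\OPT$. If $|\OPT|\le 2$, take $Y^\star=\OPT$; then $S'_{Y^\star,j}=\OPT$ whenever the greedy output is empty. When $|\OPT|=2$ the residual greedy may add further elements, but by monotonicity $f(S'_{Y^\star,j})\ge f(\OPT)$ regardless.

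The event $\cE$ is the intersection of three events.
\begin{itemize}
\item[(i)] Some iteration $(Y^\star,j)$ is not skipped.
\item[(ii)] For one such iteration (say the first one executed), the event of the formal version of \Cref{lemma:performance_given_Y} holds. Algorithm~\ref{algorithm:greedy} is called there with privacy parameter $\eps_0/6$ and failure parameter $\beta/3$.
\item[(iii)] Every Laplace noise among the $|\Omega|$ candidates is at most $E_L=\frac{6\Delta_f}{\eps}\log\frac{3|\Omega|}{\beta}$ in absolute value.
\end{itemize}

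\textbf{Bounding event (i).} Conditioned on $p$, each of the $\lceil 3/\beta\rceil$ trials for $Y^\star$ is skipped independently with probability $1-p$. Averaging over uniform $p$ gives
\[
\Pr[\text{(i) fails}]=\int_0^1(1-p)^{\lceil 3/\beta\rceil}\,dp\le \tfrac{\beta}{3}.
\]

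\textbf{Bounding event (iii).} This is the step needing the most care, because $|\Omega|$ is random and can be as large as $n^2\lceil 3/\beta\rceil$. I will simply union-bound over all $O(n^2/\beta)$ potential slots $(Y,j)$, sampling their noises upfront whether or not the slot is used. This gives failure probability $\beta/3$ with $E_L=O(\frac{\Delta_f}{\eps}\log\frac{n}{\beta})$.

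\textbf{Bounding event (ii).} By the formal version of \Cref{lemma:performance_given_Y}, this fails with probability at most $\beta/3$. Its randomness is independent of the choice of $p$ and the skip coins.

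A union bound now gives $\Pr[\cE]\ge 1-\beta$.

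On $\cE$, the good candidate $\hat S=S'_{Y^\star,j}$ satisfies
\[
f(\hat S)\ge(1-\tfrac1e)f(\OPT)-\frac{48k\Delta_f}{\eps_0}\log\frac{3nk}{\beta}.
\]
The returned $S^*$ has the largest noisy value, so $z_{Y^*,j^*}\ge z_{\hat Y,\hat j}$. With noises bounded by $E_L$, this yields $f(S^*)\ge f(\hat S)-2E_L$.

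It remains to substitute $\eps_0=\frac{\eps}{2\sqrt{2k\log(1/\delta')}}$ with $\delta'=\delta/(n^2\lceil3/\beta\rceil)$. Since $\log(1/\delta')=O(\log\frac{n}{\delta\beta})$, we get
\[
\frac{k}{\eps_0}\log\frac{nk}{\beta}=O\Big(\frac{k^{1.5}}{\eps}\sqrt{\log\tfrac{n}{\delta\beta}}\,\log\tfrac{n}{\beta}\Big),
\]
using $k\le n$. The term $2E_L$ is dominated by this, which gives the stated bound.
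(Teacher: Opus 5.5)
Your proposal is correct and follows essentially the same argument as the paper. Both use the same three-part good event: the guess $Y^*$ is executed at least once (probability $\ge 1-\beta/3$ by averaging over uniform $p$), the greedy guarantee of the formal version of Lemma~\ref{lemma:performance_given_Y} holds for that run with parameters $\varepsilon_0/6$ and $\beta/3$, and all Laplace noises are bounded. From there both take a union bound, lose $2E_L$ in the noisy argmax, and substitute $\varepsilon_0$.

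Your version is slightly more careful in two places, though neither changes the final bound. You union-bound over all $n^2\lceil 3/\beta\rceil$ potential noise slots, whereas the paper states a threshold of $\frac{6\Delta_f}{\varepsilon}\log\frac{3n}{\beta}$. You also treat $|\mathrm{OPT}|\le 2$ uniformly, which makes explicit that the guess $Y=\mathrm{OPT}$ must itself be executed at least once.
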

\label{lemma:2guess_utility_formal}
\begin{proof}
We first consider the case $\abs{\OPT} \ge 2$ and focus on the iteration of \Cref{alg:two_guess_greedy} in which the algorithm guesses $Y^*$. In this iteration, the number of executions of \Cref{algorithm:greedy} is distributed uniformly over $\{0,\dots,\lceil 3/\beta \rceil\}$. Indeed, since $p$ is sampled uniformly from $[0,1]$ and the number of executions is distributed as $\Bin{\lceil 3/\beta \rceil}{p}$, each outcome occurs with equal probability.\footnote{For all $m \in \{0,1,\dots,\tau\}$, $\mathbb{E}_{p \sim U[0,1]} \binom{\tau}{m} p^m (1-p)^{\tau-m} = 1/(\tau+1)$ \cite{cohen2023generalized}.}
Consequently, \Cref{algorithm:greedy} is executed at least once with probability at least $1-\beta/3$. We condition on this event.

If $\abs{\OPT}>2$, then by \Cref{lemma:performance_given_Y}, with probability at least $1-\beta/3$, the first execution of \Cref{algorithm:greedy} in this iteration outputs a set $S_{Y^*,j}$ satisfying
\begin{align}
    f(Y^* \cup S_{Y^*,j}) \ge (1-\tfrac{1}{e}) \cdot f(\OPT) - \frac{48k\Delta_f}{\eps_0} \log \frac{3nk}{\beta}.
    \label{eq:lowerbound_YSYj}
\end{align}
If instead $\abs{\OPT}=2$, then $Y^*=\OPT$, and the same inequality follows immediately from the monotonicity of $f$. We therefore condition on the event that \eqref{eq:lowerbound_YSYj} holds.

Next, standard Laplace tail bounds and a union bound imply that, with probability at least $1-\beta/3$, every Laplace noise  sampled in Line~\ref{alg2:line_sample_laplace} during the execution of \Cref{alg:two_guess_greedy} has magnitude at most $\frac{6\Delta_f}{\eps}\log\frac{3n}{\beta}$.
Conditioning on this event, the set $S^*$ selected in Line~\ref{alg2:get_best} satisfies
\[
    f(S^*) \ge f(Y^* \cup S_{Y^*,j}) - \frac{12\Delta_f}{\eps}\log\frac{3n}{\beta}.
\]

Combining the above events and applying a union bound, we conclude that with probability at least $1-\beta$,
\begin{align*}
    f(S^*) 
    &\ge (1-\tfrac{1}{e}) \cdot f(\OPT)
    - \frac{12\Delta_f}{\eps}\log\frac{3n}{\beta}
    - \frac{48k\Delta_f}{\eps_0} \log \frac{3nk}{\beta} \\
    &\ge (1-\tfrac{1}{e}) \cdot f(\OPT)
    - O\!\left(\frac{k\Delta_f}{\eps_0} \log \frac{n}{\beta}\right),
\end{align*}
where we have used that $k \le n$. The inequality in the lemma follows by substitution the value for $\eps_0$ defined in the proof of \Cref{lemma:2guess_privacy_formal}. In particular,  $\eps_0 = O\Big(\eps / \sqrt{k \log\frac{n}{\delta\beta}}\Big)$.

It remains to consider the case $\abs{\OPT}=1$. In this case, one iteration of the algorithm guesses $Y=\OPT$. As above, with probability at least $1-\beta/3$, all Laplace noises sampled in Line~\ref{alg2:line_sample_laplace} are bounded by $\frac{6\Delta_f}{\eps}\log\frac{3n}{\beta}$, and the candidate set considered in Line~\ref{alg2:get_best} includes $\OPT$. Conditioning on this event, the output $S^*$ therefore satisfies
\[
    f(S^*) \ge f(\OPT) - \frac{12\Delta_f}{\eps}\log\frac{3n}{\beta},
\]
which completes the proof.
\end{proof}

\subsubsection{Improved Complexity Dependence on $\beta$}

In \Cref{section:challenges_results_monotone}, we noted that the query complexity in \Cref{thm:two_guess_greedy_intro} can be reduced to $O(nk^3 \log(1/\beta))$ at the cost of an additional $\log(1/\beta)$ factor in the additive error. We provide a proof sketch of this modification below

Recall that in the standard configuration of \Cref{alg:two_guess_greedy}, the threshold $p \in [0,1]$ is sampled uniformly, and each partial enumeration step is repeated $\lceil 3/\beta \rceil$ times. This corresponds to an instantiation of the selection mechanism $\Selection_{\tau,\gamma}$ with $\tau=\lceil 3/\beta \rceil$ and $\gamma=1$, applied to the collection of mechanisms $\{\cM_Y \mid Y \subseteq \cN,\ |Y|\le 2,\ c(Y)\le B\}$. To achieve the improved query complexity, we instead sample $p \in [0,1]$ according to the cumulative distribution function $\Pr[p \le x] = x^{\log(6/\beta)}$, corresponding to $\gamma=\log(6/\beta)$, and set the repetition threshold to $\tau=\lceil e/\log(6/\beta)\rceil$. This leads to a query complexity of $O(nk^3 \log(1/\beta))$.

To analyze the resulting utility, observe that $\Pr[p \le 1/e] = (1/e)^{\log(6/\beta)} = \beta/6$. Thus, $p \ge 1/e$ holds with probability at least $1-\beta/6$. Conditioned on this event, the probability that $\cM_{Y^*}$ is executed at least once across $\tau$ trials is at least
\[
1 - (1 - 1/e)^\tau \ge 1 - \exp(-\tau/e) \ge 1 - \beta/6.
\]
By a union bound, $\cM_{Y^*}$ is executed at least once with probability at least $1-\beta/3$, allowing the downstream utility analysis of \Cref{lemma:two_guess_greedy_utility} to proceed unchanged. Finally, by \Cref{thm:generalized_select}, each invocation of \Cref{algorithm:greedy} now needs to use the tighter privacy parameter $\eps_0/(2(2+\gamma)) = O(\eps_0/\log(1/\beta))$ instead of $\eps_0/6$ that was used in the proof of \Cref{lemma:2guess_privacy_formal} (which corresponds to $\gamma=1$). Propagating this scaled parameter leads directly to the additional $O(\log(1/\beta))$ factor in the final additive error bound.

\subsection{Proof of \Cref{thm:greedyplus_intro}}\label{appendix:greedyplus}
We now present a more efficient DP algorithm for the monotone SMK problem. \Cref{algorithm:greedyplus} is based on the Greedy$^+$ algorithm of \cite{yaroslavtsev2020bring}, which executes density-greedy with the objective $f$ and returns the best solution among the partial solutions and all their feasible single-element extensions. \Cref{algorithm:greedyplus}  executes \Cref{algorithm:greedy}, and uses the \RNMT mechanism to select the best solution among the partial solutions, their feasible single-element extensions, and all singletons. This yields the guarantee stated in \Cref{thm:greedyplus_intro}, which is proven by combining \cref{lemma:greedyplus_complexity,lemma:greedyplus_privacy,lemma:greedyplus_utility}.

\begin{algorithm}
\caption{ \DPDGP (\texttt{DP-Density-Greedy$^+$})}
\label{algorithm:greedyplus}
\setcounter{AlgoLine}{0}
\KwIn{
Dataset $D$;
Submodular function $f_D:2^\cN\to\R_+$;
Budget $B$;
Parameters  $\beta,\eps,\delta\in (0,1]$.}

 $S_0\gets\varnothing$ and $i\gets 0$.\;

  $\eps_0 \gets \frac{\eps}{4\sqrt{2k \log(1/\delta)}}$ \plabel{greedyplus:seteps} 

Let $S_1,\dots,S_\ell$ be all the partial solutions obtained throughout the execution of \Cref{algorithm:greedy} on the instance defined by the ground set $\cN$, the objective function $h_D(S)=f_D(S)$, the budget $B$, the privacy parameter $\eps_0$, and utility parameter $\beta/2$. \plabel{alg:greedyplus_define_residual}

 Define  $\cS \gets \set{S_\ell} \cup \set{\set{u}\mid u\in \cN} \cup \set{S_{i}\cup\set{u} \mid 0\le i\le \ell,\ u \in \cN,\ c(S_{i}\cup\set{u})\le B } $, 
 
 Let $S^* \gets \RNM{\eps/2}{\cS, f}{D}$ \plabel{alg:greedyplus_lastEM}
 
\Return $S^*$
\end{algorithm}

\begin{lemma}\label[lemma]{lemma:greedyplus_complexity}
\Cref{algorithm:greedyplus} performs $O(nk)$ oracle queries.
\end{lemma}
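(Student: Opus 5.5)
We count the oracle queries of \Cref{algorithm:greedyplus} in two parts: those made by the single execution of \Cref{algorithm:greedy} in Line~\ref{alg:greedyplus_define_residual}, and those needed to evaluate $f$ on the candidate collection $\cS$ in Line~\ref{alg:greedyplus_lastEM}.

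\textbf{The greedy run.} As in the proof of \Cref{lemma:two_guess_greedy_complexity}, every iteration of the \texttt{while} loop of \Cref{algorithm:greedy} adds one element and keeps the partial solution feasible. Since every feasible set has at most $k$ elements, there are $\ell\le k$ iterations. In iteration $i$, the scores $q_i(u;D)=h(u\mid S_i)/c(u)$ for all $u\in\cC_i$ require one query for $h(S_i)$ and one query for each $h(S_i\cup\{u\})$. That is $O(n)$ queries per iteration. The auxiliary scores of \GenRNMT are computed from these values and the known sensitivities $2\Delta_f/c(u)$, so they need no further queries. This part therefore costs $O(nk)$ queries.

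\textbf{The final selection.} The collection $\cS$ consists of $S_\ell$, the $n$ singletons, and the sets $S_i\cup\{u\}$ for $0\le i\le\ell$ and $u\in\cN$. This gives $|\cS|\le 1+n+(\ell+1)n=O(nk)$. \RNMT evaluates $f$ once per candidate, so this part also costs $O(nk)$ queries. Alternatively, one can note that the values $f(S_i\cup\{u\})$ for $u\in\cC_i$ were already obtained during the greedy run and can be cached. In either case the total is $O(nk)$.

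The only step needing any care is bounding $\ell\le k$ and checking that $|\cS|$ is linear in $nk$. Both follow directly from feasibility of every partial solution, so I expect no real obstacle.
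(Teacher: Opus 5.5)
Your proposal is correct and follows the paper's proof essentially verbatim: at most $k$ iterations of the greedy loop with $O(n)$ queries each, plus one evaluation per candidate in $\cS$, whose size is $O(nk)$. The paper states $|\cS|\le nk+n+1$ because the sets $S_0\cup\{u\}$ coincide with the singletons; your bound $1+n+(\ell+1)n$ counts them twice and is equally valid for the $O(nk)$ conclusion.
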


\begin{proof}
\Cref{algorithm:greedy} executes at most $k$ iterations of its main loop, and in each iteration, it evaluates the marginal gain of at most $n$ elements, contributing $O(nk)$ oracle queries. 
The final invocation of \RNMT in Line~\ref{alg:greedyplus_lastEM} requires one oracle call for each candidate in the set $\cS$, which contains at most $nk+n+1$ candidates. 
Consequently, the total number of oracle queries made by \Cref{algorithm:greedyplus} is $O(nk)$.
\end{proof}

\begin{lemma}\label[lemma]{lemma:greedyplus_privacy}
For any $\eps,\delta\in (0,1]$, \Cref{algorithm:greedyplus} is $(\eps,\delta)$-DP.
\end{lemma}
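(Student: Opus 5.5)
The plan is to split \Cref{algorithm:greedyplus} into two stages and compose them with basic composition, allotting $\eps/2$ to each. Stage one is the execution of \Cref{algorithm:greedy} in Line~\ref{alg:greedyplus_define_residual}, which releases the whole sequence $S_1,\dots,S_\ell$. Stage two is the final $\RNM{\eps/2}{\cS,f}{D}$ in Line~\ref{alg:greedyplus_lastEM}. The candidate family $\cS$ is a deterministic function of the partial solutions (and of the public ground set and costs), so stage two is an adaptive mechanism whose input is the output of stage one. Basic composition (\Cref{thm:composition}(i)) then gives $(\eps/2+\eps/2,\delta)$-DP, provided stage one is $(\eps/2,\delta)$-DP and stage two is $\eps/2$-DP for every fixed $\cS$.

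\textbf{Stage two.} For any fixed $\cS$, each score $f_D(S)$ with $S\in\cS$ has sensitivity at most $\Delta_f$, by the definition of $\Delta_f$. Hence \Cref{thm:RNM_formal} directly gives that $\RNM{\eps/2}{\cS,f}{D}$ is $\eps/2$-DP.

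\textbf{Stage one.} The loop of \Cref{algorithm:greedy} runs at most $k$ times, since each iteration adds one element and keeps the set feasible. This is a deterministic, data-independent bound, so I can pad to exactly $k$ adaptive steps with trivial mechanisms. Each step is $\GenRNM{\eps_0}{\cC_i,q_i,\beta/(2k)}{D}$. The candidate set $\cC_i$ depends only on $S_i$ (a previous output) and on the public costs, and the score $q_i$ is defined through $S_i$. Thus each step is an $\eps_0$-DP mechanism conditioned on the prior outputs, by \Cref{thm:GRNM_formal}. That theorem applies with any valid per-candidate sensitivities; I would use $\Delta_u = 2\Delta_f/c(u)$ from \Cref{lemma:sensitivity_bound_formal}, taking $Y=\varnothing$.

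Releasing the sequence $(u_1,\dots,u_\ell)$ is therefore a $k$-fold adaptive composition of $\eps_0$-DP mechanisms. By the last clause of \Cref{thm:composition}, it suffices that
\[
\eps_0 \le \frac{\eps/2}{2\sqrt{2k\log(1/\delta)}} = \frac{\eps}{4\sqrt{2k\log(1/\delta)}},
\]
which is exactly the choice made in Line~\ref{greedyplus:seteps}. This requires the target $\eps/2\le 1$, which holds since $\eps\le 1$. So stage one is $(\eps/2,\delta)$-DP, and the partial solutions $S_i$ are post-processing of the selected elements.

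\textbf{Main obstacle.} The only delicate point is justifying the adaptive structure. The candidate sets $\cC_i$ and the scores' sensitivities vary with previous outputs, and the number of rounds is random. I would handle this by noting that \Cref{thm:composition} allows each $\cA_i$ to depend arbitrarily on $y_1,\dots,y_{i-1}$. The round count is capped deterministically by $k$, with the remaining rounds outputting a constant once the loop has stopped. The data-independent quantities used inside \GenRNMT, namely $|\cC_i|$ and the costs, do not break privacy, because the sensitivity bounds hold uniformly over datasets.
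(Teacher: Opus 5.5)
Your proposal is correct and matches the paper's proof: the paper also splits the algorithm into the run of \texttt{DP-Density-Greedy} with per-step parameter $\varepsilon_0=\varepsilon/(4\sqrt{2k\log(1/\delta)})$, which advanced composition makes $(\varepsilon/2,\delta)$-DP, followed by the $\varepsilon/2$-DP Report Noisy Max selection, and combines the two by basic composition. Your added points are handled correctly but left implicit in the paper: padding to exactly $k$ adaptive rounds, the fact that $\mathcal{C}_i$, $|\mathcal{C}_i|$ and the per-candidate sensitivities depend only on previous outputs and public costs, and the requirement $\varepsilon/2\le 1$.
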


\begin{proof}
The privacy guarantee follows from the composition of the iterative greedy phase and the final selection phase. First, we consider the execution of \Cref{algorithm:greedy} in Line~\ref{alg:greedyplus_define_residual}. By instantiating \Cref{algorithm:greedy} with the per-step privacy parameter $\eps_0 = \eps / (4\sqrt{2k \log(1/\delta)})$, advanced composition (\Cref{thm:composition}) implies that  releasing the sequence of partial solutions $S_1, \dots, S_\ell$ is $(\eps/2, \delta)$-DP.
Second, the final selection step via \RNMT is executed with privacy parameter $\eps/2$. According to \Cref{thm:RNM_formal}, this step is $(\eps/2, 0)$-DP. Combining this with the privacy guarantee of \Cref{algorithm:greedy} via basic composition yields that the overall algorithm is $(\eps,\delta)$-DP.
\end{proof}

\begin{lemma}\label[lemma]{lemma:greedyplus_utility}
For any $\eps, \delta>0$, $\beta\in (0,1)$, \Cref{algorithm:greedyplus} returns a feasible subset $S$ such that, with probability $1-\beta$, 
\[
f(S)\ge \tfrac{1}{2}\cdot f(\OPT)
-O\paren{
\frac{\Delta_f\, k^{1.5}}{\eps}
\sqrt{\log\tfrac{1}{\delta}}
 \log\tfrac{n}{\beta}}.
\]
\end{lemma}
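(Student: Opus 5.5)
We follow the Greedy$^+$ analysis of \citet{yaroslavtsev2020bring}, adapted to the privatized selections, and condition throughout on a single good event $\cE$ of probability at least $1-\beta$. Specifically, $\cE$ is the intersection of two events. The first is the event of \Cref{thm:GRNM_formal} for every invocation of \GenRNMT inside \Cref{algorithm:greedy}, run with parameter $\beta/2$ and union-bounded over at most $k$ steps, so it holds with probability $1-\beta/2$. The second is the utility event of \Cref{thm:RNM_formal} for the final \RNMT over $\cS$, with $|\cS|\le nk+n+1$, holding with probability $1-\beta/2$. Together with \Cref{lemma:sensitivity_bound_formal} (applied with $Y=\varnothing$), the first event gives at every iteration $i$ and every $u\in\cC_i$
\[
\frac{f(u_{i+1}\mid S_i)}{c(u_{i+1})}\ge \frac{f(u\mid S_i)}{c(u)}-\frac{E}{c(u)},\qquad E=O\Big(\tfrac{\Delta_f}{\eps_0}\log\tfrac{n}{\beta}\Big).
\]
The second event gives $f(S^*)\ge \max_{S\in\cS}f(S)-O(\frac{\Delta_f}{\eps}\log\frac{n}{\beta})$. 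Feasibility of $S^*$ is immediate, since every member of $\cS$ is feasible.

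Let $o_m\in\OPT$ be an element of maximum cost. If $|\OPT|=1$, then $\OPT$ is a singleton in $\cS$ and we are done. Otherwise define $T$ as the first index $i<\ell$ with $c(S_i)>1-c(o_m)$, and set $T=\ell$ if there is none. For $i<T$, every $s\in\OPT\setminus(S_i\cup\{o_m\})$ satisfies $c(S_i)+c(s)\le 1$, so $s\in\cC_i$. We then repeat the chain of \Cref{lemma:gain_progress_formal} with $Y^*=\varnothing$ and $R_i=S_i\cup\{o_m\}$, using monotonicity, submodularity and the displayed guarantee, to obtain
\[
f(\OPT)-f(S_i\cup\{o_m\})\le (1-c(o_m))\frac{f(S_{i+1})-f(S_i)}{c(u_{i+1})}+kE .
\]
Now suppose $\max_{i\le T} f(S_i\cup\{o_m\})<f(\OPT)/2$; otherwise we are already done, because $S_i\cup\{o_m\}\in\cS$ when $i<T$, and for $i=T$ we treat the set directly. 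Under this assumption the left side exceeds $f(\OPT)/2$, which yields the progress inequality stated in \Cref{sec:monotone}:
\[
\frac{c(u_{i+1})}{1-c(o_m)}\Big(\tfrac{f(\OPT)}{2}-kE\Big)\le f(S_{i+1})-f(S_i).
\]

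Summing over $i<T$ gives $f(S_T)\ge \frac{c(S_T)}{1-c(o_m)}(\frac{f(\OPT)}{2}-kE)$. We split into two cases.
\begin{itemize}
\item If $T<\ell$, then $c(S_T)>1-c(o_m)$, so $f(S_T)\ge f(\OPT)/2-kE$, and $S_T$ is a subset of a member of $\cS$. Here some care is needed: $S_T$ itself lies in $\cS$ as $S_{T-1}\cup\{u_T\}$, which is feasible.
\item If $T=\ell$, then either $o_m\in S_\ell$, so $S_\ell\cup\{o_m\}=S_\ell\in\cS$, or $o_m$ could not be added. In the latter case $c(S_\ell)>1-c(o_m)$ and the same bound holds with $S_\ell\in\cS$. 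If $S_\ell\cup\{o_m\}$ were feasible, the loop would not have terminated.
\end{itemize}
The main obstacle is precisely this bookkeeping around the boundary index, namely ensuring that the set achieving $f(\OPT)/2$ (either $S_T$ or $S_i\cup\{o_m\}$) is a feasible member of $\cS$. The additive error, however, is routine.

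Combining the cases, $\max_{S\in\cS}f(S)\ge \frac12 f(\OPT)-kE$, and therefore $f(S^*)\ge\frac12 f(\OPT)-kE-O(\frac{\Delta_f}{\eps}\log\frac{n}{\beta})$. Substituting $\eps_0=\eps/(4\sqrt{2k\log(1/\delta)})$ gives $kE=O\big(\frac{\Delta_f k^{1.5}}{\eps}\sqrt{\log\frac1\delta}\log\frac{n}{\beta}\big)$, which dominates the remaining error term and proves the lemma.
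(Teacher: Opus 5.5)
Your route is the same as the paper's: the same good event, the same index $T$, and the same chain with $R_i=S_i\cup\{o_m\}$ giving the progress inequality for $i<T$. The summation and the substitution of $\eps_0$ also match. Certifying $S_T\in\cS$ as $S_{T-1}\cup\{u_T\}$, instead of using $f(S_\ell)\ge f(S_T)$, is a harmless variation.

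\textbf{The gap: the case $T=\ell$.} Your case split there is on the wrong dichotomy. The summation bound needs $c(S_\ell)\ge 1-c(o_m)$, and you prove this only when $o_m\notin S_\ell$. When $o_m\in S_\ell$, you only note that $S_\ell\cup\{o_m\}=S_\ell\in\cS$, and that gives no lower bound. Under your standing assumption, which includes $i=T$, you actually have $f(S_\ell)<f(\OPT)/2$, and nothing you have shown gives $c(S_\ell)\ge 1-c(o_m)$ in this subcase.

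The correct split, as in the paper, is on whether $\OPT\subseteq S_\ell$. If so, monotonicity gives $f(S_\ell)\ge f(\OPT)$. If not, take any $o\in\OPT\setminus S_\ell$ (not necessarily $o_m$). Termination of the loop gives $c(S_\ell)>1-c(o)\ge 1-c(o_m)$ by maximality of $c(o_m)$, and the summation bound then applies to $S_\ell\in\cS$. This is a one-line repair, but it is exactly the boundary step you flagged as the main obstacle, and as written it fails.

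\textbf{Two smaller points.}
\begin{itemize}
\item Your standing assumption should range over $i<T$ only. The progress inequality is needed only there. When $T<\ell$, the set $S_T\cup\{o_m\}$ may exceed the budget and then is not in $\cS$, so your ``otherwise we are already done'' claim at $i=T$ (``treat the set directly'') is unjustified. Restricting to $i<T$, as the paper does, removes the issue.
\item Passing from $\frac{c(S_T)}{1-c(o_m)}\bigl(\frac{f(\OPT)}{2}-kE\bigr)$ to $\frac{f(\OPT)}{2}-kE$ uses $f(\OPT)\ge 2kE$. You should state this assumption and note that the complementary case is trivial by non-negativity of $f$.
\end{itemize}
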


\begin{proof}
We consider two cases based on the cardinality of $\OPT$ and the objective value of each single-element extension of a partial solution. Let $o_m \in \arg\max_{o \in \OPT} c(o)$ and let $T$ be the smallest index in $\{1, \dots, \ell-1\}$ satisfying $c(S_T) > 1 - c(o_m)$, or let $T = \ell$ if no such index exists.

\paratitle{Case 1} 
If $|\OPT| = 1$, then $\OPT \in \cS$ as all singletons are included in the candidate set. If for some $0\le i < T$, $f(S_i \cup \{o_m\}) \ge \frac{1}{2}f(\OPT)$, then there exists a set $S' \in \cS$ such that $f(S') \ge \frac{1}{2}f(\OPT)$. In both scenarios, the utility guarantee of \RNMT (\Cref{thm:genRNM_guarantee}) ensures that with probability at least $1-\beta$:
\[
    f(S^*) \ge \max_{S \in \cS} f(S) - \frac{4\Delta_f}{\eps} \log \frac{nk+n+1}{\beta} \ge \frac{1}{2}f(\OPT) - O\Big(\frac{\Delta_f}{\eps} \log \frac{n}{\beta}\Big).
\]

\paratitle{Case 2}
We now consider the case where $|\OPT| \ge 2$ and $f(S_i \cup \{o_m\}) < \frac{1}{2}f(\OPT)$ for all $i<T$.
Following a chain of inequalities analogous to that in the proof of \Cref{lemma:gain_progress_formal}, we obtain that with probability at least $1-\beta/2$,
\begin{align*}
     (1-c(o_m)) \cdot \frac{f(u_{i+1} \mid S_i)}{c(u_{i+1})}
     &\ge f(\OPT) - f(S_i \cup \{o_m\}) - kE \\
     &\ge \tfrac{1}{2} f(\OPT) - kE, && \forall i \in \{0, \dots, T-1\},
\end{align*}
where $E = \frac{8\Delta_f}{\eps_0} \log \frac{2nk}{\beta}$. Rearranging the inequality yields:
\begin{align*}
       \frac{c(u_{i+1})}{1-c(o_m)} \cdot \paren{\frac{f(\OPT)}{2} - kE} \le f(S_{i+1}) - f(S_i).
\end{align*}
We may assume $f(\OPT) \ge 2kE$, as otherwise the guarantee holds trivially. Summing over $0 \le i \le T-1$ yields:
\begin{align}
    f(S_\ell) \ge f(S_T) &\ge f(S_0) + \paren{\frac{f(\OPT)}{2} - kE} \cdot \sum_{i=0}^{T-1} \frac{c(u_{i+1})}{1-c(o_m)} \notag \\
    &\ge \paren{\frac{f(\OPT)}{2} - kE} \cdot \frac{c(S_T)}{1-c(o_m)} \ge \frac{f(\OPT)}{2} - kE, \label{eq:greedyplus}
\end{align}
where we use the identity $\sum_{i=0}^{T-1} c(u_{i+1}) = c(S_T)$ and the non-negativity of $f$. It therefore remains to show that $c(S_T) \ge 1 - c(o_m)$. If $T < \ell$, this inequality holds by the definition of $T$. Thus, assume that $T = \ell$. By monotonicity of $f$, it suffices the consider the case that $\OPT \not\subseteq S_\ell$. In this case, 
there exists an element $o \in \OPT \setminus S_\ell$. Since \Cref{algorithm:greedy} has terminated and $o$ was not added to the solution, it must be that $c(S_\ell) > 1 - c(o) \ge 1 - c(o_m)$.

Observe that since $S_\ell \in \cS$, \Cref{thm:genRNM_guarantee} implies that with probability at least $1-\beta/2$:
\[
    f(S^*) \ge f(S_\ell) - \frac{4\Delta_f}{\eps_0} \log \frac{nk+n+1}{\beta/2}.
\]

By a union bound and substituting 
$\eps_0 = \frac{\eps}{4\sqrt{2k \log(1/\delta)}}$, we conclude that with probability at least $1-\beta$:
\[
f(S^*) \ge \tfrac{1}{2} f(\OPT) - O\Bigg( \frac{\Delta_f\, k^{1.5}}{\varepsilon} \sqrt{\log \tfrac{1}{\delta}} \,\log \tfrac{n}{\beta} \Bigg).
\]
\end{proof}

\section{ Proofs from \Cref{sec:non-monotone}}
\label{sec:appendix_nonmonotone}
In this section, we provide the proofs deferred from \Cref{sec:non-monotone} on non-monotone submodular maximization, completing the proof of \Cref{thm:samplegreedy}.
For convenience in the utility analysis, we adopt a standard analytical setup where the DP noise random variables are realized at the beginning of execution. See \Cref{sec:appendix_nonmonotone_utility} for the formal treatment and definitions.

The query complexity of \Cref{algorithm:subsample_greedy} is given by the following lemma.

\begin{lemma}\label[lemma]{lemma:subsample_complexity_appendix}
 \Cref{algorithm:subsample_greedy} makes $O(nk)$ oracle queries.
\end{lemma}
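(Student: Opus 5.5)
The count only comes out as $O(nk)$ if repeated evaluations are not charged twice. I will therefore assume the value oracle is wrapped in a cache that stores every value $f(S)$ already computed, so a query on a set seen before costs nothing. Under this convention the number of oracle queries is exactly the number of distinct sets $S\subseteq\cN$ on which $f_D$ is evaluated. The plan is to show that every such set has the form $S_{i'}$ or $S_{i'}\cup\set{v}$, and that only $O(k)$ of the $S_{i'}$ are distinct.

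\emph{Step 1 (distinct partial solutions).} Without the cache the bound fails. Because selected elements may be discarded (Line~\ref{alg:sample_random_inclusion}), the while loop can run up to $n$ times rather than $k$ times. Each iteration evaluates up to $n$ density scores, which would give $O(n^2)$ queries. The key observation is that in a discarding iteration $S_{i+1}=S_i$, so the sequence $S_0,S_1,\dots$ changes only at inclusion steps. Each inclusion strictly increases $|S_i|$ by one. By the loop condition every $S_i$ is feasible, so $|S_i|\le k$. Hence $S_0\subseteq S_1\subseteq\cdots$ is a chain containing at most $k+1$ distinct sets.

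\emph{Step 2 (queries inside the loop).} In iteration $i$, computing $q_i(v;D)=f_D(v\mid S_i)/c(v)$ for all $v\in\cC_i$ requires $f_D(S_i)$ and $f_D(S_i\cup\set{v})$ for $v\in\cC_i$. These are at most $n+1$ sets, and all of them are determined by $S_i$ alone. So across the whole loop, the sets queried are contained in
\[
\mathcal{Q}=\bigcup_{S\in\set{S_0,S_1,\dots}}\bigl(\set{S}\cup\set{S\cup\set{v}\mid v\in\cN}\bigr),
\]
and $|\mathcal{Q}|\le (k+1)(n+1)$ by Step~1. Checking the loop condition in Line~\ref{alg:sample_line2} uses only costs and needs no oracle queries. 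Likewise, \GenRNMT adds no extra queries: its auxiliary scores $q'$ are computed from the already-available values $q_i(\cdot;D)$ and the known sensitivities $2\Delta_f/c(\cdot)$, the latter by the argument of \Cref{lemma:sensitivity_bound_formal}.

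\emph{Step 3 (final selection).} The candidate family $\cS$ in Line~\ref{alg:sample_defineS} is, as a set of subsets, contained in $\mathcal{Q}$. \RNMT in Line~\ref{alg:sample_lastEM} needs one value $f(S)$ per distinct $S\in\cS$, so it adds at most $|\mathcal{Q}|=O(nk)$ further queries, most of them already cached. Summing Steps 2 and 3 gives $O(nk)$ in total.

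The main obstacle is the one flagged in Step~1: the iteration count is not bounded by $k$. The argument must therefore count distinct sets rather than iterations, and the statement implicitly relies on caching, which I would state explicitly. If caching is not allowed, the fallback is to invoke \Cref{lemma:runtime_bound} and bound the number of iterations by $O(k+\log(1/\delta))$. That bound holds only with probability $1-\delta$ and yields $O(n(k+\log\frac1\delta))$ queries, so it is the weaker route.
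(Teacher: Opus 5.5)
Your proposal is correct and is essentially the paper's argument. The paper also bounds the number of distinct prefixes by $k$ via feasibility and states that the loop \emph{can be implemented} to compute $O(n)$ marginal gains per distinct prefix, which is exactly your reuse/caching convention left implicit; it then bounds $|\mathcal{S}|$ by $O(nk)$ for the final selection, and your explicit observation that discards can make the loop run up to $n$ times usefully spells out why that implementation remark is needed.
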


\begin{proof}
Let $S_\ell = \set{u_1,\dots,u_{|S_\ell|}}$ be the solution obtained at the end of the \texttt{while} loop.
\Cref{algorithm:subsample_greedy} can be implemented to compute $O(n)$ marginal gains for each prefix $\set{u_1,\dots,u_j}$. Because $S_\ell$ is feasible, $|S_\ell|\le k$. Thus, the total number of oracle queries incurred within the \texttt{while} loop is $O(nk)$.
Finally, the last application of \RNMT\ in \Cref{alg:sample_lastEM} evaluates the objective function on each set in $\cS$, defined in Line~\ref{alg:sample_defineS}. 
Since we have at most $k$ distinct partial solutions, 
the candidate family $\cS$ contains at most $k + k\cdot n = O(nk)$ sets. Hence this step also requires $O(nk)$ oracle queries.
Overall, the total query complexity is $O(nk)$.
\end{proof}

\subsection{Privacy Analysis}
We now proceed to the privacy analysis. To this end, we utilize the following concentration bound for the sum of independent geometric random variables.
\begin{lemma}[\citealp{janson2018tail}]\label{lemma:janson}
Let $X_1,\dots,X_k$ be independent, geometrically distributed, random variables with 
$X_i \sim \mathrm{Geom}(p_i)$ for $0 < p_i \le 1$. Let $p_* = \min_i p_i$, and
 $\mu = \mathbb{E}[\sum_{i=1}^k X_i]$. 
Then, for every $\lambda \ge 1$,
\[
\Pr\Bigg[\sum_{j=1}^{k} X_j \ge \lambda \mu\Bigg] \le \exp\Big(- p_* \mu \, (\lambda - 1 - \log \lambda)\Big).
\]
\end{lemma}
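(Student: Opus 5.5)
The plan is a Chernoff (moment generating function) argument. First bound the MGF of each geometric variable by the MGF of an exponential with the same mean. Then use a convexity trick to reduce every term to the worst parameter $p_*$. Finally apply Markov's inequality and optimize the exponent. Throughout, $X\sim\mathrm{Geom}(p)$ is supported on $\{1,2,\dots\}$ with $\Pr[X=j]=p(1-p)^{j-1}$, so $\mathbb{E}[X]=1/p$ and $\mu=\sum_i 1/p_i$.

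\emph{Step 1 (single-variable MGF bound).} For $0\le t<p$ we have $t<-\log(1-p)$, so the series converges and
\[
\mathbb{E}[e^{tX}]=\frac{pe^t}{1-(1-p)e^t}.
\]
I claim $\mathbb{E}[e^{tX}]\le \frac{1}{1-t/p}$. After clearing the (positive) denominators, this is equivalent to $e^t(p-t)\le 1-(1-p)e^t$, i.e.\ $e^t(1-t)\le 1$. That holds for all real $t$ since $1-t\le e^{-t}$. By independence, for $0\le t<p_*$,
\[
\log\mathbb{E}\big[e^{t\sum_i X_i}\big]\le\sum_{i=1}^k -\log(1-t/p_i).
\]

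\emph{Step 2 (reduction to $p_*$).} The function $g(x)=-\log(1-x)/x=\sum_{j\ge1}x^{j-1}/j$ is nondecreasing on $(0,1)$. Since $t/p_i\le t/p_*$, each term satisfies
\[
-\log(1-t/p_i)=\tfrac{t}{p_i}\,g(t/p_i)\le \tfrac{t}{p_i}\,g(t/p_*)=\tfrac{p_*}{p_i}\bigl(-\log(1-t/p_*)\bigr).
\]
Summing over $i$ and using $\sum_i 1/p_i=\mu$ gives the bound $p_*\mu\cdot(-\log(1-s))$, where $s=t/p_*\in[0,1)$.

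\emph{Step 3 (Markov and optimization).} By Markov's inequality applied to $e^{t\sum_i X_i}$,
\[
\Pr\Big[\sum_i X_i\ge\lambda\mu\Big]\le \exp\big(-t\lambda\mu+p_*\mu(-\log(1-s))\big)=\exp\big(p_*\mu(-s\lambda-\log(1-s))\big).
\]
Choose $s=1-1/\lambda\in[0,1)$, which is admissible since $\lambda\ge1$. The exponent becomes $p_*\mu\bigl(-(\lambda-1)-\log\lambda\bigr)$, which is exactly the claimed bound $\exp(-p_*\mu(\lambda-1-\log\lambda))$.

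The only genuinely non-routine step is Step 2. Bounding each $-\log(1-t/p_i)$ naively (e.g.\ by $\frac{t/p_i}{1-t/p_i}$) loses the clean $\lambda-1-\log\lambda$ rate constant. The monotonicity of $g$ is what makes the bound scale linearly in $\mu$ with the worst-case factor $p_*$. Step 1's inequality reduces to $e^t(1-t)\le1$, so it is elementary.
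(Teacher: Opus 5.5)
Your proof is correct. The paper gives no proof of its own here and simply quotes the bound from \citet{janson2018tail}, so your self-contained Chernoff derivation is a valid replacement for the citation. It proceeds by dominating each geometric MGF by the exponential MGF $1/(1-t/p_i)$ via $e^t(1-t)\le 1$, reducing every term to $p_*$ through the monotonicity of $-\log(1-x)/x$, and taking $t=p_*(1-1/\lambda)$.

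One slip to fix in Step 3: with $s=1-1/\lambda$ you have $-\log(1-s)=+\log\lambda$. So the exponent is $p_*\mu\bigl(-(\lambda-1)+\log\lambda\bigr)$, not $p_*\mu\bigl(-(\lambda-1)-\log\lambda\bigr)$. The corrected expression equals $-p_*\mu(\lambda-1-\log\lambda)$, so your final bound stands.
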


We also rely on the following technical inequality.
\begin{lemma}\label[lemma]{lemma:technical_bound}
For all $x>0$, we have
\[
x - \log(1+x) \ge \frac{x^2}{2(1+x)}.
\]
\end{lemma}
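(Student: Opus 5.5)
We need to prove $x-\log(1+x)\ge \frac{x^2}{2(1+x)}$ for all $x>0$. The plan is a standard monotonicity argument: define
\[
g(x) \;=\; x - \log(1+x) - \frac{x^2}{2(1+x)}, \qquad x\ge 0,
\]
observe that $g(0)=0$, and show that $g'(x)\ge 0$ for all $x>0$. Integrating from $0$ then gives $g(x)\ge 0$ on $(0,\infty)$, which is exactly the claim.

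To compute the derivative, note first that $\frac{d}{dx}\bigl(x-\log(1+x)\bigr) = 1-\frac{1}{1+x} = \frac{x}{1+x}$. For the subtracted term the quotient rule gives
\[
\frac{d}{dx}\,\frac{x^2}{2(1+x)} \;=\; \frac{2x(1+x)-x^2}{2(1+x)^2} \;=\; \frac{x^2+2x}{2(1+x)^2}.
\]
Writing everything over the common denominator $2(1+x)^2$ yields
\[
g'(x) \;=\; \frac{2x(1+x) - x^2 - 2x}{2(1+x)^2} \;=\; \frac{x^2}{2(1+x)^2} \;\ge\; 0 .
\]
Hence $g$ is nondecreasing on $[0,\infty)$, and $g(x)\ge g(0)=0$ for every $x>0$.

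There is no real obstacle here. The only step needing care is the algebra in $g'$, and it simplifies cleanly to a manifestly nonnegative expression.

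As a sanity check, an alternative route via the integral representation $x-\log(1+x)=\int_0^x \frac{t}{1+t}\,dt$ also works. Since $\frac{t}{1+t}\ge \frac{t}{1+x}$ for $t\in[0,x]$, the integral is at least $\frac{x^2}{2(1+x)}$. I would present the derivative argument, with this one-line bound as a remark.
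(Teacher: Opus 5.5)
Your proposal is correct and matches the paper's proof essentially verbatim: the paper defines the same $g$, computes $g'(x)=\frac{x^2}{2(1+x)^2}\ge 0$, and concludes from $g(0)=0$. Your integral-representation remark is also valid and gives a slightly slicker one-line alternative.
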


\begin{proof}
Define $g(x) := x - \log(1+x) - \frac{x^2}{2(1+x)}$. Then
\[
g'(x) = 1 - \frac{1}{1+x} -\frac{x(2+x)}{2(1+x)^2} 
 = \frac{x^2}{2(1+x)^2} \ge 0.
\]
Since $g(0) = 0$ and $g$ is non-decreasing for $x>0$, we conclude $g(x)\ge 0$ for all $x>0$.
\end{proof}

As a consequence, we obtain the following lemma.
\begin{lemma}\label[lemma]{cor:boundSumGeom}
 Suppose $X_i \sim \mathrm{Geom}(1/2)$ for all $i$. Then, for every $\delta > 0$,
 \begin{align*}
 \Pr\Bigg[\sum_{j=1}^{k} X_j \ge (2+\sqrt{2})k+(4+\sqrt{2})\log(1/\delta)\Bigg] \le \delta.
 \end{align*}
\end{lemma}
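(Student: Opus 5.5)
We will derive the bound directly from \Cref{lemma:janson}, using the elementary inequality of \Cref{lemma:technical_bound} to simplify the exponent. Here $p_i = 1/2$ for all $i$, so $p_* = 1/2$. Taking the geometric distribution supported on $\{1,2,\dots\}$, we have $\mu = 2k$ and hence $p_*\mu = k$. Writing $\lambda = 1+x$ with $x\ge 0$, \Cref{lemma:janson} gives
\[
\Pr\Big[\sum_{j=1}^k X_j \ge (1+x)\,2k\Big] \le \exp\big(-k\,(x - \log(1+x))\big) \le \exp\Big(-\frac{k x^2}{2(1+x)}\Big),
\]
where the second inequality is \Cref{lemma:technical_bound}. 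It therefore suffices to choose $x$ so that $k x^2 \ge 2(1+x)\log(1/\delta)$, and then to check that $2k(1+x)$ is at most the claimed threshold $(2+\sqrt{2})k + (4+\sqrt{2})\log(1/\delta)$. Since the probability is monotone in the threshold, any $x$ that meets both conditions completes the proof.

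Let $L = \log(1/\delta)$; we may assume $\delta<1$, since otherwise the claim is trivial. We would split into two regimes according to which term dominates. If $k \ge L$, take $x = c$ for a constant $c$ chosen so that $kc^2 \ge 2(1+c)L$. Since $k\ge L$, it is enough that $c^2 \ge 2(1+c)$, which holds for $c = 1+\sqrt{3}$. But then $2k(1+c) = 2(2+\sqrt{3})k$, which exceeds the target coefficient $2+\sqrt{2}$.

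So the single-constant choice is too lossy, and we would instead take $x$ to depend on both $k$ and $L$. The cleanest choice is $x = a + bL/k$ with constants $a,b$. Then $2k(1+x) = 2k + 2ak + 2bL$, and matching the target $(2+\sqrt{2})k + (4+\sqrt{2})L$ requires $a = \sqrt{2}/2$ and $b = (4+\sqrt{2})/2$. It remains to verify that $k x^2 \ge 2(1+x)L$ for these constants. Substituting $y = L/k \ge 0$, this becomes the polynomial inequality
\[
(a+by)^2 \ge 2y(1+a+by).
\]
It has no constant term to worry about, since at $y=0$ it reads $a^2\ge 0$. Comparing coefficients, the $y^2$ term needs $b^2 \ge 2b$, i.e.\ $b\ge 2$, which holds since $b\approx 2.71$. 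The linear term needs $2ab \ge 2(1+a)$, i.e.\ $ab \ge 1+a$; with $a\approx 0.707$ and $b\approx 2.707$ we get $ab\approx 1.914 \ge 1.707$. Both conditions hold, so the inequality holds for all $y\ge 0$.

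The main obstacle is exactly this choice of $x$ and the coefficient bookkeeping that produces the specific constants $2+\sqrt{2}$ and $4+\sqrt{2}$. We would also confirm the convention $\mathrm{Geom}(p)$ on $\{1,2,\dots\}$, so that $\mu=2k$; this matches the coupling in \Cref{lemma:runtime_bound}, where each inclusion takes at least one iteration. The case $\lambda\ge 1$ required by \Cref{lemma:janson} is automatic since $x\ge 0$.
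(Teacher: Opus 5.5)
Your proof is correct and follows the paper's argument: apply \Cref{lemma:janson} with $\lambda = 1 + t/(2k)$, simplify the exponent via \Cref{lemma:technical_bound}, and reduce to the condition $kx^2 \ge 2(1+x)\log(1/\delta)$, which is the paper's $t^2 \ge 4(2k+t)\log(1/\delta)$. The only difference is in the final algebra. The paper solves this quadratic in $t$ and uses AM--GM to show the stated threshold exceeds the larger root, whereas you substitute the threshold directly as $x = a + b\log(1/\delta)/k$ and check that the resulting quadratic in $y$ has nonnegative coefficients, which is slightly more direct.
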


\begin{proof}
 Let $X=\sum_{i=1}^k X_i$, and note that we have $p_* = 1/2$ and $\mu = \mathbb{E}[X]=2k$. For a parameter $t>0$ to be specified later, let $\lambda=1+\frac{t}{2k}$. Then by \Cref{lemma:janson},
 \begin{align}
 \Pr[X\ge 2k+t] &\le \exp\left(- k \left(\frac{t}{2k} - \log \left(1+\frac{t}{2k}\right)\right)\right) \nonumber\\
 &\le \exp\left(-k \cdot \frac{(t/2k)^2}{2(1+t/2k)}\right) = \exp\left(-\frac{t^2}{4(2k+t)}\right)
 \label{eq:prob_bound}
 \end{align}
 where the second inequality uses \Cref{lemma:technical_bound}, applied with $x = \frac{t}{2k}$. 

 In order for the above probability to be at most $\delta$, it suffices that
\begin{align*}
 \frac{t^2}{4(2k+t)} &\ge \log(1/\delta) \\
 \iff \quad
 t^2 - 4t \log(1/\delta) - 8k \log(1/\delta) &\ge 0 .
\end{align*}

Solving the quadratic inequality, we find that whenever 
\[
 t> 2\log(1/\delta) +\sqrt{4\log^2(1/\delta)+8k\log(1/\delta)}
\]
By \eqref{eq:prob_bound} we have $\Pr[X\ge 2k+t]\le \delta$. To further simplify the bound and complete the proof, observe that
\begin{align*}
\sqrt{2}k + (4+\sqrt{2})\log(1/\delta)
&\;\ge\; 4\log(1/\delta) + 2\sqrt{2k\log(1/\delta)} \\[4pt]
&=\; 2\log(1/\delta)
 + 2\log(1/\delta)
 + 2\sqrt{2k\log(1/\delta)} \\[4pt]
&\;\ge\; 2\log(1/\delta)
 + 2\log(1/\delta)\Bigl(1 + \sqrt{\tfrac{2k}{\log(1/\delta)}}\Bigr) \\[4pt]
&\;\ge\; 2\log(1/\delta)
 + 2\log(1/\delta)\sqrt{1 + \tfrac{2k}{\log(1/\delta)}} \\[4pt]
&=\; 2\log(1/\delta)
 + \sqrt{4\log^2(1/\delta) + 8k\log(1/\delta)},
\end{align*}
where the first inequality follows from the AM-GM inequality, which implies that $\sqrt{2}k + \sqrt{2}\log(1/\delta) \ge 2\sqrt{2k\log(1/\delta)}$.
\end{proof}

\begin{lemma}\label[lemma]{lemma:runtime_bound_formal}
 \Cref{algorithm:subsample_greedy} executes the \texttt{while} loop (Lines~\ref{alg:sample_line2}-\ref{alg:sample_last_inloop}) at most $ (2+\sqrt{2})k+(4+\sqrt{2})\log(2/\delta)$ times with probability at least $1-\delta/2$.
\end{lemma}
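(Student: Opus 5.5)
The plan is a coupling argument: the number of while-loop iterations is stochastically dominated by a sum of at most $k$ independent $\mathrm{Geom}(1/2)$ variables, after which \Cref{cor:boundSumGeom} applies with $\delta/2$ in place of $\delta$.

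The structural observation comes first. In each iteration the selected element $u_{i+1}$ is drawn from $\cC_i$, which excludes all previously selected elements. The candidate is then added to $S_i$ by an independent fair coin in Line~\ref{alg:sample_random_inclusion}. The current solution only grows through accepted coins, and every $S_i$ stays feasible. Hence the number of accepted coins over the whole execution is at most $k$, the maximal cardinality of a feasible set.

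Next I would set up the coupling. Consider an infinite i.i.d. sequence of fair coins $b_1,b_2,\dots$ and let the algorithm use $b_{i+1}$ as its inclusion coin in iteration $i+1$. This is legitimate because the coin is independent of everything else, including the \GenRNMT noise and the history. Let $G_1,G_2,\dots$ be the gaps between consecutive successes of this coin sequence. These are independent $\mathrm{Geom}(1/2)$ variables, whatever the algorithm does.

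Let $N$ be the number of loop iterations, which is finite since at most $n$ selections can occur. Every iteration consumes one coin. The loop cannot run beyond the $(k+1)$-th success, because the $(k+1)$-th acceptance would produce a feasible set of size $k+1$. More precisely, $N \le \sum_{j=1}^{k} G_j + \text{(iterations after the $k$-th success)}$. The second term is the main obstacle: after $k$ acceptances the loop might continue discarding elements.

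To handle it, I would argue that if $|S_i|=k$, no feasible addition $v$ remains. Otherwise $S_i\cup\{v\}$ would be feasible of size $k+1$, contradicting the definition of $k$. So the loop stops immediately once $k$ elements are accepted, and $N \le \sum_{j=1}^{k} G_j$ pointwise. If the loop stops earlier because no feasible candidates remain, the inequality holds trivially.

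Finally, applying \Cref{cor:boundSumGeom} with $\delta/2$ gives
\[
\Pr\Big[N \ge (2+\sqrt{2})k+(4+\sqrt{2})\log(2/\delta)\Big]\le \Pr\Big[\textstyle\sum_{j=1}^k G_j \ge (2+\sqrt{2})k+(4+\sqrt{2})\log(2/\delta)\Big]\le \delta/2,
\]
which is the claim.
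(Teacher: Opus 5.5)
Your proof is correct and follows essentially the same route as the paper: couple the number of loop iterations with a sum of $k$ independent $\mathrm{Geom}(1/2)$ variables, then invoke \Cref{cor:boundSumGeom} with $\delta/2$. Your single i.i.d.\ coin sequence with success gaps $G_j$ yields the pointwise bound $N\le\sum_{j=1}^k G_j$ and explicitly covers both the halt once $k$ elements are accepted and a possibly incomplete final phase, which the paper's phase-by-phase coupling $N_j\le X_j$ handles more loosely.
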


\begin{proof}
For the purpose of analysis, we assume without loss of generality that if the loop terminates with $|S_\ell| < k$, the algorithm sets $S_{\ell+1}, \dots, S_k \gets S_\ell$. This postprocessing does not affect the privacy analysis and ensures the algorithm executes exactly $k$ phases.

For $j = 1, \dots, k$, let $N_j$ denote the number of iterations of the \texttt{while} loop during which the algorithm has selected exactly $j-1$ elements. Formally, if $j \le \ell$, then $N_j$ is the number of iterations from the addition of the $(j-1)$-th element up to and including the addition of the $j$-th element. If $j > \ell$, we set $N_j = 0$.
The algorithm remains in phase $j$ only if the random inclusion step in Line~\ref{alg:sample_random_inclusion} rejects the selected element. Since the probability of acceptance in each trial is $1/2$ and independent of the prior state, the number of trials until an element is accepted follows a geometric distribution with parameter $1/2$. 
We couple the execution with a sequence of $k$ independent random variables $X_1, \dots, X_k \sim \mathrm{Geom}(1/2)$ ensuring that $N_j \le X_j$ for all $j$. For each phase $j \le \ell$, we have $N_j = X_j$, representing the number of trials until an acceptance. For $j > \ell$, we have $N_j = 0 \le X_j$. This coupling implies that the total number of iterations $\sum_{j=1}^{k} N_j$ is stochastically dominated by $\sum_{j=1}^{k} X_j$. That is, for all $t \ge 0$:
\[
\Pr\left[\sum_{j=1}^{k} N_j > t\right] \le \Pr\left[\sum_{j=1}^{k} X_j > t\right].
\]
The lemma follows by applying \Cref{cor:boundSumGeom} to the right-hand side with $\delta' = \delta/2$.
\end{proof}
We now prove the privacy guarantee of \Cref{algorithm:subsample_greedy}.

\begin{lemma}\label[lemma]{lemma:sample_privacy_proof}
For any $\eps,\delta>0$, \Cref{algorithm:subsample_greedy} is $(\eps,\delta)$-DP.
\end{lemma}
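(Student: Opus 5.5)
The plan is to split the privacy budget into two halves. The \texttt{while} loop (the sequence of \GenRNMT selections together with the random inclusion coins) will receive $(\eps/2,\delta)$, and the final \RNMT step in Line~\ref{alg:sample_lastEM} will receive $(\eps/2,0)$. Basic composition (\Cref{thm:composition}) then gives $(\eps,\delta)$-DP overall. The final step is immediate: it runs with parameter $\eps/2$ on scores $f(S)$ of sensitivity $\Delta_f$, so it is $\eps/2$-DP by \Cref{thm:RNM_formal}, and its candidate family $\cS$ is a post-processing of the loop's transcript.

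For the loop, the obstacle is that the number of \GenRNMT invocations is random and could be as large as $n$. I would fix the cap $K=(2+\sqrt{2})k+(4+\sqrt{2})\log(2/\delta)$ and analyze a truncated mechanism $\cA'$. This mechanism runs the loop exactly as \Cref{algorithm:subsample_greedy} does, but halts and outputs a fixed symbol $\bot$ once it would start iteration $K+1$.

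The mechanism $\cA'$ is an adaptive composition of at most $K$ mechanisms. Each of them is a \GenRNMT call with parameter $\eps_0$, which is $\eps_0$-DP by \Cref{thm:GRNM_formal}, followed by a data-independent coin flip, which costs no privacy. Padding with trivial mechanisms makes the count exactly $K$. I would choose
\[
\eps_0=\frac{\eps}{4\sqrt{2K\log(2/\delta)}}.
\]
Since $K=O(k+\log(1/\delta))$, this matches the form in Line~\ref{alg:sample_seteps}. With this choice, advanced composition (\Cref{thm:composition}, using $\eps\le1$) makes $\cA'$ $(\eps/2,\delta/2)$-DP.

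Next I would pass from $\cA'$ to the real loop $\cA$. Let $B_D$ be the event that $\cA$ on input $D$ performs more than $K$ iterations. The two mechanisms can be coupled so that they agree outside $B_D$. By \Cref{lemma:runtime_bound_formal}, $\Pr[B_D]\le\delta/2$ for every dataset $D$; this holds because the coupling to geometric variables uses only the inclusion coins and is valid regardless of which elements \GenRNMT selects. Hence, for any neighboring $D,D'$ and any event $O$,
\[
\Pr[\cA(D)\in O]\le \Pr[\cA'(D)\in O]+\tfrac{\delta}{2}\le e^{\eps/2}\Pr[\cA'(D')\in O]+\delta .
\]
To finish, I need $\Pr[\cA'(D')\in O]\le\Pr[\cA(D')\in O]$ for outputs other than $\bot$. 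This holds because $\cA'$ outputs a non-$\bot$ value only on runs where it agrees with $\cA$. Events containing $\bot$ are not in the range of $\cA$, so we may assume $O$ excludes $\bot$. Together these show the loop is $(\eps/2,\delta)$-DP.

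The delicate step is this truncation argument. It requires that the tail bound of \Cref{lemma:runtime_bound_formal} hold uniformly over datasets, and that the coupling between $\cA$ and $\cA'$ be set up on a common probability space. With these in place, composing with the final \RNMT step yields the lemma.
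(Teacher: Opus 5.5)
Your proposal is correct and follows the paper's proof step for step. It uses the same cap $L=(2+\sqrt{2})k+(4+\sqrt{2})\log(2/\delta)$ and the same $\eps_0=\eps/(4\sqrt{2L\log(2/\delta)})$, applies advanced composition to get $(\eps/2,\delta/2)$ for the first $L$ iterations, absorbs the $\delta/2$ overrun probability from \Cref{lemma:runtime_bound_formal}, and finishes with basic composition against the final $\eps/2$ selection step. Your truncated mechanism with the $\bot$ output just makes explicit the ``standard union bound'' the paper invokes, including the point that the tail bound holds for every dataset because it depends only on the inclusion coins.
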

\begin{proof}
Define $L = (2+\sqrt{2})k + (4+\sqrt{2}) \log(2/\delta)$. Each iteration of the \texttt{while} loop (Lines~\ref{alg:sample_line2}--\ref{alg:sample_last_inloop}) applies the \GenRNMT with privacy parameter $\eps_0$, where
\[
\eps_0 = \tfrac{\eps}{4 \sqrt{2 L \log(2/\delta)}}.
\]
Furthermore, each quality score $q_i(u;D)$ has sensitivity $2\Delta_f/c(u)$, by an argument similar to that in the proof of \Cref{lemma:sensitivity_bound_formal}, and $f_D(u_{i+1}\mid S_i)$ has sensitivity $2\Delta_f$. Thus, by \Cref{thm:genRNM_guarantee} each single iteration, which releases a candidate element $u_{i+1}$, is $\eps_0$-DP. 
Advanced composition (\Cref{thm:composition}) implies that releasing the outputs of the first $L$ iterations is $(\eps/2, \delta/2)$-DP. By \Cref{lemma:runtime_bound_formal}, the loop exceeds $L$ iterations with probability at most $\delta/2$. A standard union bound over these two failure events implies that the algorithm up to Line~\ref{alg:sample_defineS} is $(\eps/2, \delta)$-DP. 
Finally, by basic composition with the final \RNMT step (Line~\ref{alg:sample_lastEM}), which uses privacy parameter $\eps/2$, the overall algorithm is $(\eps,\delta)$-DP.
\end{proof}

\subsection{Utility Analysis}\label{sec:appendix_nonmonotone_utility}
To simplify the analysis of \Cref{algorithm:subsample_greedy}, we assume, without loss of generality, that all differential privacy noise random variables are sampled at the beginning of the algorithm. As noted in \Cref{remark;GRNM}, for each application of \GenRNMT within \Cref{algorithm:subsample_greedy}, we use $n$ as a uniform upper bound for the candidate set size instead of the iteration-specific $|\cC_i|$. Consequently, we set $t = \tfrac{2}{\eps_0}\log\tfrac{2n^2}{\beta}$. This allows us to isolate the randomness of the sampling step in Line~\ref{alg:sample_random_inclusion}. Specifically, we assume the following random variables are sampled upfront:
\begin{itemize}
 \item $n^2$ independent samples from $\Expo{\frac{\eps_0}{2}}$ for the \GenRNMT calls in Line~\ref{alg:sample_GEM}.
 \item $nk+k$ independent samples from $\Expo{\frac{\eps}{4}}$ for the \RNMT call in Line~\ref{alg:sample_lastEM}.
\end{itemize}
The proof of the utility guarantee for \Cref{algorithm:subsample_greedy} follows the proof outlined by \citet{han2021approximation}. We extend their techniques to account for the additive errors introduced by the DP mechanisms. We begin by introducing events under which these errors are bounded.
\begin{itemize}
 \item \textbf{Event $\cE_1$:}\; All $n^2$ samples from $\Expo{\frac{\eps_0}{2}}$ are bounded by $\frac{2}{\eps_0}\log \frac{2n^2}{\beta}$

 \item \textbf{Event $\cE_2$:}\; All $nk+k$ samples from $\Expo{\frac{\eps}{4}}$ 
 are bounded by 
 $\frac{4\Delta_f}{\eps}\log \frac{2k(n+1)}{\beta}$.
 \item \textbf{Event $\cE$:}\; $\cE=\cE_1\cap\cE_2$.
\end{itemize}

\begin{lemma}\label[lemma]{lemma:good_event}
Conditioned on $\cE$, the following hold.
\begin{enumerate}[label=(\textit{\roman*})]
\item
Each element $u_{i+1}$ selected in Line~\ref{alg:sample_GEM} of \Cref{algorithm:subsample_greedy} satisfies
\begin{align}
\label{eq:GEM_guarantee}
q_i(u_{i+1};D)
=
\frac{f(u_{i+1}\mid S_i)}{c(u_{i+1})}
\ge
\max_{u\in \cC_i}
\left\{
\frac{f(u\mid S_i)}{c(u)}
-
\frac{8\Delta_f}{c(u)\eps_0}\log\frac{2n^2}{\beta}
\right\}.
\end{align}

\item
\Cref{algorithm:subsample_greedy} outputs a set $S^*$ satisfying
\[
f(S^*)
\ge
\max_{S\in \cS} f(S)
-
\frac{4\Delta_f}{\eps}\log \frac{2k(n+1)}{\beta}.
\]
\end{enumerate}
\end{lemma}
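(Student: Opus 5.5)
Both items are deterministic consequences of the noise bounds in $\cE$, so the plan is to rerun the utility argument of \Cref{thm:GRNM_formal} (for (i)) and \Cref{thm:RNM_formal} (for (ii)) with the pre-sampled noises, using $\cE_1$ and $\cE_2$ in place of the probabilistic tail step. No union bound is needed here: the probabilistic content ($\Pr[\cE]\ge 1-\beta$) is handled separately.

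\textbf{Part (i).} Fix an iteration $i$ and the candidate set $\cC_i$, with $|\cC_i|\le n$. As in the argument above \Cref{lemma:sample_privacy_proof}, the score $q_i(u;D)=f(u\mid S_i)/c(u)$ has sensitivity $\Delta_u=2\Delta_f/c(u)$. By \Cref{remark;GRNM}, \GenRNMT in Line~\ref{alg:sample_GEM} uses the threshold $t=\frac{2}{\eps_0}\log\frac{2n^2}{\beta}$. It draws its exponential noises from the pre-sampled pool of $n^2$ values, which suffices because there are at most $n$ iterations with at most $n$ candidates each. Under $\cE_1$, every noise $Z_u$ used lies in $[0,t]$.

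Then I would repeat the proof of \Cref{thm:GRNM_formal} line by line. First, maximality of the output gives $q'(u_{i+1};D)\ge q'(v;D)-t$ for every $v\in\cC_i$. Second, the maximizer $v'$ of $q(v;D)-t\Delta_v$ satisfies $q'(v';D)\ge 0$, so $q'(u_{i+1};D)\ge -t$. Third, expanding the minimum in the definition of $q'(u_{i+1};D)$ against an arbitrary $u$ gives
\[
(q(u_{i+1})-t\Delta_{u_{i+1}})-(q(u)-t\Delta_u)\ge -t(\Delta_{u_{i+1}}+\Delta_u),
\]
hence $q(u_{i+1})\ge q(u)-2t\Delta_u$. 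Substituting $\Delta_u=2\Delta_f/c(u)$ gives $2t\Delta_u=\frac{8\Delta_f}{c(u)\eps_0}\log\frac{2n^2}{\beta}$, which is \eqref{eq:GEM_guarantee}.

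\textbf{Part (ii).} The final \RNMT in Line~\ref{alg:sample_lastEM} runs on $\cS$ with score $f$, whose sensitivity is $\Delta_f$, and privacy parameter $\eps/2$. Its noises are $\mathrm{Exp}(\frac{\eps}{4\Delta_f})$. The paper writes $\Expo{\eps/4}$, with the scale $\Delta_f$ evidently absorbed, and I would note this normalization explicitly. There are at most $k$ distinct prefixes, and each together with its extensions gives at most $k(n+1)$ sets, so the pool of $nk+k$ samples covers $\cS$.

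Let $S^*$ be the output and $\hat S\in\arg\max_{S\in\cS} f(S)$. Then $f(S^*)+Z_{S^*}\ge f(\hat S)+Z_{\hat S}\ge f(\hat S)$ because noises are nonnegative. Under $\cE_2$ we have $Z_{S^*}\le \frac{4\Delta_f}{\eps}\log\frac{2k(n+1)}{\beta}$, which gives (ii).

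\textbf{Main obstacle.} The only delicate point is the bookkeeping that the pre-sampled noise pools are large enough and correctly scaled, so that each mechanism's noises really fall under $\cE_1$ or $\cE_2$. This requires $|\cC_i|\le n$, at most $n$ loop iterations (since each of the at most $n$ elements is selected at most once), and $|\cS|\le k(n+1)$. It also requires replacing $|\cC_i|$ by the uniform bound $n$, which \Cref{remark;GRNM} permits.
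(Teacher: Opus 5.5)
Your proposal is correct and follows the paper's proof. Both parts come from rerunning the utility arguments of \Cref{thm:GRNM_formal} and \Cref{thm:RNM_formal} with the deterministic noise bounds from $\cE_1$ and $\cE_2$, where the threshold $t=\frac{2}{\eps_0}\log\frac{2n^2}{\beta}$ matches the noise bound. Your note on the noise scale is also right: the paper's own tail computation in \Cref{lemma:small_noise_event} uses $\mathrm{Exp}(\frac{\eps}{4\Delta_f})$, which is consistent with the bound in $\cE_2$.
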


\begin{proof}
We begin by proving \textit{(i)}. In each iteration $i$, \GenRNMT\ is invoked with privacy parameter $\eps_0$. The quality function $q_i(u;D)$ has a sensitivity of $2\Delta_f / c(u)$ with respect to the dataset argument. Under event $\cE$, the $n^2$ exponential noise samples utilized across all \GenRNMT\ invocations are bounded by $\frac{2}{\eps_0}\log \frac{2n^2}{\beta}$. Adapting the utility analysis of \GenRNMT\ (\Cref{thm:GRNM_formal}) to this noise bound, we obtain
\[
q_i(u_{i+1};D) \ge \max_{u\in\cC_i}
\left\{
q_i(u;D) - \frac{4\Delta_{q_i(u;\cdot)}}{\eps_0} \log\frac{2n^2}{\beta}
\right\} = \max_{u\in\cC_i}
\left\{
q_i(u;D) - \frac{8\Delta_f}{c(u)\eps_0} \log\frac{2n^2}{\beta}
\right\}.
\]

For \textit{(ii)}, we consider the final \RNMT\ step in Line~\ref{alg:sample_lastEM}. Under event $\cE$, all $n+nk$ realized exponential noises are bounded by $\frac{4\Delta_f}{\eps}\log \frac{2k(n+1)}{\beta}$. Because \RNMT\ selects the element with the maximum noisy score, by \Cref{thm:RNM_formal}, the true value of the selected element differs from the true maximum value by at most the bound of the noise. This yields the stated bound.
\end{proof}

\begin{lemma}\label[lemma]{lemma:small_noise_event}
The event $\cE$ occurs with probability at least $1-\beta$.
\end{lemma}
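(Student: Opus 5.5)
The plan is a direct tail-bound-plus-union-bound argument, handling $\cE_1$ and $\cE_2$ separately and showing that each fails with probability at most $\beta/2$. Since all noise variables are sampled upfront and independently of the random inclusion choices in Line~\ref{alg:sample_random_inclusion}, the events $\cE_1,\cE_2$ concern only a fixed finite collection of exponential random variables. No adaptivity issue arises.

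\textbf{Step 1: bounding $\cE_1$.} I will use the exact tail of the exponential distribution: if $Z\sim\Expo{\lambda}$ then $\Pr[Z> x]=e^{-\lambda x}$. With rate $\lambda=\eps_0/2$ and threshold $x=\frac{2}{\eps_0}\log\frac{2n^2}{\beta}$, this gives
\[
\Pr\Big[Z>\tfrac{2}{\eps_0}\log\tfrac{2n^2}{\beta}\Big]=\tfrac{\beta}{2n^2}.
\]
A union bound over the $n^2$ samples then gives $\Pr[\neg\cE_1]\le\beta/2$.

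\textbf{Step 2: bounding $\cE_2$.} The final \RNMT call runs with parameter $\eps/2$ on the score $f$, whose sensitivity is $\Delta_f$. By the definition of \RNMT, its noises are therefore $\Expo{\frac{\eps}{4\Delta_f}}$. I read the ``$\Expo{\frac{\eps}{4}}$'' in the setup as this rate, i.e.\ noise measured in units of $\Delta_f$; this normalization is the one point I would state explicitly. With this rate and threshold $\frac{4\Delta_f}{\eps}\log\frac{2k(n+1)}{\beta}$, each sample exceeds the threshold with probability exactly $\frac{\beta}{2k(n+1)}$. There are $nk+k=k(n+1)$ samples, so a union bound gives $\Pr[\neg\cE_2]\le\beta/2$.

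\textbf{Step 3: combining.} A final union bound gives $\Pr[\neg\cE]\le\Pr[\neg\cE_1]+\Pr[\neg\cE_2]\le\beta$, as required.

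The only real obstacle is bookkeeping. First, the number of pre-sampled variables must dominate what the algorithm actually consumes. By the argument of \Cref{lemma:subsample_complexity_appendix}, there are at most $n$ loop iterations, each with at most $n$ candidates, so $n^2$ samples suffice for Line~\ref{alg:sample_GEM}. Likewise $|\cS|\le k(n+1)$, so $k(n+1)$ samples suffice for Line~\ref{alg:sample_lastEM}. Second, the noise rate in $\cE_2$ must be matched with the $\Delta_f$ scaling, as noted in Step 2. Once these counts and rates are fixed, the exponential tail computation is exact and no further estimates are needed.
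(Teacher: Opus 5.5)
Your proof is correct and matches the paper's argument: the exact exponential tail gives failure probability $\beta/(2n^2)$ per sample for $\mathcal{E}_1$ and $\beta/(2k(n+1))$ per sample for $\mathcal{E}_2$, at rate $\varepsilon/(4\Delta_f)$, which is the rate the paper itself uses in the proof. Two union bounds of $\beta/2$ each then give $\Pr[\mathcal{E}]\ge 1-\beta$. Your check that the pre-sampled noise counts cover what the algorithm actually consumes is a reasonable addition; the paper leaves it implicit.
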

\begin{proof}
The lemma follows from applying tail bounds to each noise distribution followed by a union bound.
\begin{itemize}
 \item Event $\cE_1$: For $X \sim \text{Exp}(\frac{\eps_0}{2})$, we have $\Pr[X > s] = e^{-\eps_0 s / 2}$. Setting $s = \frac{2}{\eps_0}\log \frac{2n^2}{\beta}$ gives $\Pr[X > s] = \frac{\beta}{2n^2}$. By a union bound over $n^2$ samples, $\Pr[\cE_1^c] \le \beta/2$.
 
 \item Event $\cE_2$: For $Z \sim \text{Exp}(\frac{\eps}{4\Delta_f})$, setting $s = \frac{4\Delta_f}{\eps}\log \frac{2k(n+1)}{\beta}$ gives $\Pr[Z > s] = \frac{\beta}{2k(n+1)}$. A union bound over $k(n+1)$ samples yields $\Pr[\cE_2^c] \le \beta/2$.
\end{itemize}
Combining these via a union bound over the complements $\cE_1^c$, and $\cE_2^c$ gives $\Pr[\cE] \ge 1-\beta$.
\end{proof}

We now move on to introduce notation extended from \cite{han2021approximation}.
Let $\ell$ be the last iteration of the while loop in \Cref{algorithm:subsample_greedy}, and let $u_1,\dots,u_\ell$ be the elements found in Line \ref{alg:sample_GEM}. Define $\tau(u_j)=j$ for $1\le j\le \ell$ and $\tau(v)=\infty$ for any $v\in V\setminus \set{u_1,\dots,u_\ell}$.
Throughout the proof, we denote
$
E \eqdef \tfrac{8\Delta_f}{\eps_0}\log \tfrac{2n^2}{\beta}
$
which will serve as a bound for the error terms.
For any realization of the randomness of the algorithm, define
\begin{align*}
&r = \min \Bigg\{ i \;\mid\; \Bigg(\max_{\substack{ v \in \OPT \setminus S_i \\c(S_i \cup \{v\}) \le B }} f(v\mid S_i) \le E\Bigg) \lor (i = \ell )\Bigg\}\\
&T = \min \left\{ i \;\mid\; \Big( 0 \le i \le r - 1 \wedge c(S_i) + c(u_{i+1}) > c(\OPT\setminus \{o_m\})\Big) \vee (i = r) \right\}, \\
&O_{\le T} = \left\{ v \in \OPT \;\mid\; \tau(v) \le T \right\}, \\
&O_{> T} = \left\{ v \in \OPT \;\mid\; \tau(v) > T \wedge f(v \mid S_T) > E \right\}.
\end{align*}
Intuitively, $r$ is the smallest index in $\{0, \dots, \ell\}$ such that no remaining element $v \in \OPT \setminus S_i$ whose addition remains feasible provides a marginal gain exceeding $E$, if such an index exists. Otherwise, $r=\ell$. Moreover, $T$ is the smallest index in $\{0, \dots, r-1\}$ such that adding $u_{T+1}$ into $S_T$ would make the cost exceed $c(\OPT \setminus \{o_m\})$, if such an index exists. Otherwise, $T = r$, and we have $c(S_r) \le c(\OPT \setminus \{o_m\})$.

The following lemma from \cite{han2021approximation}, constructs a fractional allocation of the cost of late optimal items onto elements of the algorithm’s partial solution, which allows the proof to upper-bound the total marginal contribution of those optimal items in terms of elements already selected by the algorithm. Note that our definition of $O_{>T}$ includes a stricter condition on the marginal gain than the one used in~\cite{han2021approximation}. Thus, the cost inequality $c(O_{>T} \setminus \{o_m\}) < c([S_T \setminus (\OPT \setminus \{o_m\})] \cup \{u_{T+1}\})$ required for the existence of $\Psi$ remains satisfied.
\begin{lemma}[Lemma 1 in \citealp{han2021approximation}]\label{lemma:mapping}
For any realization of the randomness in \Cref{algorithm:subsample_greedy} such that if $T<\ell$, there exists a mapping $\Psi$ satisfying the following properties. For each 
$u \in O_{>T} \setminus \set{o_m}$, $\Psi(u)$ is a set of 2-tuples such that each tuple 
$(v, \lambda_v(u)) \in \Psi(u)$ satisfies 
$v \in [S_T \setminus (\OPT \setminus \set{o_m})] \cup \set{u_{T+1}}$ and 
$0 < \lambda_v(u) \le \min\set{c(u), c(v)}$. Moreover, we have:
\begin{align*}
\sum_{(v, \lambda_v(u)) \in \Psi(u)} \lambda_v(u) &= c(u), & \forall u \in O_{>T} \setminus \set{o_m}, \\
\sum_{\substack{u \in O_{>T} \setminus \set{o_m} \\ (v, \lambda_v(u)) \in \Psi(u)}} \lambda_v(u)
&\le c(v), & \forall v \in [S_T \setminus (\OPT \setminus \set{o_m})] \cup \set{u_{T+1}}.
\end{align*}
\end{lemma}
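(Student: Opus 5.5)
The plan has two parts. First I establish the cost inequality
\[
c(O_{>T}\setminus\{o_m\}) \le c\bigl([S_T\setminus(\OPT\setminus\{o_m\})]\cup\{u_{T+1}\}\bigr).
\]
Then I build $\Psi$ from it by a purely combinatorial ``interval overlap'' allocation. The allocation step works for any two finite families of positive weights in which the total source weight is at most the total target weight, so all the content of the lemma sits in the cost inequality.

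\textbf{Allocation step.} Let $A = O_{>T}\setminus\{o_m\}$ be the sources and $W=[S_T\setminus(\OPT\setminus\{o_m\})]\cup\{u_{T+1}\}$ the targets.
\begin{itemize}
\item Fix arbitrary orders on $A$ and on $W$.
\item Lay the costs of the sources as consecutive half-open intervals $I_u$ starting at $0$, of total length $c(A)$.
\item Lay the costs of the targets likewise as intervals $J_v$, of total length $c(W)\ge c(A)$.
\item Set $\lambda_v(u)=|I_u\cap J_v|$, and let $\Psi(u)$ consist of the pairs with $\lambda_v(u)>0$.
\end{itemize}
The required properties then follow directly:
\begin{itemize}
\item $\lambda_v(u)\le\min\{|I_u|,|J_v|\}=\min\{c(u),c(v)\}$, because an overlap is no longer than either interval.
\item $\sum_v\lambda_v(u)=c(u)$, because $I_u\subseteq[0,c(A)]\subseteq[0,c(W)]$ is covered by the $J_v$.
\item $\sum_u\lambda_v(u)\le |J_v|=c(v)$, because the $I_u$ are disjoint.
\end{itemize}
This step is routine.

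\textbf{Cost inequality.} Write $O'=\OPT\setminus\{o_m\}$. Every element of $O'\cap S_T$ has $\tau\le T$, so it is not in $O_{>T}$. Hence
\[
A\subseteq O'\setminus S_T \quad\text{and}\quad c(A)\le c(O')-c(S_T\cap O').
\]
\emph{Case $T<r$.} By the definition of $T$, $c(S_T)+c(u_{T+1})>c(O')$. Splitting $c(S_T)=c(S_T\cap O')+c(S_T\setminus O')$ gives
\[
c(A)\le c(O')-c(S_T\cap O') < c(S_T\setminus O')+c(u_{T+1}).
\]
If $u_{T+1}\notin S_T$, the right-hand side equals $c(W)$; this holds because $u_{T+1}$ is drawn from $\cC_T$, which excludes previously drawn elements. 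Note that $u_{T+1}$ may itself lie in $A$. This is harmless, since it is counted on the target side as well.

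\emph{Case $T=r<\ell$.} By the definition of $r$, every $v\in A$ is infeasible with respect to $S_T$, since $f(v\mid S_T)>E$. So each $v\in A$ satisfies
\[
c(v)>B-c(S_T)\ge c(\OPT)-c(S_T).
\]
I would combine this with $c(v)\le c(o_m)$ (as $o_m$ has maximal cost in $\OPT$) and with the fact that $v$ and $o_m$ are distinct elements of $\OPT\setminus S_T$. From these I would argue two things: $|A|\le1$, and $c(v)\le c(S_T\setminus O')$ up to the $u_{T+1}$ term. Concretely, I would use
\[
c(v)+c(o_m)+c(S_T\cap O')\le c(\OPT)<c(S_T)+c(v),
\]
which gives $c(o_m)<c(S_T\setminus O')$, and then use $c(v)\le c(o_m)$.

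\textbf{Main obstacle.} I expect the case $T=r<\ell$ to be the delicate part. In that case the defining inequality of $T$ is not available, and the paper's stricter definition of $O_{>T}$ (marginal gain greater than $E$ rather than greater than $0$) is exactly what forces late optimal elements to be infeasible. So the inequality has to be extracted from the knapsack feasibility of $\OPT$ together with the maximality of $c(o_m)$, as sketched above. I would check carefully that elements of $S_T\cap\{o_m\}$ are accounted for correctly. For the allocation itself, a non-strict inequality suffices.
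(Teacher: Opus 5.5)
Your overall route is the one the paper relies on. You first establish $c(O_{>T}\setminus\{o_m\})\le c\bigl([S_T\setminus(\mathrm{OPT}\setminus\{o_m\})]\cup\{u_{T+1}\}\bigr)$, then split costs fractionally. The paper only remarks that its stricter definition of $O_{>T}$ preserves this cost inequality, and it defers the construction of $\Psi$ to \citet{han2021approximation}. Your interval-overlap allocation is a valid such construction. Your argument for the case $T<r$, including the observation $u_{T+1}\notin S_T$, is correct.

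The case $T=r<\ell$, which you flag as the delicate one, is actually trivial, and your sketch of it has a hole. Write $A=O_{>T}\setminus\{o_m\}$ and $O'=\mathrm{OPT}\setminus\{o_m\}$. The facts you invoke are: each $v\in A$ is infeasible for $S_T$, $c(v)\le c(o_m)$, and $c(\mathrm{OPT})\le B$. These give only the per-element bound $c(v)<c(S_T\setminus O')$; they do not imply $|A|\le 1$. For instance, take $\mathrm{OPT}=\{o_m,v,v'\}$ with all three costs equal to $0.3$, $B=1$, and $S_T$ a single non-optimal element of cost $0.8$; these cost facts all hold with $|A|=2$.

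What you did not use is the definition of $T$ itself. $T=r$ means the cost condition failed at every $i<r$, so $c(S_T)\le c(S_{T-1})+c(u_T)\le c(O')$ (trivially so if $T=0$). Then every $v\in\mathrm{OPT}\setminus(S_T\cup\{o_m\})$ satisfies $c(S_T)+c(v)\le c(\mathrm{OPT})-c(o_m)+c(v)\le B$, so $v$ is feasible. Since $r<\ell$, the definition of $r$ then gives $f(v\mid S_T)\le E$. Hence $A=\varnothing$ and $\Psi$ is the empty mapping. This is exactly the argument the paper gives in Case~1 of the proof of \Cref{lemma:gain_bound_restate}.

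Alternatively, you can stay within your approach by summing rather than bounding elementwise. For any $v\in A$, $c(A)+c(o_m)+c(S_T\cap O')\le c(\mathrm{OPT})<c(S_T)+c(v)$, which yields $c(A)<c(S_T\setminus O')+c(v)-c(o_m)\le c(S_T\setminus O')$. With either fix the proposal is complete.
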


\begin{lemma}[Formal version of \Cref{lemma:gain_bound}]\label{lemma:gain_bound_restate}
 For each $u\in \cN$, let $X_u$ be an indicator for the event that $u\in O_{\le T}\setminus (S_T\cup\set{o_m})$ or $u\in S_T \setminus (\OPT\setminus\set{o_m})$. Then, conditioned on $\cE$,
 \begin{align*}
 f(S_T\cup\OPT) \le f(S_T\cup\set{o_m})+\sum_{u\in\cN} X_uf(u\mid S_u) +f(S^*)-f(S_T) + O\!\paren{\frac{k\Delta_f}{\eps_0}\log \frac{n}{\beta}},
 \end{align*}
where for any $j \in \set{1,\dots, \ell}$, $S_{u_j}$ denotes $S_{j-1}$, i.e., the set to which $u_j$ is considered for addition. For any $u \notin \{u_1,\dots,u_\ell\}$, define $S_u = \varnothing$.
\end{lemma}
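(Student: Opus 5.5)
We follow the structure of Lemma 2 in \citet{han2021approximation} and track the additive errors coming from the DP mechanisms. Throughout we condition on $\cE$, so \Cref{lemma:good_event} applies. The starting point is a decomposition of $\OPT$:
\[
\OPT \subseteq \{o_m\} \cup \bigl(O_{\le T}\setminus\{o_m\}\bigr) \cup \bigl(O_{>T}\setminus\{o_m\}\bigr) \cup O',
\]
where $O'$ consists of the elements $v\in\OPT$ with $\tau(v)>T$ and $f(v\mid S_T)\le E$. Submodularity then gives
\[
f(S_T\cup\OPT) \le f(S_T\cup\{o_m\}) + \sum_{v\in O_{\le T}\setminus(S_T\cup\{o_m\})} f(v\mid S_T) + \sum_{v\in O_{>T}\setminus\{o_m\}} f(v\mid S_T) + |O'|\,E .
\]
We bound the three non-leading terms separately.

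\textbf{The $O'$ term.} Since $|O'|\le k$, this term contributes at most $kE = O(\tfrac{k\Delta_f}{\eps_0}\log\tfrac{n}{\beta})$.

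\textbf{The $O_{\le T}$ term.} Each such $v$ equals $u_j$ for some $j\le T$ and was rejected by the coin. Since $S_{u_j}=S_{j-1}\subseteq S_T$, submodularity gives $f(v\mid S_T)\le f(v\mid S_v)$. This is exactly the $X_u$-term for $u\in O_{\le T}\setminus(S_T\cup\{o_m\})$.

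\textbf{The $O_{>T}$ term.} This is the main step. There are two cases.

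\emph{Case $T<r$.} Here $T<\ell$, so the mapping $\Psi$ of \Cref{lemma:mapping} exists. Fix $u\in O_{>T}\setminus\{o_m\}$ and $(v,\lambda_v(u))\in\Psi(u)$, and suppose $v=u_j$ with $j\le T+1$. We need $u\in\cC_{j-1}$, i.e.\ $c(S_{j-1}\cup\{u\})\le B$. This holds because $c(S_{j-1})\le c(S_T)\le c(\OPT\setminus\{o_m\})\le B-c(o_m)\le B-c(u)$; for $j-1\le T$ the bound on $c(S_{j-1})$ uses the definition of $T$. Moreover $\tau(u)>T$, so $u$ was not yet chosen. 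Then \eqref{eq:GEM_guarantee} and submodularity give
\[
\frac{f(u\mid S_T)}{c(u)} \le \frac{f(u\mid S_{j-1})}{c(u)} \le \frac{f(v\mid S_v)}{c(v)} + \frac{E}{c(u)} .
\]
Multiplying by $\lambda_v(u)$ and summing over $\Psi(u)$, the bound $\lambda_v(u)\le c(u)$ turns the error into at most $E$ times the number of tuples. Using $\sum_{(v,\lambda)\in\Psi(u)}\lambda=c(u)$, we get
\[
\sum_{u} f(u\mid S_T) \le \sum_{v} \frac{f(v\mid S_v)}{c(v)}\sum_{u}\lambda_v(u) + O(kE).
\]
Here the sum over $v$ ranges over $[S_T\setminus(\OPT\setminus\{o_m\})]\cup\{u_{T+1}\}$.

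The delicate point is that $f(v\mid S_v)$ may be negative. We handle this by exploiting that $u$ has large gain. Since $f(u\mid S_{j-1})\ge f(u\mid S_T)>E$, the density guarantee gives $f(v\mid S_v)/c(v)\ge (f(u\mid S_{j-1})-E)/c(u)>0$. So each such $v$ has nonnegative density, and the capacity bound $\sum_u\lambda_v(u)\le c(v)$ yields
\[
\sum_{u} f(u\mid S_T) \le \sum_{v\in S_T\setminus(\OPT\setminus\{o_m\})} f(v\mid S_v) + f(u_{T+1}\mid S_T) + O(kE).
\]
The first sum is the remaining $X_u$-terms. For the last term, note that $S_T\cup\{u_{T+1}\}$ is feasible (it is in $\cC_T$) and lies in $\cS$. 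By part (ii) of \Cref{lemma:good_event},
\[
f(u_{T+1}\mid S_T) \le f(S^*) - f(S_T) + \tfrac{4\Delta_f}{\eps}\log\tfrac{2k(n+1)}{\beta}.
\]
The error count is $O(kE)$ because $|\Psi|$ is bounded by the number of pairs, and this needs a check. If the pair count is too large, we apply the error per unit $\lambda_v(u)/c(u)$: each $u$ then contributes total weight $1$, so the errors sum to $|O_{>T}|E\le kE$.

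\emph{Case $T=r$.} Either $r=\ell$ or there are no high-gain feasible $\OPT$ elements at $S_r$. In the second case every $u\in O_{>T}$ is infeasible to add to $S_T$ while having gain $>E$. We should argue that this forces the cost comparison, and that the mapping argument of \citet{han2021approximation} still applies with some feasible extension $S_T\cup\{v\}$ playing the role of $u_{T+1}$. If $O_{>T}\setminus\{o_m\}$ is empty, the term vanishes. I expect this case split, and the sign issue above, to be the main obstacle, and would follow \citet{han2021approximation} closely here.

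Summing all bounds gives the statement, with the total error $O(kE)+O(\tfrac{\Delta_f}{\eps}\log\tfrac{n}{\beta}) = O(\tfrac{k\Delta_f}{\eps_0}\log\tfrac{n}{\beta})$.
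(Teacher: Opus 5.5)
\textbf{Verdict: genuine gap.} Your decomposition of $\OPT$, the bounds on the $O'$ and $O_{\le T}$ terms, the bound on $f(u_{T+1}\mid S_T)$ via $S_T\cup\{u_{T+1}\}\in\cS$, and the mapping step for $T<r$ all match the paper. Using an element of $O_{>T}\setminus\{o_m\}$ as the witness for the sign of $f(v\mid S_v)$ is also a sound variant. However, the case $T=r$ is left open, and the route you sketch for it is the wrong one. If $T=r$, no index $i<r$ meets the cost condition defining $T$, so $c(S_T)\le c(\OPT\setminus\{o_m\})$. Then every $v\in\OPT$ with $\tau(v)>T$ satisfies $c(S_T\cup\{v\})\le c(\OPT)\le B$, so elements of $O_{>T}$ are never infeasible. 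If $r<\ell$, the definition of $r$ gives $f(v\mid S_T)\le E$, contradicting $v\in O_{>T}$. If $r=\ell$, such a $v$ would keep the while loop running. Hence $O_{>T}=\emptyset$, and no mapping or surrogate for $u_{T+1}$ is needed.

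The more substantive omission is what happens once the $O_{>T}$ term vanishes. This occurs here, and also when $T<r$ but $O_{>T}\setminus\{o_m\}=\emptyset$. The right-hand side of the lemma still contains $\sum_{v\in S_T\setminus(\OPT\setminus\{o_m\})}f(v\mid S_v)+f(S^*)-f(S_T)$. No estimate produces this quantity any more, and you cannot simply drop it, because the noisy selection may pick elements with negative gain. You must show it is at least $-O(E)$.

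The paper does this before the case split. For $j\le T$ we have $j-1<r$, so the definition of $r$ supplies a feasible optimal element with gain above $E$ at step $j$. Part (i) of \Cref{lemma:good_event} then forces $f(u_j\mid S_{j-1})\ge 0$. Together with $f(S^*)\ge f(S_T)-E$ (since $S_T\in\cS$), this gives $0\le\sum_{v\in S_T\setminus(\OPT\setminus\{o_m\})}f(v\mid S_v)+f(S^*)-f(S_T)+E$. One caveat: for this witness to lie in $\cC_{j-1}$, the maximum in the definition of $r$ should be read over optimal elements with $\tau(v)>i$, because $\OPT\setminus S_i$ may contain discarded elements. The argument that $O_{>T}=\emptyset$ is unaffected by this reading.

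Your witness argument gives this nonnegativity only when $O_{>T}\setminus\{o_m\}\neq\emptyset$. Even then, apply it to every $u_j$ with $j\le T+1$, not just to the $v$ that appear in some $\Psi(u)$. An unmatched $v$ with $f(v\mid S_v)<0$ would break the capacity step: its coefficient on the left is $0$, while $f(v\mid S_v)$ appears on the right.
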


\begin{proof}
By submodularity, we have
\begin{align}
 f(S_T\cup \OPT) - f(S_T\cup\set{o_m}) &\le \sum_{u\in \OPT\setminus(S_T\cup\set{o_m})} f(u\mid S_T) \nonumber\\
& \le \sum_{u\in O_{>T}\setminus\{o_m\}} f(u\mid S_T)
 + \sum_{u\in O_{\le T}\setminus(S_T\cup\set{o_m})} f(u\mid S_T) + (k-1)E . 
\label{eq:lemma_submodularity}
\end{align}
where the last inequality uses the fact that for any $v \in \OPT \setminus (O_{\le T} \cup O_{>T}\cup\set{o_m})$, it holds that $f(v \mid S_T) \le E$ by definition of $O_{>T}$. Since there are at most $k-1$ such elements, their total contribution is upper-bounded by $(k-1)E$.

Now recall that since $T\le r$, for every $j\in\set{1,\dots,T}$, by definition of $r$, the candidate set $C_{j-1}$ includes a feasible element $v$ such that $ f(v \mid S_{j-1})\ge E $.
Therefore, by \Cref{lemma:good_event} and under the conditioning on $\cE$,
\[
 f(u_{j}\mid S_{j-1}) \ge \max_{v\in \cC_{j-1}}
\left\{
\frac{f(v\mid S_{j-1})}{c(v)}
-
\frac{E}{c(v)}
\right\} \ge 0.
\]
Moreover, conditioned on $\cE$, since $S_T$ is among the candidates given to
\RNMT in Line~\ref{alg:sample_lastEM}, we have
\[
 f(S^*) \ge f(S_T) - \frac{4\Delta_f}{\eps}\log \frac{2k(n+1)}{\beta} \ge f(S_T) -E,
\]
where we have also used that $\eps_0\le \eps$. Therefore,
 \begin{align}
0&\le \sum_{u\in S_T\setminus(\OPT\setminus\{o_m\})}
 f(u\mid S_u)
 + f(S^*)-f(S_T)
 + E
 \label{eq:lemma_under_condition}.
\end{align}

We now consider two cases.

\textbf{Case 1: $T=r$.}
In this case, it must be that $O_{>T} = \varnothing$. Indeed, suppose toward a contradiction that $O_{>T} \neq \varnothing$. If $T = r = \ell$, then $c(S_\ell) \le c(\OPT \setminus \set{o_m})$. Since $O_{>T}\subseteq \OPT$, there exists an element in $O_{>T}$ that can be added to $S_\ell$ without violating the budget constraint, contradicting termination at iteration $\ell$. Thus, assume $T = r < \ell$. By the definition of $T$, any $v \in O_{>T}$ can be added to $S_T$ without violating the budget constraint. However, the definition of $r$ implies that for each such element, $f(v \mid S_T) \le E$, which contradicts the defining property of $O_{>T}$.

By \eqref{eq:lemma_submodularity} and \eqref{eq:lemma_under_condition}, we get
\begin{align}
f(S_T\cup \OPT) - f(S_T\cup\set{o_m})
&\le \sum_{u\in O_{\le T}\setminus(S_T\cup\set{o_m})} f(u\mid S_u)
 + \sum_{u\in S_T\setminus(\OPT\setminus\{o_m\})} f(u\mid S_u) \nonumber\\
&\qquad
 + f(S^*)-f(S_T)
 + kE
\end{align}
The inequality in the lemma follows.

\textbf{Case 2: $T<r$.} This case leverages the mapping defined in \Cref{lemma:mapping} to bound the term $ \sum_{u\in O_{>T}\setminus\{o_m\}} f(u\mid S_T)$. We have
\begin{align*}
\sum_{u \in O_{>T} \setminus \set{o_m}} f(u \mid S_T) 
&= \sum_{u \in O_{>T} \setminus \set{o_m}} \frac{f(u \mid S_T)}{c(u)} \cdot c(u) \\
&= \sum_{u \in O_{>T} \setminus \set{o_m}} \frac{f(u \mid S_T)}{c(u)} \sum_{(v, \lambda_v(u)) \in \Psi(u)} \lambda_v(u) \\
&\le \sum_{u \in O_{>T} \setminus \set{o_m}} \sum_{(v, \lambda_v(u)) \in \Psi(u)} \paren{\frac{f(v \mid S_v)}{c(v)} + \frac{E}{c(u)} }\cdot \lambda_v(u) \\
&= \sum_{u \in O_{>T} \setminus \set{o_m}} \Bigg[ E+ \sum_{(v, \lambda_v(u)) \in \Psi(u)} \paren{\frac{f(v \mid S_v)}{c(v)}}\cdot \lambda_v(u)\Bigg] \\
&\le \sum_{v \in (S_T \setminus (\OPT \setminus \{o_m\})) \cup \{u_{T+1}\}} \frac{f(v \mid S_v)}{c(v)} \cdot c(v) + kE \\
&\le \sum_{v \in S_T \setminus (\OPT \setminus \set{o_m})} f(v \mid S_v) +f(u_{T+1}\mid S_T) + \frac{8k\Delta_f}{\eps_0}\log\frac{2n^2}{\beta}.\\
&\le \sum_{v \in S_T \setminus (\OPT \setminus \set{o_m})} f(v \mid S_v) +f(S^*)-f(S_T) + \frac{4\Delta_f}{\eps}\log \frac{2k(n+1)}{\beta}+ \frac{8k\Delta_f}{\eps_0}\log\frac{2n^2}{\beta}.
\end{align*}

For the first inequality, note that no element $u \in O_{>T} \setminus \{o_m\}$ is selected before iteration $T$, and adding such an element to any $S_i$ for $i < T$ preserves feasibility. Consequently, $u$ appears in the candidate set provided to \GenRNMT at the iteration in which $v$ is selected. Therefore, by submodularity and the guarantee in \Cref{lemma:good_event}, we obtain
\[
\frac{f(u\mid S_T)}{c(u)} \le \frac{f(u\mid S_v)}{c(u)} \le \frac{f(v\mid S_v)}{c(v)} + \frac{E}{c(u)}.
\]

For the final inequality, note that $S_T \cup \{u_{T+1}\}$ is included among the candidates provided to \RNMT at Line~\ref{alg:sample_lastEM}. Under $\cE_2$, we have
\[
f(u_{T+1}\mid S_T) = f(S_T \cup \{u_{T+1}\}) - f(S_T) \le f(S^*) - f(S_T) + \frac{4\Delta_f}{\eps}\log \frac{2k(n+1)}{\beta}.
\]
Substituting this bound into \eqref{eq:lemma_submodularity} and using that $\eps_0 \le \eps$ completes the proof.

\end{proof}

The next lemma from \cite{han2021approximation} shows that $\sum_{u\in \mathcal{N}} X_u \cdot f(u \mid S_u)$ has the same expected value as $f(S_T) - f(\varnothing)$, where the expectation is over the Bernoulli random variables that determine if a selected element is discarded. The proof extends to our setting by conditioning on any arbitrary realization of the DP random variables. Crucially, because our algorithm samples all DP noise in advance, conditioning on these realizations does not introduce correlations with the Bernoulli trials. This allows us to obtain the equality in expectation conditioned on $\mathcal{E}$. We provide a proof sketch for completeness. Henceforth, all expectations are implicitly conditioned on $\mathcal{E}$.

\begin{lemma}[Lemma 3 in \citealp{han2021approximation}]\label[lemma]{lemma:expectation_equality}
 We have $\mathbb{E}[f(S_T) - f(\varnothing) ] = \mathbb{E}[\sum_{u\in \mathcal{N}} X_u \cdot f(u\mid S_u)]$, where $X_u$ is defined in \Cref{lemma:gain_bound}.
\end{lemma}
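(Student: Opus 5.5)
Since all DP noise is fixed upfront, I will condition on an arbitrary noise realization in $\cE$. With the noise fixed, the execution is a deterministic function of the Bernoulli coins of Line~\ref{alg:sample_random_inclusion}. Write $\xi_j\in\{0,1\}$ for the coin of iteration $j$, so that $u_j\in S_j$ iff $\xi_j=1$. The plan is to write $f(S_T)-f(\varnothing)$ as a telescoping sum over iterations and match it term by term, in expectation, with $\sum_u X_u f(u\mid S_u)$.

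The telescoping identity is
\[
f(S_T)-f(\varnothing)=\sum_{j=1}^{\ell}\mathbf{1}[j\le T]\,\xi_j\, f(u_j\mid S_{j-1}).
\]
Since $S_{u_j}=S_{j-1}$ and elements not of the form $u_j$ have $X_u=0$, it suffices to show, for each $j$,
\[
\mathbb{E}\bigl[\mathbf{1}[j\le T]\,\xi_j\, f(u_j\mid S_{j-1})\bigr]=\mathbb{E}\bigl[X_{u_j}\,f(u_j\mid S_{j-1})\bigr].
\]
Here $j\le T$ is needed throughout: for $j\le T$ we have $u_j\in O_{\le T}$ iff $u_j\in\OPT$, and $u_j\in S_T$ iff $\xi_j=1$.

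Fix $j$ and condition on the history $\mathcal{H}_{j-1}$, meaning the coins $\xi_1,\dots,\xi_{j-1}$. This history determines $S_{j-1}$, the candidate $u_j$, and whether $j\le T$. The last point needs checking: $T\ge j$ holds iff $r\ge j$ and no index $i<j$ triggered the cost condition. The conditions on $r$ and on the cost at indices $i\le j-1$ depend only on $S_0,\dots,S_{j-1}$ and on $u_1,\dots,u_j$, all of which are $\mathcal{H}_{j-1}$-measurable. Granting this, $\xi_j$ is a fair coin independent of $\mathcal{H}_{j-1}$, and I split into cases on $u_j$.
\begin{itemize}
\item If $u_j\notin\OPT$, then $X_{u_j}=\xi_j$ on $\{j\le T\}$, so the two sides agree pointwise.
\item If $u_j=o_m$, then $X_{u_j}=0$, while the left side has conditional expectation $\tfrac12\mathbf{1}[j\le T]f(o_m\mid S_{j-1})$. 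This is the delicate subcase, discussed below.
\item If $u_j\in\OPT\setminus\{o_m\}$, then $X_{u_j}=\mathbf{1}[u_j\notin S_T]=1-\xi_j$ on $\{j\le T\}$. Hence both sides have conditional expectation $\tfrac12\mathbf{1}[j\le T]f(u_j\mid S_{j-1})$.
\end{itemize}
Elements with $j>T$ contribute nothing on either side. I must make sure $X_u$ cannot be $1$ for such $u$: this holds because $u\notin O_{\le T}$, and $u\in S_T$ is impossible for $\tau(u)>T$.

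The main obstacle is the $o_m$ subcase, which would break exact equality as the lemma is stated. I would first re-read the definition of $X_u$: $u\in S_T\setminus(\OPT\setminus\{o_m\})$ includes $o_m$ when $o_m\in S_T$. So $X_{o_m}=\xi_j$ in that case, and the match is again pointwise. This resolves it, and the indicator definition was evidently designed for exactly this.

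The remaining care point is justifying that $\{j\le T\}$ is $\mathcal{H}_{j-1}$-measurable. For $r$ this relies on the fixed noise making $u_{i+1}$ a deterministic function of $S_i$. Finally, I sum over $j$, take expectation over the coins, and then over noise realizations in $\cE$.
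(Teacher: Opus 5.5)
Your proposal is correct and follows essentially the same route as the paper's proof sketch. You fix the DP noise, write $f(S_T)-f(\varnothing)$ as a sum of per-element gains, condition on the history up to the moment $u_j$ is considered, and match the two sides via the fair coin. In that matching $o_m$ correctly falls into the paper's $u\notin \mathrm{OPT}\setminus\{o_m\}$ case with $X_{o_m}=\xi_j$ on $\{j\le T\}$; your first bullet claiming $X_{o_m}=0$ is wrong and should be deleted rather than left as a "delicate subcase". Your explicit check that $\{j\le T\}$ is determined by $\xi_1,\dots,\xi_{j-1}$ is more careful than the paper's sketch. That sketch infers $\tau(u)\le T$ from the cost condition at the moment $u$ is considered, ignoring that $T$ may already be smaller because of $r$ or an earlier cost violation.
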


\begin{proof}[Proof Sketch]
Let us condition on a specific realization of the randomness of the DP mechanisms such that $\mathcal{E}$ occurs. By the law of total expectation, it suffices to show the equality in the lemma conditioned on any such realization.
For each $u\in \mathcal{N}$, define a random variable $R_u$ by $R_u = f(u\mid S_u)$ if $u\in S_T$ and $R_u=0$ otherwise. As $f(S_T) - f(\varnothing) = \sum_{u\in \mathcal{N}} R_u$, by linearity of expectation it suffices to show that $\mathbb{E}[R_u] = \mathbb{E}[X_u \cdot f(u\mid S_u)]$ holds for every $u\in\mathcal{N}$. Fix an element $u$ and condition further on all random choices up to the moment when $u$ is considered by the algorithm in Line~\ref{alg:sample_random_inclusion}, or on an execution where the algorithm terminates and $u$ is never considered. Under this conditioning, the set $S_u$ is fixed. If $u$ is never considered, or if $c(S_u\cup\{u\}) > c(\OPT\setminus\{o_m\})$ at the time it is considered, then $\tau(u) > T$, implying that both $R_u$ and $X_u$ are zero.
Otherwise, we have $\tau(u) \le T$. The algorithm adds $u$ into $S_u$ with probability $1/2$, and therefore $\mathbb{E}[R_u] = \frac{1}{2} f(u\mid S_u)$. We now consider the two cases in the definition of $X_u$. If $u \in \OPT \setminus \{o_m\}$, then $X_u=1$ if and only if $u$ is discarded by the algorithm. If $u \notin \OPT \setminus \{o_m\}$, then $X_u=1$ if and only if $u$ is selected by the algorithm. In both cases, $X_u=1$ with probability $1/2$, and hence $\mathbb{E}[X_u \cdot f(u\mid S_u)] = \frac{1}{2} f(u\mid S_u)$. Since the conditional expectations agree in all cases, the claim follows. As the equality holds for any realization, it also holds in expectation conditioned on $\mathcal{E}$.
\end{proof}

The proof of the utility guarantee is completed by utilizing the following lemma.
\begin{lemma}[\citet{buchbinder2014submodular}]\label[lemma]{lemma:stochastic_bound}
Let $g : 2^{\mathcal{N}} \to \mathbb{R}$ be a non-negative submodular function. Denote by $Y$ a random subset of $\mathcal{N}$ where each element appears with probability at most $p$ (not necessarily independently). Then, $\mathbb{E}[g(Y)] \ge (1-p) g(\varnothing)$.
\end{lemma}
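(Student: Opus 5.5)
The plan is the standard telescoping argument: sort elements by how likely they are to appear in $Y$, lower-bound each marginal gain by a deterministic one via submodularity, then use summation by parts and non-negativity.

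First, write $\cN=\{u_1,\dots,u_n\}$ so that $p_i \eqdef \Pr[u_i\in Y]$ is non-increasing: $p_1\ge p_2\ge\dots\ge p_n$. By assumption $p_1\le p$. Let $U_i=\{u_1,\dots,u_i\}$ (with $U_0=\varnothing$), $Y_i=Y\cap U_i$, and let $X_i$ be the indicator of $u_i\in Y$.

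Second, expand $g(Y)$ along this order:
\[
g(Y)=g(\varnothing)+\sum_{i=1}^n X_i\, g(u_i\mid Y_{i-1}).
\]
Since $Y_{i-1}\subseteq U_{i-1}$ and $u_i\notin U_{i-1}$, submodularity gives $g(u_i\mid Y_{i-1})\ge g(u_i\mid U_{i-1})$. This holds pointwise, whatever the sign of the marginal, because the inequality is only used where $X_i=1$. The right-hand side is now deterministic. Taking expectations, and using that only the marginals of the $X_i$ enter, so no independence is needed, we get
\[
\mathbb{E}[g(Y)]\ge g(\varnothing)+\sum_{i=1}^n p_i\bigl(g(U_i)-g(U_{i-1})\bigr).
\]

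Third, apply summation by parts to the right-hand side. It equals
\[
(1-p_1)\,g(\varnothing)+\sum_{i=1}^{n-1}(p_i-p_{i+1})\,g(U_i)+p_n\,g(U_n).
\]
Every coefficient $p_i-p_{i+1}$ and $p_n$ is non-negative by the chosen ordering, and $g$ is non-negative. So all terms except the first can be dropped, leaving $\mathbb{E}[g(Y)]\ge(1-p_1)g(\varnothing)\ge(1-p)g(\varnothing)$, where the last step uses $g(\varnothing)\ge 0$ and $p_1\le p$.

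The only delicate point is choosing the ordering by decreasing $p_i$. With any other order the summation-by-parts coefficients could be negative and the bound would fail. I do not anticipate any other obstacle.
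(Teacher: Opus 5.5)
The paper does not prove this lemma; it only cites \citet{buchbinder2014submodular}. Your argument is correct and is essentially the proof given in that reference: you order the elements by non-increasing $\Pr[u_i\in Y]$, use submodularity to replace $Y_{i-1}$ by the prefix $U_{i-1}$, and finish by summation by parts, dropping the non-negative terms $(p_i-p_{i+1})\,g(U_i)$ and $p_n\,g(U_n)$.
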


\begin{lemma}[\Cref{lemma:sample_utility_proof_main} restated]
\Cref{algorithm:subsample_greedy} outputs a feasible set $S^*$ such that, with probability $1-\beta$,
\[
 \mathbb{E}[f(S^*)] \ge \frac{1}{4} f(\OPT) - O\Big(\frac{k\Delta_f}{\eps_0}\log \frac{n}{\beta}\Big),
\]
\end{lemma}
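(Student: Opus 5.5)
We condition throughout on the event $\cE$; by \Cref{lemma:small_noise_event} it has probability at least $1-\beta$, and all expectations below are over the Bernoulli inclusion coins given any realization of the DP noise in $\cE$. Set $E' \eqdef O\big(\frac{k\Delta_f}{\eps_0}\log\frac{n}{\beta}\big)$ for the additive term of \Cref{lemma:gain_bound_restate}. Feasibility of $S^*$ is immediate, since every set in $\cS$ satisfies the budget constraint.

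\textbf{Upper bound.} Take expectations in \Cref{lemma:gain_bound_restate} and substitute \Cref{lemma:expectation_equality}, which gives $\Exp[\sum_u X_u f(u\mid S_u)] = \Exp[f(S_T)] - f(\varnothing)$. This yields
\[
\Exp[f(S_T\cup\OPT)] \le \Exp[f(S_T\cup\{o_m\})] + \Exp[f(S^*)] - f(\varnothing) + E'.
\]
Next we bound $f(S_T\cup\{o_m\})$ by $f(S^*)$ plus a small error. If $o_m\in S_T$, then $f(S_T\cup\{o_m\}) = f(S_T)$. Otherwise, we need $S_T\cup\{o_m\}\in\cS$. By the definition of $T$ we have $c(S_T)\le c(\OPT\setminus\{o_m\})$, since either $T=r$ with $c(S_r)\le c(\OPT\setminus\{o_m\})$, or $c(S_T)+c(u_{T+1}) > c(\OPT\setminus\{o_m\})$ is the first violation and the costs before it are bounded. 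Hence $c(S_T\cup\{o_m\})\le c(\OPT)\le B$, so $S_T\cup\{o_m\}$ is a feasible single-element extension of a partial solution and lies in $\cS$. By \Cref{lemma:good_event}(ii), $f(S_T\cup\{o_m\}) \le f(S^*) + E$. Dropping $-f(\varnothing)\le 0$ then gives $\Exp[f(S_T\cup\OPT)] \le 2\Exp[f(S^*)] + E'$.

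\textbf{Lower bound.} Apply \Cref{lemma:stochastic_bound} to $g(X)=f(X\cup\OPT)$, which is non-negative and submodular, with $Y=S_T\setminus\OPT$. Each element $u$ lies in $S_T$ only if it was selected and then accepted by the independent fair coin in Line~\ref{alg:sample_random_inclusion}. Under the fixed-noise coupling, the selection event for $u$ is determined before that coin is flipped, so $\Pr[u\in S_T]\le 1/2$. The lemma therefore gives $\Exp[f(S_T\cup\OPT)]\ge \tfrac12 f(\OPT)$.

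\textbf{Combining.} The two bounds give $\Exp[f(S^*)]\ge \tfrac14 f(\OPT) - E'/2$, which is the claimed inequality. The main point needing care is the probability-$\le 1/2$ claim. Elements are selected adaptively, and the stopping index $T$ depends on the coins, but each element is considered at most once, since Line~\ref{alg:sample_line2} excludes all previously chosen $u_j$. So the inclusion coin for $u$ is fresh at that moment, and membership in $S_T\subseteq S_\ell$ requires that coin to succeed. A second point needing care is the cost bound $c(S_T)\le c(\OPT\setminus\{o_m\})$ when $T<r$: it must be checked by induction that every $S_i$ with $i\le T$ respects this bound, using that $T$ is the first index at which adding $u_{i+1}$ would exceed it.
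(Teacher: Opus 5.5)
Your proposal is correct and follows the paper's proof essentially step for step. You combine \Cref{lemma:gain_bound_restate} with \Cref{lemma:expectation_equality}, absorb $f(S_T\cup\{o_m\})$ into $f(S^*)$ via the final Report Noisy Max step, and lower-bound $\mathbb{E}[f(S_T\cup\OPT)]\ge\frac12 f(\OPT)$ via \Cref{lemma:stochastic_bound}, using that each element enters $S_T$ with probability at most $1/2$; your explicit check that $c(S_T)\le c(\OPT\setminus\{o_m\})$, so that $S_T\cup\{o_m\}\in\cS$ (together with the separate case $o_m\in S_T$), fills in a step the paper only asserts.
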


\begin{proof}
The algorithm outputs a feasible set since all partial solutions considered are feasible, and only feasible sets are included in $\mathcal{S}$ as defined in Line~\ref{alg:sample_defineS}. By \Cref{lemma:small_noise_event}, we have $\Pr[\mathcal{E}] \ge 1-\beta$. For the remainder of the proof, we condition on $\mathcal{E}$, and all expectations are implicitly conditioned on this event.
Combining \Cref{lemma:gain_bound_restate} and \Cref{lemma:expectation_equality}, we have:
\begin{align*}
 \mathbb{E}[f(S_T \cup \OPT)] 
 &\le \mathbb{E}[f(S_T \cup \{o_m\})] + \mathbb{E}[f(S_T) - f(\varnothing) ] + \mathbb{E}[f(S^*) -f(S_T)] + O\Big(\frac{k\Delta_f}{\eps_0}\log \frac{n}{\beta}\Big) \\
 &\le 2 \mathbb{E}[f(S^*)] + O\Big(\frac{k\Delta_f}{\eps_0}\log \frac{n}{\beta}\Big).
\end{align*}
The first inequality is by \Cref{lemma:gain_bound_restate} and \Cref{lemma:expectation_equality}.
The second inequality follows from non-negativity and the fact $S_T \cup \{o_m\}$ is among the candidates provided to \RNMT in Line~\ref{alg:sample_lastEM}. Therefore, conditioned on $\cE$, we have $f(S^*) \ge f(S_T \cup \{o_m\}) - O(\frac{\Delta_f}{\eps} \log \frac{n}{\beta})$.

Each element of $\mathcal{N}$ is included in $S_T$ with probability at most $1/2$ due to the Bernoulli sampling in Line~\ref{alg:sample_random_inclusion}. Hence, applying \Cref{lemma:stochastic_bound} to the submodular function $g(S) \eqdef f(S \cup \OPT)$, we get
\[
 \mathbb{E}[f(S_T \cup \OPT)] = \mathbb{E}[g(S_T)] \ge \frac{1}{2} g(\varnothing) = \frac{1}{2} f(\OPT).
\]
Combining these inequalities, we obtain $\frac{1}{2} f(\OPT) \le 2 \mathbb{E}[f(S^*)] + O(\frac{k\Delta_f}{\eps_0}\log \frac{n}{\beta})$. Rearranging completes the proof.
\end{proof}
\section{Additional Empirical Results}
\label{appendix:experiments}

\begin{figure*}
    \centering
    \begin{subfigure}[t]{0.28\linewidth}
        \centering
        \includegraphics[width=\linewidth, height=0.12\textheight, keepaspectratio]{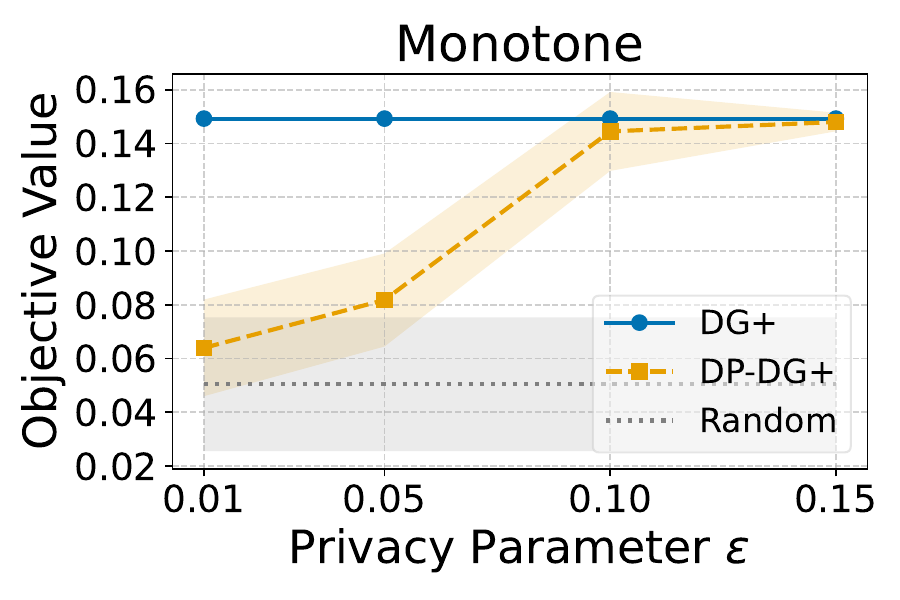}
        \caption{Objective value for varying $\eps$.}
        \label{fig:mon_val_for_eps}
    \end{subfigure}
    \hspace{8mm} 
    \begin{subfigure}[t]{0.28\linewidth}
        \centering
        \includegraphics[width=\linewidth, height=0.12\textheight, keepaspectratio]{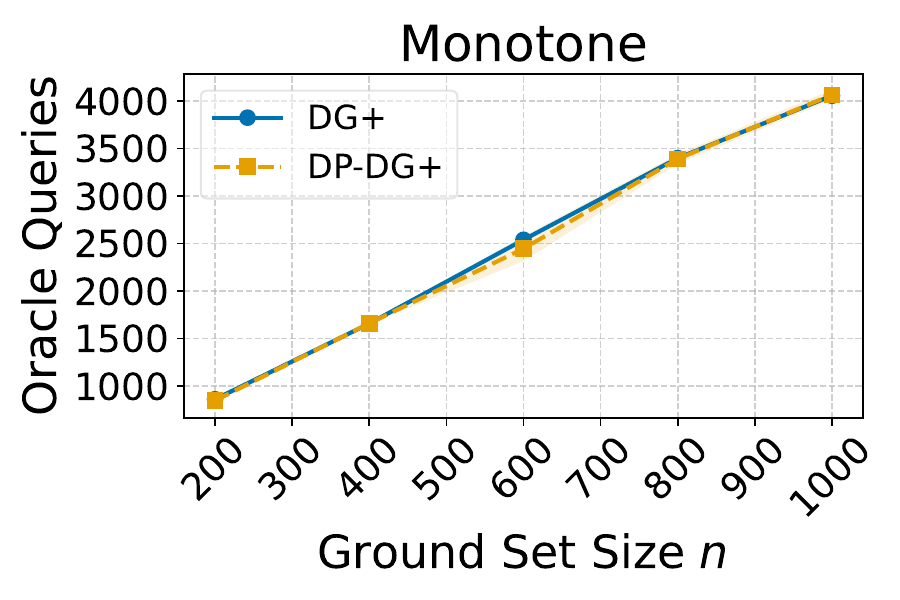}
        \caption{Number of queries for varying $n$.}
        \label{fig:mon_queries_for_n}
    \end{subfigure}
    \caption{Performance of \DPDGP at an increased scale.}
    \label{fig:scale}
\end{figure*}

\begin{figure*}
    \centering
    \begin{subfigure}[t]{0.24\linewidth}
        \centering
        \includegraphics[width=\linewidth,height=0.12\textheight, keepaspectratio]{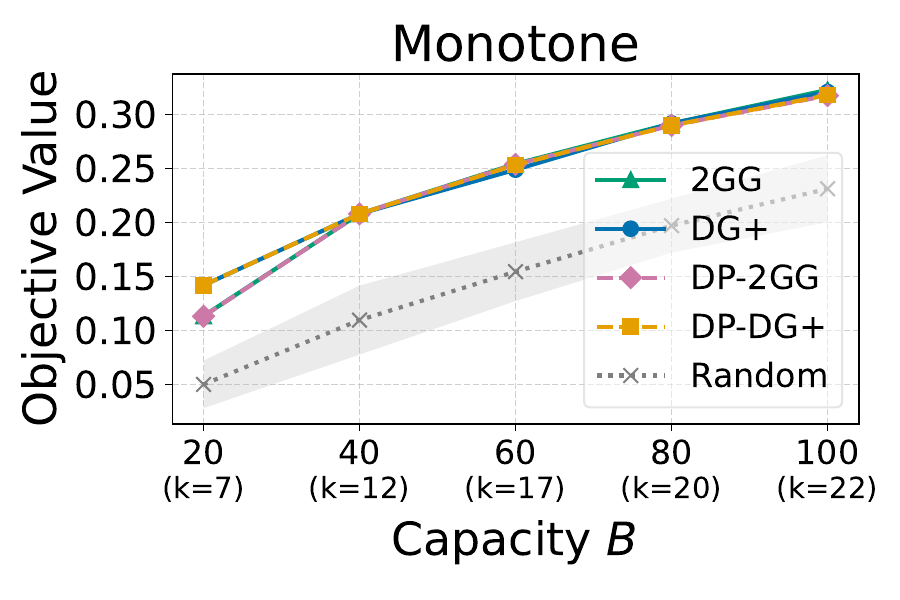}
        \caption{$n=30$.}
        \label{fig:mon_val_for_B}
    \end{subfigure}
    \hspace{8mm} 
    \begin{subfigure}[t]{0.24\linewidth}
        \centering
        \includegraphics[width=\linewidth, height=0.12\textheight, keepaspectratio]{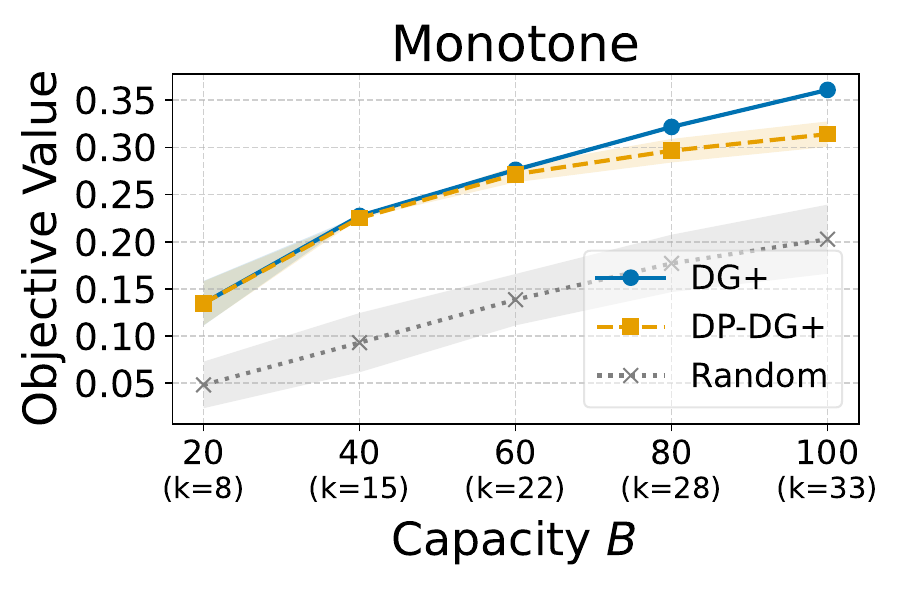}
        \caption{$n=100$.}
        \label{fig:fast_mon_val_for_cap}
    \end{subfigure}
    \hspace{8mm} 
    \begin{subfigure}[t]{0.24\linewidth}
        \centering
        \includegraphics[width=\linewidth, height=0.12\textheight, keepaspectratio]{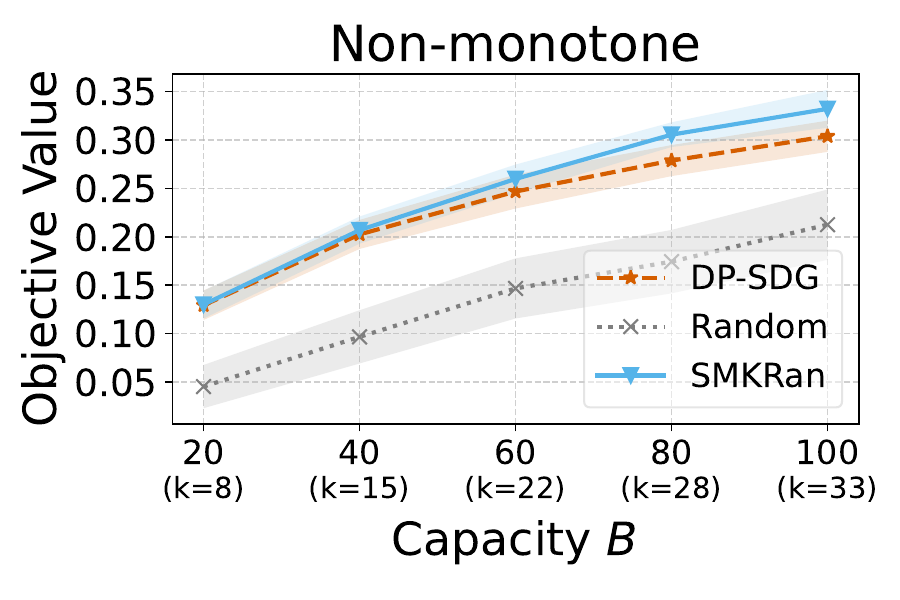}
        \caption{$n=100$.}
        \label{fig:nonmon_val_for_cap}
    \end{subfigure}
    \caption{Objective value for varying budget $B$. The induced maximal cardinality $k$ of a feasible solution is shown below the $x$-axis.}
    \label{fig:B_impact}
\end{figure*}

In this section, we present additional empirical results. We evaluate \DPDGP with a default setting of $n=100, B=20$ ($k=8$), and scale to larger values of $n$, mirroring our analysis of \DPSDG in \Cref{sec:experiments}. We also study the impact of the knapsack capacity parameter $B$ on the utility of all algorithms.

\paratitle{Higher-Scale Evaluations for \DPDGP}
The results in \Cref{fig:mon_val_for_eps} demonstrate that \DPDGP maintains strong utility relative to the non-private \DGP baseline, achieving nearly identical performance even under a strict setting of $\eps=0.1$, where it remains within 3\% of the baseline, while \Random is 67\% lower. Furthermore, \Cref{fig:mon_queries_for_n} highlights the algorithm's scalability: the number of queries grows linearly with $n$, consistent with the theoretical bound.

\paratitle{Impact of $B$ on Utility}
\Cref{fig:B_impact} shows the objective value obtained under different settings for varying values of $B$. As predicted by our theoretical error bounds, increasing $B$ leads to a decrease in the relative utility of the DP algorithms, with this degradation becoming most pronounced for the larger ground set size $n=100$. Nevertheless, our proposed algorithms consistently maintain competitive utility across all regimes.
As shown in \Cref{fig:fast_mon_val_for_cap} for \DPDGP and in \Cref{fig:nonmon_val_for_cap} for \DPSDG, at $n=100$ and $B=80$ (corresponding to $k=28$), both algorithms remain within 8\% of their respective non-private counterparts, \DGP and \SmkRan, while \Random performs substantially worse, achieving objective values at least 42\% below the non-private baselines. Furthermore, \Cref{fig:mon_val_for_B} shows that, in the monotone experiments with $n=30$ (corresponding to $k=22$), \DPTwoGG and \DPDGP remain within 1\% of the non-private \TwoGG.

\end{document}